\title{Contextual Stochastic Block Model: Sharp Thresholds and Contiguity}
\author{Chen Lu\thanks{Department of Mathematics, Massachusetts Institute of Technology, Cambridge MA 02139, U.S.A.} \and 
  Subhabrata Sen\thanks{Department of Statistics, Harvard
    University, Cambridge, MA 02138, U.S.A.} }
\theoremstyle{plain}\newtheorem{lemma}{\textbf{Lemma}}\newtheorem{theorem}{\textbf{Theorem}}\newtheorem{definition}{\textbf{Definition}}\newtheorem{proposition}{\textbf{Proposition}}
\theoremstyle{definition}
\theoremstyle{definition}\newtheorem{remark}{\textbf{Remark}}
\renewenvironment{proof}[1][\proofname] {
	\par\pushQED{\qed}\normalfont
	\topsep6\p@\@plus6\p@\relax
	\trivlist\item[\hskip\labelsep\bfseries#1\@addpunct{:}]
 	\ignorespaces
} {
	\popQED\endtrivlist\@endpefalse
}
\begin{document}

\maketitle



\begin{abstract}
We study community detection in the \emph{contextual stochastic block model} \cite{yan2020covariate,deshpande2018contextual}. In \cite{deshpande2018contextual}, the second author studied this problem in the setting of sparse graphs with high-dimensional node-covariates. Using the non-rigorous \emph{cavity method} from statistical physics \cite{mezard2009information}, they conjectured the sharp limits for community detection in this setting. Further, the information theoretic threshold was verified, assuming that the average degree of the observed graph is large. It is expected that the conjecture holds as soon as the average degree exceeds one, so that the graph has a giant component. We establish this conjecture, and characterize the sharp threshold for detection and weak recovery.
\end{abstract} 

\section{Introduction}
The community detection problem arises routinely in diverse applications, and has received significant attention recently in statistics and machine learning. In the simplest version of this problem, given access to a graph, one seeks to cluster the vertices into interpretable \emph{communities} or groups of vertices, which are believed to reflect latent similarities among the nodes. From a theoretical standpoint, this problem has been extensively analyzed under specific generative assumptions on the observed graph; the most popular generative model in this context is the \emph{stochastic block model} (SBM) \cite{holland1983stochastic}. Inspired by intriguing conjectures arising from the statistical physics community \cite{krzakala2013spectral}, community detection under the stochastic block model has been studied extensively. As a consequence, the precise information theoretic limits for recovering the underlying communities have been derived, and optimal algorithms have been identified in this setting (for a survey of these recent breakthroughs, see \cite{abbe2017community}). 

In reality, the practitioner often has access to additional information in the form of node covariates, which complements the graph information. Statistically, it is natural to believe that clustering performance can be significantly improved by combining this covariate information with the graph structure. However, establishing this improvement in a formal context, and deriving procedures which combine the two are not straightforward. In \cite{yan2020covariate}, the authors formalize this question, and introduce a simple model for community detection with node covariates. We use the same framework in this paper. 

We observe a graph $G = (V, E)$ on $n$ vertices drawn from the sparse Stochastic Block Model $G(n, \frac{a}{n}, \frac{b}{n})$. Formally, we sample a community assignment vector $\sigma \in \{ \pm 1\}^n$ uniformly; given $\sigma$, we draw edges with probability
\begin{align}
\mathbb{P}[\{i, j\} \in E] = \begin{cases}
\frac{a}{n} \quad \textrm{if}\,\,\,\, \sigma_i = \sigma_j, \\
\frac{b}{n} \quad \textrm{o.w.} 
\end{cases} \nonumber 
\end{align}
Let $\mathbf{A} = (A_{ij}) \in \mathbb{R}^{n \times n}$ denote the adjacency matrix of the graph $G$. We define the average degree parameter $d = \frac{a+b}{2}$ and parametrize $a = d + \lambda \sqrt{d}$ and $b=d - \lambda \sqrt{d}$.

Further, at each node of the graph $G$, we observe a $p$-dimensional vector of covariates $\{B_i : 1\leq i \leq n\}$. The covariates are also correlated with the underlying community assignment. Specifically, we observe 
\begin{align}
B_i = \sqrt{\frac{\mu}{n}} \sigma_i u + Z_i, \nonumber  
\end{align}
where $u \sim \mathcal{N}(0, I_p)$ is a latent gaussian vector, and $Z_i \sim \mathcal{N}(0,I_p)$. We construct the matrix $\mathbf{B} = [B_1, \cdots, B_n ]^\top \in \mathbb{R}^{n \times p}$. In \cite{yan2020covariate}, the authors introduce a semidefinite programming (SDP) based algorithm for community detection in the above setting, which combines the graph with the node covariates. However, their results do not identify the information theoretic limits of community detection in this context, and do not identify the optimal community detection algorithm in this setting.

Note that when one has access to either the graph information or the covariate information, the information theoretic threshold is well known. Under the parametrization of the SBM, where only $\mb A$ is given, detection of the underlying community structure, as well as non-trivial recovery, are possible if and only if $\lambda >1$. On the other hand, the case when only the covariate information $\mb B$ corresponds to a Gaussian mixture clustering problem. Under a high-dimensional asymptotic regime $\frac{n}{p} \to \gamma \in (0,\infty)$, random matrix considerations based on the BBP phase transition \cite{baik2005phase}, and contiguity arguments based on the second moment method \cite{perry2018optimality} imply that non-trivial detection and recovery are possible if and only if $\mu^2 > \gamma$.

In \cite{deshpande2018contextual}, the second author and coauthors studied detection and recovery under this model in a high-dimensional asymptotic regime $\frac{n}{p} \to \gamma \in (0, \infty)$ \cite{deshpande2018contextual}, and 
conjectured the sharp information theoretic limits in this problem: the underlying community structure can be detected if and only if $\lambda^2 + \frac{\mu^2}{\gamma} >1$. In particular, this suggests that upon combining the graph information appropriately with the covariates, it is statistically possible to improve upon the optimal performance based on any single information source. The conjecture was derived using the  non-rigorous \emph{cavity method} from statistical physics \cite{mezard2009information}, and rigorously established under an additional high-degree asymptotic $d \to \infty$ (after $n \to \infty$). However, numerical experiments in \cite{deshpande2018contextual} suggest that this high-degree asymptotic is unnecessary, and that the results are true as soon as $d >1$. In this paper, we formally establish this conjecture. 
Throughout this article, we will work under the same high-dimensional asymptotic $\frac n p \to \gamma \in (0,\infty)$.

Our main contributions in this article are as follows. 
\begin{itemize}
\item[(i)] We first examine the detection problem \eqref{eq:detection}, and establish the sharp threshold for mutual contiguity (see Theorem \ref{thm:it_threshold}). The testing lower bound is derived using a traditional second moment argument. The upper bound is significantly more challenging---we devise a test statistic by counting certain appropriate self-avoiding walks in the graph. 

\item[(ii)] Next, we turn to weak recovery, and establish that the threshold for weak recovery coincides with that for detection in this context. To establish the positive half of this result, we crucially utilize the self-avoiding walk based estimation idea introduced in \cite{massoulie2014community,hopkins2017bayesian}; however, the presence of the graph data with the covariates makes this application significantly more challenging. We devise estimates for the pairwise correlations of the memberships by counting appropriate walk based statistics, and then perform weak recovery by a subsequent projection and rounding step (we refer to Section \ref{sec:ubd} for further details). This is one of the main technical contributions of this article. 

\item[(iii)] We then turn to the contiguity regime, and derive a precise expansion of the likelihood ratio in the contiguity regime in terms of appropriate cycle statistics. In turn, this identifies the precise statistics which distinguish the null from the alternative. 
\end{itemize}

\subsection{Main Results} 
Consider the hypothesis testing problem 
\begin{align}
\mathrm{H}_0: (\lambda, \mu) = (0,0) \quad \textrm{vs.} \quad \mathrm{H}_1: (\lambda, \mu) \neq (0,0). \label{eq:detection}
\end{align}
We will denote the joint distributions of the data $(\mb A, \mb B)$ by $\mathbb{P}_{\lambda, \mu}$, and keep the dependence on $n,p$ implicit throughout. Further, we will assume that we are above the threshold for emergence of the giant component, i.e. $d>1$ \cite{bollobas2007phase,janson2011random}.

\begin{theorem}[Detection]
\label{thm:it_threshold} 
If $\lambda^2 + \frac{\mu^2}{\gamma} < 1$, $\mathbb{P}_{\lambda,\mu}$ is contiguous to $\mathbb{P}_{0,0}$. On the other hand, if $\lambda^2 + \frac{\mu^2}{\gamma}>1$, the sequences $\mathbb{P}_{0,0}$ and $\mathbb{P}_{\lambda, \mu}$ are mutually asymptotically singular. 
\end{theorem}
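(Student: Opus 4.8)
The plan is to treat the two regimes separately, since they require genuinely different machinery. For the contiguity half ($\lambda^2 + \mu^2/\gamma < 1$), I would use the classical second moment method: it suffices to show that $\mathbb{E}_{0,0}\!\left[ L_n^2 \right]$ stays bounded as $n \to \infty$, where $L_n = \mathrm{d}\mathbb{P}_{\lambda,\mu}/\mathrm{d}\mathbb{P}_{0,0}$ is the likelihood ratio. Writing $L_n$ as an average over two independent community vectors $\sigma, \tau$ and two independent latent vectors $u, u'$, the second moment factorizes (conditionally) into a graph contribution and a covariate contribution driven by the overlap $\rho = \frac{1}{n}\langle \sigma, \tau\rangle$ and the inner product $\langle u, u'\rangle$. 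The graph part produces the usual $\exp(\sum_k \frac{\lambda^{2k}}{2k}\rho^{2k}\,(\text{moment corrections}))$-type series familiar from the sparse SBM analysis (Mossel--Neeman--Sly, Banks et al.), while the Gaussian covariate part contributes a factor behaving like $(1 - \mu^2 \rho^2/\gamma)^{-p/2}$-type terms after integrating out $u, u'$; since $p/n \to 1/\gamma$, one gets an effective free energy of the form $\exp\!\big(\frac{n}{2}(\lambda^2 + \mu^2/\gamma)\rho^2 + o(n\rho^2)\big)$ on the relevant scale $\rho = O(n^{-1/2})$. A standard Gaussian/CLT control on the overlap $\rho$ under the null then shows the expectation is finite precisely when $\lambda^2 + \mu^2/\gamma < 1$; one must be slightly careful about the large-deviation tail of $\rho$, handled either by truncation of the small set of "bad" overlaps or by the refined arguments of Banks--Moore--Neeman--Netrapalli / Perry--Wein--Bandeira (for the Gaussian factor).

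For the singularity half ($\lambda^2 + \mu^2/\gamma > 1$), the idea is to exhibit a statistic whose value separates $\mathbb{P}_{0,0}$ from $\mathbb{P}_{\lambda,\mu}$ with probability tending to $1$. Following the program used for the sparse SBM by Massoulié and Mossel--Neeman--Sly, I would count suitably normalized self-avoiding walks in $G$, but now decorated with the covariate data: for a self-avoiding walk $i_0, i_1, \ldots, i_\ell$, one forms a product combining the centered adjacency entries with a bilinear form $\langle B_{i_0}, B_{i_\ell}\rangle$ (or an analogous covariate-weighted endpoint term), so that under $\mathbb{P}_{\lambda,\mu}$ the expected contribution of a length-$\ell$ walk scales like $(\lambda^2 + \mu^2/\gamma)^{\ell} d^{\ell}$-type growth while under the null it is mean zero. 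Choosing $\ell = \ell(n) \to \infty$ slowly (e.g. $\ell \asymp \log\log n$ or a small constant times $\log n$), the signal term dominates the fluctuations once $\lambda^2 + \mu^2/\gamma > 1$, because self-avoiding walk counts suppress the combinatorial explosion that ordinary walks suffer in the sparse regime. A first/second moment computation on this statistic — showing its mean grows and its variance is of smaller order, under both hypotheses — yields the asymptotic singularity.

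Concretely, the steps in order are: (1) set up the likelihood ratio and its second moment as a two-replica Gaussian-and-community average; (2) integrate out the latent vectors $u, u'$ to reduce to a scalar free-energy in the overlap $\rho$; (3) analyze this free energy on the scale $\rho = O(n^{-1/2})$ and control the overlap tails, concluding boundedness iff $\lambda^2 + \mu^2/\gamma < 1$; (4) for the supercritical side, define the covariate-weighted self-avoiding walk statistic with an appropriate length $\ell(n)$; (5) compute its mean under $\mathbb{P}_{\lambda,\mu}$ using the local tree-like structure of the sparse graph together with the independence of the covariate noise, and its mean under $\mathbb{P}_{0,0}$; (6) bound the variances under both measures, using that self-avoiding walks through a fixed edge are rare enough that cross-terms are controlled; (7) conclude that a threshold test on this statistic has vanishing total error.

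I expect the main obstacle to be step (6) — the variance control of the covariate-decorated self-avoiding walk statistic under the \emph{alternative}. The difficulty is that the graph and covariate randomness are coupled through $\sigma$, so the statistic is not a sum of independent contributions; overlapping walks share edges \emph{and} share covariate vectors at their endpoints, producing correlation terms of both combinatorial (shared-subpath) and Gaussian (shared-$B_i$, shared-$u$) origin that must be simultaneously tamed. Handling the second-moment bookkeeping when these two sources of dependence interact — especially near the threshold, where the signal-to-noise margin is thin — is where the real technical work lies, and is presumably the "main technical contribution" the authors allude to in the introduction.
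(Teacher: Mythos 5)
Your treatment of the contiguity half ($\lambda^2 + \mu^2/\gamma < 1$) is essentially the paper's argument: a truncated second moment computation over two independent replicas $(\sigma,u)$ and $(\tau,v)$, with the graph contribution controlled via the overlap $\rho = \frac{1}{n}\langle\sigma,\tau\rangle$ and the covariate contribution controlled via $\langle u,v\rangle$, together with a truncation on $\|u\|$ to tame the Gaussian tails. That part is fine.

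The singularity half has a genuine gap, and it is precisely the obstacle that the paper identifies as its main contribution. You propose a self-avoiding walk in $G$ alone, decorated by a single covariate bilinear form $\langle B_{i_0}, B_{i_\ell}\rangle$ at the endpoints, and claim the signal-to-noise ratio grows like $(\lambda^2 + \mu^2/\gamma)^\ell$. It does not. A length-$\ell$ walk using only $A$-edges has SNR growing like $\lambda^{2\ell}$; multiplying by one endpoint covariate term contributes only a \emph{constant} factor (of order $\mu^2/\gamma$) to the SNR, not a multiplicative-per-step one, because the covariate term does not compound along the walk. Consequently your statistic would detect only when $\lambda^2>1$, and it would fail in the regime $\lambda^2 \leq 1$ but $\lambda^2 + \mu^2/\gamma > 1$, which is precisely where the problem is interesting. (The paper points this out explicitly when discussing why walks on the graph alone are sub-optimal.) The correct construction uses \emph{mixed} cycles on the factor graph, containing $k-l$ $A$-edges interleaved with $l$ $B$-wedges (a $B$-wedge being a pair of consecutive $B$-edges through a covariate node), with the ratio tuned to $l/k = \frac{\mu^2/\gamma}{\lambda^2 + \mu^2/\gamma}$. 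Each $A$-edge contributes a factor $\lambda^2$ and each $B$-wedge a factor $\mu^2/\gamma$ to the squared SNR, and the combinatorial freedom of $\binom{k}{l}$ ways to interleave them is essential: by Stirling,
\begin{align*}
\binom{k}{l}(\lambda^2)^{k-l}\Big(\frac{\mu^2}{\gamma}\Big)^{l} \approx (\lambda^2 + \mu^2/\gamma)^{k}
\end{align*}
at the chosen $l$, which diverges as $k\to\infty$ (at rate $o(\sqrt{\log n})$) exactly when $\lambda^2 + \mu^2/\gamma > 1$. The Gaussian limit of these mixed cycle statistics (Proposition~2 of the paper, proved by a method-of-moments / Wick argument) then gives a sequence of consistent tests. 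Without interleaving the covariate information \emph{throughout} the cycle and exploiting the binomial counting, you cannot cross the optimal threshold.
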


\noindent
Our next result addresses the threshold for weak recovery. We recall the relevant notion of weak recovery in this context. 
\begin{definition}[Weak Recovery] An estimator $\hat{\sigma}:= \hat{\sigma}(\mb A, \mb B) \in [-1,1]^n$ achieves weak recovery if there exists $\varepsilon>0$, independent of $n$, such that 
\begin{align}
\frac{1}{n}\mathbb{E}_{\lambda, \mu}[|\langle \sigma, \hat{\sigma} \rangle|] \geq \varepsilon \nonumber 
\end{align}
as $n \to \infty$. We say that weak recovery is possible if there exists an estimator $\hat{\sigma}$ which achieves weak recovery. 

\end{definition}
\begin{theorem}[Weak Recovery]
\label{thm:recovery} 
If $\lambda^2 + \frac{\mu^2}{\gamma} < 1$, then weak recovery is impossible. On the other hand, weak recovery is possible when $\lambda^2 + \frac{\mu^2}{\gamma} > 1$. 
\end{theorem}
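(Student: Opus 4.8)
We treat the two inequalities separately.

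\textbf{Impossibility when $\lambda^{2}+\mu^{2}/\gamma<1$.} This half should follow from the contiguity assertion of Theorem~\ref{thm:it_threshold} via a standard Bayes-optimality reduction. Fix any estimator $\hat\sigma=\hat\sigma(\mathbf A,\mathbf B)\in[-1,1]^{n}$, write $D=(\mathbf A,\mathbf B)$, and set $\tau:=\mathbb E_{\lambda,\mu}[\sigma\sigma^{\top}\mid D]$, which is a function of the data alone. Using $\|\hat\sigma\hat\sigma^{\top}\|_{F}=\|\hat\sigma\|_{2}^{2}\le n$, the matrix Cauchy--Schwarz inequality, and Jensen's inequality, one gets (conditionally on $D$, using that $\hat\sigma$ is $D$-measurable)
\[
\|\tau\|_{F}\;\ge\;\frac{\langle\tau,\hat\sigma\hat\sigma^{\top}\rangle}{\|\hat\sigma\|_{2}^{2}}\;=\;\frac{\mathbb E_{\lambda,\mu}[\langle\sigma,\hat\sigma\rangle^{2}\mid D]}{\|\hat\sigma\|_{2}^{2}}\;\ge\;\frac{1}{n}\,\big(\mathbb E_{\lambda,\mu}[\,|\langle\sigma,\hat\sigma\rangle|\mid D]\big)^{2},
\]
so that $T_{n}:=n^{-2}\|\tau\|_{F}^{2}\in[0,1]$ satisfies $\mathbb E_{\lambda,\mu}[T_{n}]\ge n^{-4}\big(\mathbb E_{\lambda,\mu}|\langle\sigma,\hat\sigma\rangle|\big)^{4}$. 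If $\hat\sigma$ achieved weak recovery, the right side would be at least $\varepsilon^{4}$, forcing $\mathbb P_{\lambda,\mu}(T_{n}\ge\varepsilon^{4}/2)\ge\varepsilon^{4}/2$. But under $\mathbb P_{0,0}$ the data is independent of $\sigma$, hence $\tau=I_{n}$ and $T_{n}=1/n\to0$, so $\mathbb P_{0,0}(T_{n}\ge\varepsilon^{4}/2)\to0$; since $T_{n}$ is a data functional, this contradicts the contiguity of $\mathbb P_{\lambda,\mu}$ to $\mathbb P_{0,0}$. Hence no estimator achieves weak recovery in this regime.

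\textbf{Possibility when $\lambda^{2}+\mu^{2}/\gamma>1$.} Here the plan adapts the self-avoiding walk (SAW) methodology of \cite{massoulie2014community,hopkins2017bayesian} to the presence of covariates, in three stages: (a) construct, for each pair $i\ne j$, a scalar statistic $\widehat X_{ij}$ estimating the sign-invariant correlation $\sigma_{i}\sigma_{j}$; (b) project, i.e. take a leading eigenvector $\widehat v\in\mathbb R^{n}$ of the symmetric matrix $\widehat X=(\widehat X_{ij})$; (c) round, taking $\widehat\sigma=\sqrt n\,\widehat v$ truncated coordinatewise to $[-1,1]$ (or $\mathrm{sign}(\widehat v)$), and verify $n^{-1}\mathbb E_{\lambda,\mu}[|\langle\sigma,\widehat\sigma\rangle|]\ge\varepsilon$. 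The genuinely new regime is $\lambda^{2}<1$ and $\mu^{2}/\gamma<1$ with $\lambda^{2}+\mu^{2}/\gamma>1$: outside it, either the graph-only SAW estimator of \cite{massoulie2014community} or covariate PCA (governed by the BBP transition, cf. \cite{baik2005phase,perry2018optimality}) already suffices, so the content is the combination of two individually subcritical signals.

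The core of stage (a) is a family of \emph{decorated} self-avoiding walk statistics: over walks $i=v_{0},v_{1},\dots,v_{\ell}=j$ of length $\ell\asymp\log n$, one accumulates the product of centered edge weights $\prod_{t}(A_{v_{t-1}v_{t}}-d/n)$ together with covariate factors attached at the walk vertices, the two kinds of factors being combined so that the conditional mean given $(\sigma,u)$ scales like $\rho^{\ell}\,n^{-1}\sigma_{i}\sigma_{j}$ for a parameter whose effective per-step signal strength $\rho^{2}$ equals $\lambda^{2}+\mu^{2}/\gamma$. Since the latent direction $u\in\mathbb R^{p}$ is not observed, covariate factors of the form $\langle B_{v},u\rangle$ must be replaced by data-driven proxies (products $\langle B_{v},B_{v'}\rangle$ across walk vertices, or a preliminary PCA estimate of $u$ from $\mathbf B$), and reconciling these proxies with the sharp constant is a delicate point. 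A moment computation should then give $\mathbb E_{\lambda,\mu}[\widehat X\mid\sigma,u]=\kappa\,n^{-1}\sigma\sigma^{\top}+(\text{lower order})$ with $\kappa$ bounded away from $0$, together with the crucial estimate $\|\widehat X-\mathbb E_{\lambda,\mu}[\widehat X\mid\sigma,u]\|_{\mathrm{op}}=o(\kappa)$ precisely when $\lambda^{2}+\mu^{2}/\gamma>1$; restricting to self-avoiding walks, rather than using powers of $\mathbf A$, is what controls this operator norm in the sparse regime $d>1$, where $\mathbf A$ itself is dominated by high-degree vertices. Granting this, $|\langle\widehat v,\sigma/\sqrt n\rangle|$ is bounded below with high probability, and stage (c) is a routine truncation argument, the crude bound $|\langle\sigma,\widehat\sigma\rangle|\le n$ absorbing the low-probability bad event.

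\textbf{Main obstacle.} The heart of the matter, and by far the hardest part, is stages (a)--(b): showing that the decorated SAW statistics merge the graph and the covariate evidence \emph{additively} into the effective signal-to-noise ratio $\lambda^{2}+\mu^{2}/\gamma$, and establishing the matching operator-norm control of the fluctuations. This demands a combinatorial and moment analysis of self-avoiding walks carrying correlated Gaussian decorations — fusing the SBM walk-counting machinery of \cite{massoulie2014community} with Gaussian concentration and with the high-dimensional behaviour of $\mathbf B$ under $n/p\to\gamma$ — carried out carefully enough that no slack is lost in the threshold constant. By comparison, the impossibility direction and the final rounding step are comparatively standard.
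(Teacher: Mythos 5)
Your high-level strategy (reduce weak recovery to a testing statement and invoke contiguity from Theorem~\ref{thm:it_threshold}) is the right shape, and the chain of inequalities leading to $\mathbb{E}_{\lambda,\mu}[T_n]\ge n^{-4}\big(\mathbb{E}_{\lambda,\mu}|\langle\sigma,\hat\sigma\rangle|\big)^4$ is fine. But the crucial step ``under $\mathbb{P}_{0,0}$ the data is independent of $\sigma$, hence $\tau=I_n$'' is wrong. You defined $\tau(D)=\mathbb{E}_{\lambda,\mu}[\sigma\sigma^\top\mid D]$, i.e.\ the posterior mean under the \emph{alternative}. This is a fixed deterministic function of $D$, and evaluating it at data $D$ drawn from the \emph{null} does not produce $I_n$; generically $\tau_{ij}(D)\ne 0$ for $i\ne j$ even when $D\sim\mathbb{P}_{0,0}$. (What equals $I_n$ is $\mathbb{E}_{0,0}[\sigma\sigma^\top\mid D]$, but that is a different object.) To close the argument you would need to show that $\|\tau(D)\|_F^2=n+o(n^2)$ with high probability under $\mathbb{P}_{0,0}$, which is precisely the content that requires work: it amounts to showing the alternative posterior is asymptotically uninformative, and does not follow from contiguity alone without a quantitative second-moment/mutual-information input. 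The paper does exactly this via the scheme of \cite{banerjee2018contiguity}: it proves TV-convergence of conditionals $\mathbb{P}_{\lambda,\mu}(\cdot\mid\sigma_{1:r})\to\mathbb{P}_{\lambda,\mu}(\cdot\mid\tau_{1:r})$ for fixed $r$, using the truncated second-moment bound from Section~\ref{sec:detection}, and then invokes Proposition~6.2 and Theorem~2.2 of \cite{banerjee2018contiguity}. Your sketch is missing this entire step.

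\textbf{Possibility direction.} Here your outline matches the paper's spirit closely: decorated self-avoiding walks on the factor graph mixing $A$-edges and covariate pieces, with the covariate contribution entering through data products $B_{v,j}B_{v',j}$ (``$B$-wedges'') rather than the unobserved $\langle B_v,u\rangle$, followed by project-and-round in the manner of \cite{hopkins2017bayesian}. You correctly identify the new regime $\lambda^2<1$, $\mu^2/\gamma<1$, $\lambda^2+\mu^2/\gamma>1$ as the content. Two substantive differences. First, the paper fixes the crucial combinatorial parameter explicitly: the walks have $k-l$ $A$-edges and $l$ $B$-wedges with $l/k=(\mu^2/\gamma)/(\lambda^2+\mu^2/\gamma)$, and this precise ratio is what makes the binomial $\binom{k}{l}^2$ dominate the sum over intersection patterns; you gesture at an ``effective per-step SNR $\lambda^2+\mu^2/\gamma$'' but do not pin down the design. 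Second, you ask for an operator-norm bound $\|\widehat X-\mathbb{E}[\widehat X\mid\sigma,u]\|_{\mathrm{op}}=o(\kappa)$ and then take a leading eigenvector. The paper deliberately avoids this: it only verifies the weaker Frobenius-type criterion $\mathbb{E}_{\lambda,\mu}[\langle P,\sigma\sigma^\top\rangle]\ge\delta\,\mathbb{E}_{\lambda,\mu}[\|P\|_F^2]^{1/2}$ (which is a pure moment/walk-counting computation), and then invokes \cite[Theorem~1]{hopkins2017bayesian} plus the Gaussian-rounding SDP of \cite[Lemma~3.5]{hopkins2017bayesian}. The Frobenius criterion is considerably easier to establish in the sparse regime than operator-norm control of a decorated SAW matrix; by insisting on the spectral route you are choosing the harder path, and your sketch gives no indication of how to prove the operator-norm bound. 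Also, the alternative you float --- a preliminary PCA estimate of $u$ --- would not work in the regime $\mu^2/\gamma<1$ where PCA carries no signal (BBP), which is precisely the regime of interest.
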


Finally, we turn to the contiguity phase $\lambda^2 + \frac{\mu^2}{\gamma} < 1$, and derive an expansion for the likelihood ratio. We denote the likelihood ratio by 
\begin{align}
L_n = \frac{\mathrm{d} \mathbb{P}_{\lambda, \mu} }{\mathrm{d} \mathbb{P}_{0,0}}. \nonumber 
\end{align}
 In this regime, $\mathbb{P}_{\lambda, \mu}$ and $\mathbb{P}_{0,0}$ cannot be distinguished with asymptotically negligible Type I and Type II errors. Statistically, the natural problem of interest concerns optimal detection, which is achieved by the likelihood ratio test (LRT). We will derive asymptotic expansions of the likelihood ratio under the null and the alternative. As a consequence, we will obtain the optimal power of the LRT. Along the way, we will obtain a family of statistics which ``determine" the likelihood ratio. This will suggest computationally feasible statistics which attain optimal detection performance in this contiguous regime. Similar expansions for the likelihood ratio have been derived for \emph{pure} spiked gaussian problems (the model $B$ and its symmetric analogue) in the recent literature \cite{banerjee2018lr,alaoui2018detection,el2020fundamental,johnstone2020testing,onatski2013asymptotic,onatski2014signal}. Our approach in this regard will be motivated by the techniques introduced in \cite{banerjee2018lr}. However, we note that in contrast to this literature, we have \emph{both} a sparse random graph component, and a gaussian model. This necessitates crucial technical modifications---we emphasize the main differences in Section \ref{sec:lrtexpansion}.

To this end, let us first introduce a class of cycle statistics. We will denote $\omega$ as a cycle on the factor graph corresponding to the posterior distribution \cite{mezard2009information}, shown in Figure~\ref{fig:factor}. Specifically, the factor graph is denoted as $G_{\msf F} = (V_{\msf F},E_{\msf F})$. The vertices are split into two groups $V_{\msf F}= V_1 \cup V_2$, where $V_1$ denotes vertices from the adjacency matrix $\mb A$, with $|V_1|=n$, and $V_2$ denotes vertices from the covariate matrix $\mb B$, so $|V_2|=p$. In Figure~\ref{fig:factor}, vertices in $V_1$ are shown by dots, and those in $V_2$ are shown by squares. The edges also split into two groups $E_{\msf F}= E_1 \cup E_2$, where $E_1= \{ \{i_1, i_2\}: i_1, i_2 \in V_1\}$, and $E_2 = \{ \{ i,j\}: i \in V_1, j \in V_2\}$. We will refer to edges in $E_1$ as $A$ edges, and edges in $E_2$ as $B$ edges. Because $B$ edges must appear in consecutive pairs in a cycle, we refer to such pairs of $B$ edges as $B$ wedges. The graph of a cycle $\omega$ is denoted as $G_\omega = (V_\omega, E_\omega)$. We use $k$ to denote the length of the cycle, and $l$ to denote the number of $B$ wedges in the cycle. 
For a cycle $\omega$, we denote $G_{\omega,A} = (V_{\omega,A}, E_{\omega, A})$ the subgraph of the $A$ edges, and $G_{\omega,B} = (V_{\omega,B}, E_{\omega, B})$ is the subgraph of $B$ edges.

\begin{figure}[ht]
\begin{center}
\includegraphics[scale=0.8, angle=-90]{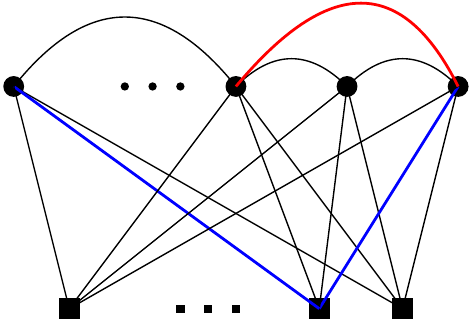}
\end{center}
\caption{The factor model corresponding to the posterior distribution. The dots represent the nodes in the adjacency graph $\mb A$, while the squares represent the variables corresponding to the Gaussian covariates $\mb B$. An $A$ edge is highlighted in red, while a $B$-wedge is indicated in blue.}
\label{fig:factor}
\end{figure}

\begin{definition}[Cycles]
For $k \geq l$, we define 
\begin{align*}
Y_{n, k, l} = \frac{1}{n^l}\sum_{\omega} \prod_{e_1\in E_{\omega,A}} A_{e_1} \prod_{e_2\in E_{\omega,B}} B_{e_2}
\end{align*}
where the sum is over length $k$ paths with $l$ $B$-wedges, and the product is over the components of each path. 
\end{definition} 
Our first result establishes the limiting distribution of these cycle statistics under the null and alternative.  To this end, it will be convenient to introduce some notation for the relevant index set in this problem. Let us define $\mathcal{J} \subset \mathbb{Z} \times \mathbb{Z}$ such that 
\begin{align}
\mathcal{J} = \{(k,0): k \geq 3\} \cup \{(k,l): k\geq l \geq 1\}. \nonumber 
\end{align}
\begin{proposition}
\label{prop:cycle_distribution} 
The collection 
\begin{align}
\{ Y_{n,k,l} : (k,l) \in \mathcal{J}\}  \nonumber 
\end{align}
converges in distribution under both $\mathrm{H}_0$ and $\mathrm{H_1}$. Further, the limiting random variables are independent under both $\mathrm{H}_0$ and $\mathrm{H}_1$. Finally, 
\begin{itemize}
\item[(i)] Under $\mathrm{H}_0$, $1\leq l \leq k$, 
\begin{align}
 Y_{n, k, 0}\stackrel{d}{\rightarrow} \mathrm{Poi}\Big(\frac{1}{k} d^k \Big), \,\,\, \frac{Y_{n, k, l} - p\mathbf{1}_{k=l=1} }{\sqrt{\frac{1}{2k}\binom{k}{l} \frac{d^{k-l}}{\gamma^l}}} \stackrel{d}{ \rightarrow} \mathcal{N}(0,1 ). \nonumber 
\end{align}
\item[(ii)] Under $H_1$, for any $1\leq l \leq k$,
\begin{align}
Y_{n, k, 0} \stackrel{d}{\rightarrow} \mathrm{Poi} \Big(\frac{1}{k}(d^k + (\lambda\sqrt{d})^k) \Big), \,\,\,\,\,
\frac{Y_{n,k,l} - p\mathbf{1}_{k=l=1} - \frac{1}{2k}\binom{k}{l}\frac{(\lambda\sqrt{d})^{k-l}\mu^l}{\gamma^l} }{\sqrt{\frac{1}{2k}\binom{k}{l} \frac{d^{k-l}}{\gamma^l}}} \stackrel{d}{\rightarrow} \mathcal{N}(0, 1). \nonumber 
\end{align} 
\end{itemize}
Finally, if $l \geq 1$, the distributional limits continue to hold for $k_n, l_n$ growing in $n$, as long as $l_n \leq k_n = o(\sqrt{\log n})$. 
\end{proposition}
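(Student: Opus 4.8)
The natural route is the \emph{method of moments}, carried out jointly for all the coordinates $Y_{n,k,l}$, $(k,l)\in\mathcal{J}$. A product of independent Poisson and Gaussian laws is determined by its factorial moments in the Poisson coordinates together with its ordinary moments in the Gaussian coordinates, so it suffices to fix a finite $\mathcal{J}_0\subset\mathcal{J}$ and show that the corresponding joint moments --- falling-factorial in the $l=0$ slots, ordinary in the $l\geq1$ slots --- of the centered-and-scaled statistics converge to those of the claimed limit; here the $(k,l)=(1,1)$ coordinate requires subtracting the divergent mean $p$ before scaling. Expanding each power turns such a moment into a sum over tuples of length-$k$ cycles in the factor graph, and the object to evaluate is the expectation over $(\mathbf{A},\mathbf{B})$ of the associated edge-monomial $\prod_{e}A_e\prod_{e}B_e$, weighted by a fixed power of $1/n$ coming from the normalizations of the $Y_{n,k,l}$.

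Under $\mathrm{H}_0$ the entries of $\mathbf{A}$ are independent $\mathrm{Bernoulli}(d/n)$ and those of $\mathbf{B}$ are i.i.d.\ $\mathcal{N}(0,1)$, so the expectation of each edge-monomial factorizes over edges: an $A$-edge of any multiplicity contributes a factor $d/n$ (since $A_e^m=A_e$), and a $B$-edge of multiplicity $m$ contributes $\mathbb{E}[\mathcal{N}(0,1)^m]$, which vanishes for odd $m$ and equals $(m-1)!!$ for even $m$. This pins down the dominant tuples. For the pure-$A$ statistics $Y_{n,k,0}$ the leading contribution comes from families of vertex-disjoint cycles, each contributing $\sim n^{k}(d/n)^{k}$ up to the symmetry factor $1/k$, which yields Poisson rate $d^k/k$; factorization of the falling-factorial moments over disjoint cycles then upgrades this to the stated independent-Poisson limit. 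For $Y_{n,k,l}$ with $l\geq1$ every $B$-edge must occur an even number of times, so the leading tuples are exactly those in which the $B$-edges are glued in pairs over an otherwise tree-like skeleton; the normalization $1/n^{l}$ is calibrated so that these contribute $\Theta(1)$, the $\binom{k}{l}$ choices of which steps carry wedges together with the count of legal pair-gluings producing the variance $\tfrac1{2k}\binom{k}{l}\tfrac{d^{k-l}}{\gamma^l}$ after $p/n\to1/\gamma$, while the higher pairings reassemble into Gaussian moments by Wick's formula. Every other tuple must be shown to be $o(1)$: a coincidence among $A$-edges costs a factor $O(1/n)$ per excess edge not compensated by a saved vertex, and a $B$-edge of odd multiplicity costs everything, so a standard excess-edge (Euler-characteristic) estimate closes the bound.

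Under $\mathrm{H}_1$ I would condition on $(\sigma,u)$; given these, $\mathbf{A}$ and $\mathbf{B}$ are independent, with $A_{ij}\sim\mathrm{Bernoulli}\big((d+\lambda\sqrt d\,\sigma_i\sigma_j)/n\big)$ and $B_{ij}=\sqrt{\mu/n}\,\sigma_i u_j+Z_{ij}$, $Z_{ij}\sim\mathcal{N}(0,1)$, so the edge-factorization still applies conditionally, and one then integrates over $\sigma$ uniform on $\{\pm1\}^n$ and $u\sim\mathcal{N}(0,I_p)$. Expanding $\prod(d+\lambda\sqrt d\,\sigma_i\sigma_j)$ over the cycle's $A$-edges and the signal parts $\sqrt{\mu/n}\,\sigma_i u_j$ over its $B$-edges, the $\sigma$-average annihilates every term in which some vertex carries an odd total $\sigma$-exponent; on a vertex-disjoint cycle this leaves, in the pure-$A$ case, only the ``all-$d$'' and ``all-$\lambda\sqrt d$'' terms, giving rate $\tfrac1k(d^k+(\lambda\sqrt d)^k)$, and more generally only the configurations in which the signal pieces close consistently around the cycle, which after $\tfrac1p\|u\|^2\to1$ and $p/n\to1/\gamma$ reproduce exactly the mean shift $\tfrac1{2k}\binom{k}{l}\tfrac{(\lambda\sqrt d)^{k-l}\mu^l}{\gamma^l}$; the $\Theta(1)$ fluctuations still come from the $Z$-noise, so the variance and the Gaussianity are unchanged. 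The error analysis mirrors the null case, with the one addition that moments of $u$ must be controlled uniformly, which is handled by restricting to the event that $\|u\|^2/p$ lies within $n^{-1/3}$ of $1$, of probability $1-o(1)$, on which the relevant sums concentrate.

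Two points finish the argument. Independence of the limits across $(k,l)$ is automatic: in a joint moment a tuple mixing cycles of different shapes can be made nearly disjoint-or-paired only by further lowering its vertex count, so the surviving tuples are precisely those that split into disjoint-or-paired sub-tuples, one per shape, and the joint moment factorizes. For the claim with $l_n\le k_n=o(\sqrt{\log n})$ one reruns the error bounds keeping the dependence on $k,l$ explicit: the negligible tuples still carry a factor $1/n$, now multiplied by a combinatorial weight bounded by $\exp(O(k\log k))=n^{o(1)}$ under $k=o(\sqrt{\log n})$, hence remain $o(1)$, while the leading-order constants are exactly the centering and scaling in the statement. The main obstacle is precisely this error control in the \emph{mixed} regime: unlike a pure cycle count, where one tracks only vertices, or a pure Gaussian Wick expansion, where one tracks only pairings, here a bad tuple can economize by coinciding cheaply on $A$-edges while being forced into a rigid even-pairing on $B$-edges and, once $(\sigma,u)$ are integrated out, a consistent sign/signal pattern; getting the right excess-edge inequality for this mixed skeleton, uniformly in $k$ and $l$, is the crux.
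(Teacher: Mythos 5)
Your proposal is correct and follows essentially the same route as the paper: method of moments on the cycle statistics, edge-factorization under the null, a signal-plus-noise decomposition under the alternative with the sign-parity argument giving the mean shift, Wick's formula for the $l\geq 1$ coordinates, and an excess-edge estimate to kill non-dominant tuples, with the mixed-skeleton error control correctly identified as the crux. The only differences are cosmetic: the paper factors out the Poisson block via an explicit decoupling lemma and then cites Mossel--Neeman--Sly rather than rederiving it, uses ordinary moments with Carleman's condition rather than falling-factorial moments, and under $\mathrm{H}_1$ works through a $T_1+T_2+T_3$ (noise/signal/cross) decomposition with Slutsky in place of your truncation of $\|u\|^2/p$.
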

 Let $\{\upsilon_{k,l,j}: j \in \{0,1\}, (k,l) \in \mathcal{J} \}$ be collection of random variables with the desired limiting distributions. Specifically, $\upsilon_{k,0,0} \sim \mathrm{Poi}\Big( \frac{1}{k} d^k \Big)$ and for $l\geq 1$, $\upsilon_{k,l,0} \sim \mathcal{N}(\mu_{k,l,0}, \sigma_{k,l}^2)$. Similarly, $\upsilon_{k,0,1} \sim \mathrm{Poi}\Big( \frac{1}{k} (d^k + (\lambda \sqrt{d})^k) \Big)$, and for $l \geq 1$, $\upsilon_{k,l,1} \sim \mathcal{N}(\mu_{k,l,1}, \sigma_{k,l}^2)$. Here, the means and variances $\mu_{k,l,0}$ and $\sigma^2_{k,l}$ are as specified in Proposition~\ref{prop:cycle_distribution} under the null, while $\mu_{k,l,1}$ denotes the mean under the alternative.

\begin{theorem}
\label{thm:contiguous_phase} 
Consider $(\lambda, \mu)$ satisfying $\lambda^2 + \frac{\mu^2}{\gamma} < 1$. Then the following hold:
\begin{enumerate}
    \item $\mathbb{P}_{\lambda, \mu}$ and $\mathbb{P}_{0,0}$ are asymptotically mutually contiguous.
    \item Under $H_0$, we have that 
    \begin{align*}
        L_n \overset{d}{\rightarrow} \exp \Big( \sum_{k = 1}^\infty \Big[ \log (1 - \lambda^kd^{k/2}) \upsilon_{k,0,0}-\frac{1}{k}(\lambda\sqrt{d})^k + \sum_{1\leq l \leq k} \frac{\mu_{k,l,0}\upsilon_{k,l,0} - \frac{1}{2}\mu_{k,l,0}^2}{\sigma_{k,l}^2} \Big]\Big).
    \end{align*}
\end{enumerate}
\end{theorem}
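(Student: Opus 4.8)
The plan is to establish the distributional limit for $L_n$ via a \emph{hybrid small-subgraph-conditioning} argument that simultaneously accommodates the Poisson cycle statistics $Y_{n,k,0}$ arising from the sparse graph and the Gaussian statistics $Y_{n,k,l}$, $l\ge 1$, arising from the $B$-wedges. Part~1 then comes along for free: $\mathbb{P}_{\lambda,\mu}\triangleleft\mathbb{P}_{0,0}$ follows from $L^2$-boundedness of $L_n$ together with uniform integrability, and $\mathbb{P}_{0,0}\triangleleft\mathbb{P}_{\lambda,\mu}$ follows from the limit $W$ being a.s.\ strictly positive with $\mathbb{E}[W]=1$ (the standard consequence of the small-subgraph-conditioning conclusion); alternatively, the first half of Part~1 is already contained in Theorem~\ref{thm:it_threshold}.

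First I would write $L_n = \mathbb{E}_{\sigma,u}\bigl[\prod_{i<j}(\cdots)\,\prod_i(\cdots)\bigr]$ as a posterior average over the planted pair $(\sigma,u)$, and then compute the second moment $\mathbb{E}_{0,0}[L_n^2]$ as an average over two independent planted pairs $(\sigma,u)$ and $(\sigma',u')$. The graph contribution collapses to a function of the overlap $\rho = n^{-1}\langle\sigma,\sigma'\rangle$, and the covariate contribution to a function of $\rho$ together with $p^{-1}\langle u,u'\rangle$ and $p^{-1}\|u\|^2,\,p^{-1}\|u'\|^2$. Using $\sqrt n\,\rho\Rightarrow\mathcal N(0,1)$, concentration of the Gaussian overlaps, and the condition $\lambda^2+\mu^2/\gamma<1$ to control the rare large-overlap regime, $\mathbb{E}_{0,0}[L_n^2]$ converges to an explicit finite constant. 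I would then verify, using the independence of the limiting cycle variables from Proposition~\ref{prop:cycle_distribution}, that this constant equals $\mathbb{E}[W^2]$: the Poisson factors contribute $\mathbb{E}\bigl[(1-\lambda^kd^{k/2})^{2\upsilon_{k,0,0}}\bigr]\,e^{-\frac2k(\lambda\sqrt d)^k}$, evaluated via the Poisson moment generating function, and the Gaussian factors contribute $\mathbb{E}\bigl[e^{2\mu_{k,l,0}\upsilon_{k,l,0}/\sigma_{k,l}^2}\bigr]\,e^{-\mu_{k,l,0}^2/\sigma_{k,l}^2}$, evaluated via the Gaussian moment generating function; convergence of the resulting double series over $\mathcal J$ is precisely the regime $\lambda^2+\mu^2/\gamma<1$.

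Next, for each $K$ let $W_n^{(K)}$ denote the truncation of the target exponential to cycle lengths $k\le K$; this is a genuine continuous function of $\{Y_{n,k,l}:(k,l)\in\mathcal J,\ k\le K\}$, so Proposition~\ref{prop:cycle_distribution} gives $W_n^{(K)}\overset d\to W^{(K)}$, and $W^{(K)}\to W$ in $L^2$ as $K\to\infty$. The crux is to show $\mathbb{E}_{0,0}\bigl[(L_n-W_n^{(K)})^2\bigr]\to e_K$ with $e_K\to0$ as $K\to\infty$. This requires running the second-moment computation also for the cross term $\mathbb{E}_{0,0}[L_nW_n^{(K)}]$ and for $\mathbb{E}_{0,0}[(W_n^{(K)})^2]$, showing both converge to the corresponding moments of $(W,W^{(K)})$ so that the defect matches $\mathbb{E}[(W-W^{(K)})^2]$. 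A standard $3\varepsilon$ argument — convergence in $L^2$ implies convergence in distribution, $W_n^{(K)}$ is asymptotically close to $L_n$ in $L^2$ uniformly in $n$ for $K$ large, and $W^{(K)}\to W$ — then yields $L_n\overset d\to W$.

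The main obstacle is the second-moment analysis and its truncated counterparts: unlike the pure spiked-Gaussian setting, the Poisson cycle contributions do not vanish in the limit, so one cannot simply Taylor-expand and discard them, and unlike the pure SBM setting there is genuine interaction between the graph overlap $\rho$ and the covariate overlaps inside a single exponential. Disentangling these — i.e.\ showing that the short-cycle statistics and the short-wedge statistics asymptotically decouple, and that the remaining contribution is only a Gaussian-integral constant of the type $1/\sqrt{1-\lambda^2-\mu^2/\gamma}$ — is the technical heart, and is where the combinatorial control of self-avoiding walks and the joint limit of Proposition~\ref{prop:cycle_distribution} (including the $k_n,l_n=o(\sqrt{\log n})$ regime, needed to make the truncation error uniform in $n$) are essential.
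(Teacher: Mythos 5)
Your proposal is correct and follows essentially the same route as the paper: the paper states a hybrid Poisson/Gaussian small-subgraph-conditioning proposition (with conditions being absolute continuity, the joint convergence of cycle statistics from Proposition~\ref{prop:cycle_distribution}, and the $\limsup$ second-moment bound) and then defers the proof to Proposition~1 of Banerjee--Ma with Gaussian terms swapped for Poisson terms, which is precisely the truncation/$L^2$-matching/$3\varepsilon$ scheme you spell out. The only superfluous element in your sketch is the appeal to the growing $k_n,l_n = o(\sqrt{\log n})$ regime of Proposition~\ref{prop:cycle_distribution}: the small-subgraph conditioning argument only ever needs finitely many cycle statistics at fixed $(k,l)$ for each truncation level $K$, with uniformity over $n$ supplied by the second-moment bound and the matching $\mathbb{E}[W^2]$ computation, not by growing-length cycle asymptotics (those are used for the detection test in Theorem~\ref{thm:it_threshold}, not here).
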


\subsection{Related Literature}
Covariate assisted clustering has been extensively studied across statistics, machine learning and computer science using diverse perspectives. The literature on this topic is quite diffuse, and its impossible to provide an exhaustive survey of this area. However, for the convenience of the reader, we survey the main methodological approaches, and discuss in-depth the main results relevant for our work. 

From a methodological standpoint, generative model based approaches are very natural for this problem, and they have been extensively explored in this setting \cite{newman2016structure,hoff2003random,zanghi2010clustering,yang2009combining,kim2012latent,leskovec2012learning,xu2012model,hoang2014joint,yang2013community}. On the other hand, model free approaches, which cluster the nodes by optimizing a suitable loss function have also been popular \cite{binkiewicz2017covariate,zhang2016community,gibert2012graph,zhou2009graph,neville2003clustering, gunnemann2013spectral, dang2012community,cheng2011clustering, silva2012mining, smith2016partitioning}. Bayesian methods  \cite{chang2010hierarchical,balasubramanyan2011block}  provide another natural methodological approach for this problem. We refer the interested reader to \cite{bothorel2015clustering} for a survey of other approaches.


In a separate direction \cite{aicher2014learning,lelarge2015reconstruction} study a version of community detection with informative edges. \cite{lelarge2015reconstruction} establishes only one side of the conjectured information theoretic threshold in this setting. 


More recently, \cite{binkiewicz2017covariate,zhang2016community} analyze specific heuristic clustering algorithms under the block model formalism. However, consistency guarantees in this setting are derived for dense graphs, and under strong separability assumptions on the connection probabilities. Further, they do not identify the precise information theoretic thresholds for recovery. As a consequence, the precise information theoretic gains obtained from the additional covariates remains unclear. 

Our work is closest in spirit to \cite{yan2020covariate}. They study an SDP based framework for community recovery. However, in contrast to our setting, they study low-dimensional covariates. They formally establish that clustering accuracy is improved upon combining the node information with the graph. In contrast, we study high-dimensional covariates, and establish that the information theoretic threshold is shifted in this setting. 

Somewhat related to our inquiry, \cite{kanade2016global,mossel2016local} study local algorithms for semisupervised clustering, i.e. when the true labels are given for a small fraction of nodes. While these algorithms are local, our analysis is global, and we capture the information theoretic limits in this problem.

\subsection{Technical Contributions} 

The main contributions of this paper are the algorithm for weak recovery above the threshold $\lambda^2 + \frac{\mu^2}{\gamma} > 1$, and the asymptotic distribution of the cycle statistics in Proposition~\ref{prop:cycle_distribution}. Contiguity below the threshold follows from standard second moment arguments, and expansions of the likelihood ratio in the contiguity phase are based on a version of Janson's small-subgraph conditioning method. The small-subgraph conditioning argument follows the same template as \cite{banerjee2018lr}, once the distribution of the cycle statistics Proposition~\ref{prop:cycle_distribution} is given. In this section, we elaborate on our main technical contributions.


In prior work on community detection (see e.g. \cite{mossel2015reconstruction,banerjee2018lr,hopkins2017bayesian,massoulie2014community}), weak recovery has often been performed using statistics based on appropriate self-avoiding walks. In particular, the wide-applicability of this idea was emphasized in  \cite{hopkins2017bayesian}. The general meta-algorithm introduced in \cite{hopkins2017bayesian} has the following steps: (i) estimate the second moment $\sigma \sigma^{\mathrm{T}}$ using self-avoiding walk statistics, and (ii) use a generic projection and rounding procedure to derive the membership estimator. We follow the same strategy--- the main technical challenge is to construct the appropriate estimator for the second moment. One might naturally suspect that an appropriate self-avoiding walk based statistic might be relevant in our setting. However, we have two data sources, encoded by the graph adjacency matrix and the matrix of gaussian covariates. As a consequence, the relevant cycle statistics are not obvious in this setting. A first idea is to consider the factor graph for this problem (see Figure~\ref{fig:factor}), and use self-avoiding walks of a fixed length. However, some thought reveals that this is sub-optimal; to see this, consider a setting where $\lambda^2 + \frac{\mu^2}{\gamma} >1$, but $\lambda^2 <1$. In this case, the block model alone does not contain any information regarding the underlying community assignment, so the walks based purely on the sparse graph will only contain noise.


Instead, we construct paths with edges from both the adjacency and covariate matrices, whose ratio of edges is given by $\lambda^2:\frac{\mu^2}{\gamma}$. Conceptually, each path is constructed so that the contributions from the adjacency and covariate matrices reflect the amount of information from each source respectively. This approach can be potentially useful for other reconstruction problems which have multiple information sources.

On the other hand, the distribution of the likelihood ratio in the contiguity regime $\lambda^2 + \frac{\mu^2}{\gamma} < 1$  is determined by all cycles of finite length. For cycles with edges coming solely from the adjacency or covariance matrix, the distribution limits are Poisson \cite{mossel2015reconstruction} and Gaussian \cite{banerjee2018lr} respectively. In our setting, we also encounter mixed-cycles, comprising edges coming from both sources. Using a method of moments approach, we establish that the limiting distribution of these mixed cycles are all independent Gaussian random variables in the limit. Finally, we characterize the means and variances of these cycles under the null and alternative. We expect the general techniques to be useful in other settings as well. 


\noindent
\textbf{Organization:}
The rest of the paper is structured as follows. We establish Theorem \ref{thm:it_threshold} in Section \ref{sec:detection}. In Section \ref{sec:lrtexpansion} we establish the asymptotic expansion of the likelihood ratio in the contiguity regime, and establish Theorem \ref{thm:contiguous_phase}. We establish Theorem \ref{thm:recovery} in Sections \ref{sec:ubd_below} and \ref{sec:ubd}. 

\noindent
\textbf{Acknowledgments:} 
SS thanks Yash Deshpande, Andrea Montanari and Elchanan Mossel for helpful conversations.

\section{Detection}
\label{sec:detection} 

\begin{proof}[Proof of Theorem \ref{thm:it_threshold}]
We start with a proof of the information theoretic lower bound. Fix $\lambda, \mu$ such that $\lambda^2 + \frac{\mu^2}{\gamma} <1$. We will use the traditional second moment approach. First, consider a complete data problem, where one observes the latent vectors $\sigma \in \{ \pm 1\}^n$ and $u \in \mathbb{R}^p$. We denote the corresponding distribution as $\tilde{\mathbb{P}}_{\lambda, \mu}$. Thus we have, 
\begin{align}
L:=\frac{\mathrm{d} \mathbb{P}_{\lambda, \mu} }{\mathrm{d} \mathbb{P}_{0,0}} (\mathbf{A}, \mb B)  =  \frac{ \mathbb{E}_{\sigma, u}\Big[ \tilde{ \mathbb{P}}_{\lambda, \mu} (\mathbf{A}, \mb B| \sigma, u)  \Big] }{ \mathbb{P}_{0,0} (\mathbf{A}, \mb B) }, \nonumber 
\end{align}
where $\mathbb{E}_{\sigma, u}[\cdot]$ calculates the expectation with respect to the priors on $\sigma$ and $u$. Consider the event $\mathcal{S} = \{ u : \|u\|_2 \leq  (1+\delta)\sqrt{p}\}$, where $\delta>0$ will be chosen appropriately. Define the truncated likelihood
\begin{align}
\tilde{L} = \frac{ \mathbb{E}_{\sigma, u}\Big[ \tilde{ \mathbb{P}}_{\lambda, \mu} (\mathbf{A}, \mb B |  \sigma,u)  \mathbf{1}(u \in \mathcal{S})\Big] }{ \mathbb{P}_{0,0} (\mathbf{A}, \mb B) }. \nonumber 
\end{align}
Finally, we claim that if $\lambda^2 + \frac{\mu^2}{\gamma} < 1$, then there exists a universal constant $C>0$ such that $\mathbb{E}_{0,0}[\tilde{L}^2]\leq C < \infty$.. This establishes the desired contiguity property. Indeed, let $\{ \mathcal{A}_n : n \geq 1\}$ be any sequence of events with  $\mathbb{P}_{0,0}(\mathcal{A}_n) \to 0$ as $n \to \infty$. We have, 
\begin{align}
\mathbb{P}_{\lambda, \mu} (\mathcal{A}_n) = \mathbb{E}_{0, 0}\Big[ L \mathbf{1}_{\mathcal{A}_n} \Big] = \mathbb{E}_{0,0}[ \tilde{L} \mathbf{1}_{\mathcal{A}_n} ] + \mathbb{E}_{0,0}[ (L - \tilde{L}) \mathbf{1}_{\mathcal{A}_n}]. \nonumber 
\end{align}
Note that 
\begin{align}
 \mathbb{E}_{0,0}[ (L - \tilde{L}) \mathbf{1}_{\mathcal{A}_n}]  \leq \mathbb{E}_{0,0}[(L - \tilde{L})] \leq \mathbb{P}_{\sigma,u} (u \notin \mathcal{S}) \to 0 \nonumber 
\end{align}
as $n \to \infty$. Given this claim, by Cauchy-Schwarz inequality, we have, 
\begin{align}
\mathbb{P}_{\lambda, \mu}(\mathcal{A}_n) \leq \sqrt{\mathbb{E}_{0,0}[\tilde{L}^2] \,\mathbb{P}_{0,0}(\mathcal{A}_n) } + o(1) \to 0 \nonumber 
\end{align}
as $n \to \infty$. 

It remains to establish that $\mathbb{E}_{0,0}[\tilde{L}^2] \leq C < \infty$ for some universal $C>0$. To this end, by Fubini's theorem, we note that 
\begin{align}
\mathbb{E}_{0,0}[\tilde{L}^2] = \mathbb{E}_{(\sigma, u), (\tau, v)} \Big[ \mathbb{E}_{0,0}\Big[  \frac{\tilde{\mathbb{P}}_{\lambda,\mu} (\mathbf{A}, \mb B | \sigma, u)}{\mathbb{P}_{0,0}(\mathbf{A}, \mb B)}   \frac{\tilde{\mathbb{P}}_{\lambda,\mu} (\mathbf{A}, \mb B | \tau, v)}{\mathbb{P}_{0,0}(\mathbf{A}, \mb B)} \mathbf{1}(u,v \in \mathcal{S}) \Big]\Big]. \label{eq:second_moment}
\end{align}
Now, we have, 
\begin{align}
\frac{\tilde{\mathbb{P}}_{\lambda,\mu} (\mathbf{A}, \mb B | \sigma, u)}{\mathbb{P}_{0,0}(\mathbf{A}, \mb B)}  = \frac{\tilde{\mathbb{P}}_{\lambda,\mu} (\mathbf{A}|\sigma)}{\mathbb{P}_{0,0} (\mathbf{A}) } \cdot \frac{\tilde{\mathbb{P}}_{\lambda,\mu} (\mb B|\sigma, u)}{\mathbb{P}_{0,0} (\mb B) }. \nonumber 
\end{align}
We evaluate each term in turn. First, we have, 
\begin{align}
\frac{\tilde{\mathbb{P}}_{\lambda,\mu} (\mathbf{A}|\sigma)}{\mathbb{P}_{0,0} (\mathbf{A}) } = \prod_{i<j} W_{ij}, 
W_{ij} = W_{ij}(\mathbf{A},\sigma) = \begin{cases}
\frac{2a}{a+b} &\text{ if } \sigma_i = \sigma_j, \ \ A_{ij} = 1 \\
\frac{2b}{a+b} &\text{ if } \sigma_i \not= \sigma_j, \ \ A_{ij} = 1 \\
\frac{n-a}{n-(a+b)/2} &\text{ if } \sigma_i = \sigma_j, \ \ A_{ij} = 0 \\
\frac{n-b}{n-(a+b)/2} &\text{ if } \sigma_i \not= \sigma_j, \ \ A_{ij} = 0
\end{cases} \nonumber 
\end{align}
Second, direct computation yields
\begin{align}
\frac{\tilde{\mathbb{P}}_{\lambda,\mu} (\mb B|\sigma, u)}{\mathbb{P}_{0,0} (\mb B) }
&= \exp \Big( \sqrt{\frac{\mu}{n}} \sum_{i=1}^n \sigma_i Z_i^\top u - \frac{\mu}{2}\lVert u\rVert^2 \Big).  \nonumber
\end{align}
Therefore, 
\begin{align}
& \mathbb{E}_{(\sigma, u), (\tau, v)} \Big[ \mathbb{E}_{0,0}\Big[  \frac{\tilde{\mathbb{P}}_{\lambda,\mu} (\mathbf{A}, \mb B | \sigma, u)}{\mathbb{P}_{0,0}(\mathbf{A}, \mb B)}   \frac{\tilde{\mathbb{P}}_{\lambda,\mu} (\mathbf{A}, \mb B | \tau, v)}{\mathbb{P}_{0,0}(\mathbf{A}, \mb B)} \mathbf{1}(u,v \in \mathcal{S}) \Big]\Big]  \nonumber \\
& =  \mathbb{E}_{(\sigma, u), (\tau, v)} \Big[ \mathbf{1}(u,v \in \mathcal{S})  \mathbb{E}_{0,0}\Big[ \prod_{i<j} W_{ij} V_{ij} \exp\Big( \sqrt{\frac{\mu}{n}} \sum_{i=1}^n  Z_i^\top (\sigma_i u+ \tau_i v) - \frac{\mu}{2} (\lVert u\rVert^2 +  \lVert v\rVert^2) \Big) \Big] \Big], \nonumber
\end{align}
where $V_{ij} = V_{ij}(\mathbf{A}, \tau)$ is defined similarly to $W_{ij}$. Under $\mathbb{P}_{0,0}$, for any $(\sigma, u)$ and $(\tau, v)$, $\mathbf{A}$ and $\mb B$ are independent. Setting $\rho= \rho(\sigma, \tau) = \frac{1}{n} \langle \sigma, \tau \rangle$, we have, 
\begin{align}
\mathbb{E}_{0,0}\Big[\exp\Big( \sqrt{\frac{\mu}{n}} \sum_{i=1}^n  Z_i^\top (\sigma_i u+ \tau_i v) - \frac{\mu}{2} (\lVert u\rVert^2 +  \lVert v\rVert^2) \Big) \Big] = \exp\Big( \frac{\mu}{n} \langle u,v \rangle \langle \sigma, \tau \rangle \Big). \nonumber 
\end{align}
Using \cite[Lemma 5.4]{mossel2015reconstruction}, we have, 
\begin{align}
\mathbb{E}_{0,0}\Big[\prod_{i<j} W_{ij} V_{ij} \Big] = (1 + o(1))e^{-\lambda^2/2-\lambda^4/4}\exp\Big(\frac{\rho^2\lambda^2}{2} (d + n)\Big). \nonumber 
\end{align}
Plugging these back into \eqref{eq:second_moment}, we obtain, 
\begin{align}
\mathbb{E}_{0,0}[\tilde{L}^2] 
&\leq (1 + o(1)) e^{-\lambda^2/2-\lambda^4/4 + \lambda^2 d/2}  \mathbb{E}_{(\sigma, u), (\tau, v)} \Big[ \exp \Big(n\Big( \frac{\rho^2\lambda^2}{2} + \frac{\mu}{\gamma}\rho \frac{\langle u, v \rangle}{p}\Big)\Big) \mathbf{1}( u,v \in \mathcal{S} ) \Big)  \Big].  \nonumber, 
\end{align}
We note that $\langle u, v \rangle = \|u\| \|v \| \langle \frac{u}{\|u\|}, \frac{v}{\|v\|} \rangle$, $\|u\|, \|v\| \leq (1+ \delta)\sqrt{p}$ on the event $\mathcal{S}$, and $ \langle \frac{u}{\|u\|}, \frac{v}{\|v\|} \rangle \stackrel{d}{=} Y$, where $Y$ is the first coordinate of a uniform vector on the unit sphere. Thus we have, 
\begin{align}
\mathbb{E}_{0,0}[\tilde{L}^2] \leq (1 + o(1)) e^{-\lambda^2/2-\lambda^4/4 + \lambda^2 d/2} \mathbb{E}\Big[\exp\Big( n \Big( \frac{\lambda^2}{2} X^2 + (1+\delta)^2\frac{\mu}{\gamma} XY \Big) \Big) \Big], \nonumber
\end{align} 
where $X \stackrel{d}{=} \rho(\sigma, \tau)$ and $Y$ is as described above. It is easy to see that $Y \in [-1,1]$ has density 
\begin{align}
f_Y(y) = \frac{\Gamma(p/2)}{\Gamma((p-1)/2) \Gamma(1/2)} (1-y^2)^{(p-3)/2}\leq C\sqrt{n} (1-y^2)^{p/2},  \nonumber
\end{align}
for some universal constant $C>0$. Further, for $s \in (\frac{2}{n} \mathbb{Z}) \cap [-1,1]$, 
\begin{align}
\mathbb{P}(X =s ) = \frac{1}{2^n} {n \choose n(1+s)/2} \leq \nonumber \frac{C}{\sqrt{n}} \exp(n h(s)), \nonumber 
\end{align}
where $h(s) = - (1+s)/2 \log (1+s) - (1-s)/2 \log (1-s)$. Using $h(s) \leq -s^2/2$, direct computation now yields that 
\begin{align}
\mathbb{E}\Big[\exp\Big( n \Big( \frac{\lambda^2}{2} X^2 + (1+\delta)^2\frac{\mu}{\gamma} XY \Big) \Big) \Big] \leq Cn \int_{\mathbb{R}^2} \exp\Big[ n \Big( \frac{\lambda^2}{2} s^2 + \frac{\mu}{\gamma} (1+\delta)^2 sy - \frac{s^2}{2} - \frac{y^2}{2\gamma} \Big) \Big] ds dy < C' \nonumber
\end{align}
for some universal constant $C'$, provided $\lambda^2 + \frac{\mu^2}{\gamma} (1+\delta)^2 <1$. This completes the proof. 

Next, we turn to the regime $\lambda^2 + \frac{\mu^2}{\gamma} >1$. We will devise a test based on the cycle statistic $Y_{n,k,l}$ with 
\begin{align}
\frac{l}{k} = \frac{\mu^2/\gamma}{\lambda^2 + \mu^2/\gamma}, \nonumber 
\end{align}
and some $k$, to be chosen appropriately. For $k$ growing sufficiently slowly in $n$, Proposition~\ref{prop:cycle_distribution} implies that $Y_{n,k,l}/\sigma_{k,l}$ is approximately $\mathcal{N}(0,1)$ under $\mathrm{H}_0$, while it is distributed as $\mathcal{N}(\tilde{\mu},1)$ under $\mathrm{H}_1$. The non-centrality parameter in this case is 
\begin{align}
\tilde\mu &= \frac{1}{\sqrt{2k}} \Big( \binom k l (\lambda^2)^\alpha\Big( \mu^2/\gamma \Big) ^\beta\Big)^{1/2} 
= \exp\Big(\frac 1 2 k \log(\lambda^2 + \mu^2/\gamma) + o(1) \Big). \nonumber 
\end{align} 
Thus for $k$ growing sufficiently slowly in $n$, we will get a sequence of consistent tests. This establishes the positive side of the detection threshold. 

\end{proof}


\section{Likelihood ratio expansion} 
\label{sec:lrtexpansion} 

Armed with the distributional characterization of Proposition~\ref{prop:cycle_distribution}, we can characterlize the likelihood ratio expansion with a version of the small subgraph conditioning argument relevant to our setting. This argument was originally formalized by Robinson and Wormald \cite{robinson1992almost, robinson1994almost} in the context of random d-regular graphs, and was utilized by \cite{mossel2015reconstruction} in their study of community detection for the stochastic block model. On the other hand, inspired by a version of this argument developed by Janson \cite{janson1995random},  Banerjee and Ma \cite{banerjee2018lr} develop a Gaussian variant of this argument, and apply it to the study of contiguous regimes for Gaussian matrices with low-rank perturbations. Our setting naturally has a sparse graph, and a Gaussian component, and thus requires an extension. We expect this result to be useful in many other settings. 

\begin{proposition}[Small subgraph conditioning method] 

Let $\mathbb{P}_n$ and $\mathbb{Q}_n$ be two sequences of probability measures, and let  $\{ Y_{n, k, l}: (k,l) \in \mathcal{J}\}$ be such that the following conditions hold:
\begin{enumerate}
    \item $\mathbb{Q}_n$ is absolutely continuous w.r.t. $\mathbb{P}_n$.
    \item All finite dimensional distributions of $\{ Y_{n, k, l}: (k,l) \in \mathcal{J}\}$ converge to the null distribution under $\mathbb{P}_n$, and to the alternative distribution under $\mathbb{Q}_n$, as specified in Proposition~\ref{prop:cycle_distribution}.
    \item The likelihood ratio $L_n = \frac{\mathrm{d}\mathbb{Q}_n}{\mathrm{d} \mathbb{P}_n}$ satisfies:
    \begin{align}\label{eq:bdd_2nd_mmt}
        \limsup_{n\rightarrow \infty} \mathbb{E}_{\mathbb{P}_n}{[L_n^2]} \leq \exp\Big\{-\frac{1}{2}\log(1 - (\lambda^2 + \frac{\mu^2}{\gamma})) - \frac{\lambda^2}{2} - \frac{\lambda^4}{4}\Big\} < \infty. \nonumber 
    \end{align}
\end{enumerate}
Then, we have the following consequences:
\begin{enumerate}
    \item $\mathbb{P}_n$ and $\mathbb{Q}_n$ are asymptotically mutually contiguous.
    \item Under $\mathbb{P}_n$, we have that
    \begin{align*}
        L_n \overset{d}{\rightarrow} \exp \Big(\sum_{k = 1}^\infty \Big[\log (1 - \lambda^kd^{k/2}) \upsilon_{k,0,0}-\frac{1}{k}(\lambda\sqrt{d})^k + \sum_{1\leq l \leq k} \frac{\mu_{k,l,0}\upsilon_{k,l,0} - \frac{1}{2}\mu_{k,l,0}^2}{\sigma_{k,l}^2} \Big]\Big).
    \end{align*}
\end{enumerate}
\end{proposition}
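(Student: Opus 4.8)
\emph{Overall strategy.} I would prove both consequences together via a truncation/second‑moment argument --- the mixed‑type analogue of Janson's small subgraph conditioning method \cite{janson1995random,robinson1992almost}, treating the $l=0$ (pure $A$‑)cycles as in \cite{mossel2015reconstruction} (Poisson) and the $l\ge 1$ mixed cycles as in \cite{banerjee2018lr} (Gaussian). The central object is the low‑complexity conditional likelihood: for $K\ge 1$ let $\mathcal{F}_n^{(K)}=\sigma(Y_{n,k,l}:(k,l)\in\mathcal{J},\ k\le K)$ and $L_n^{(K)}=\mathbb{E}_{\mathbb{P}_n}[L_n\mid\mathcal{F}_n^{(K)}]$, which is exactly the likelihood ratio between the marginal laws of $(Y_{n,k,l})_{k\le K}$ under $\mathbb{Q}_n$ and under $\mathbb{P}_n$ (well defined by condition 1). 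On the limit side, let $L_\infty^{(K)}$ be the density of the limiting alternative law of $(\upsilon_{k,l,1})_{k\le K}$ with respect to the limiting null law of $(\upsilon_{k,l,0})_{k\le K}$. By Proposition~\ref{prop:cycle_distribution} both limits factorize over $(k,l)\in\mathcal{J}$ into independent coordinates, so $L_\infty^{(K)}$ is an explicit product of one factor per coordinate: the ratio of the two Poisson point masses for each $l=0$ coordinate, and, for each $l\ge 1$ coordinate, the Gaussian exponential‑family factor $\exp\{\Delta_{k,l}(\upsilon_{k,l,0}-\mu_{k,l,0})/\sigma_{k,l}^2-\tfrac12\Delta_{k,l}^2/\sigma_{k,l}^2\}$ with $\Delta_{k,l}=\mu_{k,l,1}-\mu_{k,l,0}$ (the two limits sharing the variance $\sigma_{k,l}^2$). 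Matching $\log L_\infty^{(K)}$ in the limit $K\to\infty$ with the series displayed in the statement is then a direct computation.

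\emph{Fixed $K$.} For fixed $K$, condition 2 supplies joint convergence of $(Y_{n,k,l})_{k\le K}$ to the null limit under $\mathbb{P}_n$ and to the alternative limit under $\mathbb{Q}_n$; combined with the uniform bound $\mathbb{E}_{\mathbb{P}_n}[(L_n^{(K)})^2]\le\mathbb{E}_{\mathbb{P}_n}[L_n^2]\le C$ coming from condition 3, the standard lemma on likelihood ratios of weakly convergent families (Janson's lemma \cite{janson1995random}; for the Poisson coordinates this reduces to the method of moments) yields $L_n^{(K)}\overset{d}{\to}L_\infty^{(K)}$ under $\mathbb{P}_n$ and $\mathbb{E}_{\mathbb{P}_n}[(L_n^{(K)})^2]\to\mathbb{E}[(L_\infty^{(K)})^2]=:m_K$. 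Here $m_K$ is the explicit finite product, over coordinates $(k,l)\in\mathcal{J}$ with $k\le K$, of the second moments of the individual likelihood‑ratio factors (the usual Poisson and Gaussian formulas), and summing the means and variances recorded in Proposition~\ref{prop:cycle_distribution} --- the inner sum over $l$ reconstitutes $(\lambda^2+\mu^2/\gamma)^k$ by the binomial theorem, after which one recognizes $\sum_k\tfrac1k(\lambda^2+\tfrac{\mu^2}{\gamma})^k$ together with the degenerate length‑$1$ and length‑$2$ corrections --- shows $m_K\uparrow m_\infty:=\exp\{-\tfrac12\log(1-(\lambda^2+\tfrac{\mu^2}{\gamma}))-\tfrac{\lambda^2}{2}-\tfrac{\lambda^4}{4}\}$, precisely the constant in condition 3. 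Finally $(L_\infty^{(K)})_K$ is a nonnegative $L^2$‑bounded martingale with $\mathbb{E}[L_\infty^{(K)}]=1$, so by martingale convergence --- and Kolmogorov's three‑series theorem applied to the independent summands of $\log L_\infty^{(K)}$, whose summability is exactly $m_\infty<\infty$ --- it converges almost surely and in $L^2$ to a strictly positive limit $L_\infty$, identified with the displayed series.

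\emph{Truncation and conclusions.} Since $L_n-L_n^{(K)}$ is $L^2(\mathbb{P}_n)$‑orthogonal to $\mathcal{F}_n^{(K)}$, one has $\mathbb{E}_{\mathbb{P}_n}[(L_n-L_n^{(K)})^2]=\mathbb{E}_{\mathbb{P}_n}[L_n^2]-\mathbb{E}_{\mathbb{P}_n}[(L_n^{(K)})^2]$; taking $\limsup_n$ and using the fixed‑$K$ step together with condition 3 gives $\limsup_n\mathbb{E}_{\mathbb{P}_n}[(L_n-L_n^{(K)})^2]\le m_\infty-m_K\to 0$ as $K\to\infty$, so $L_n^{(K)}\to L_n$ in $L^2(\mathbb{P}_n)$, uniformly in $n$. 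Consequence 2 then follows from the standard three‑term approximation lemma for weak convergence, using $L_n^{(K)}\overset{d}{\to}L_\infty^{(K)}$ (fixed $K$), $L_\infty^{(K)}\to L_\infty$ almost surely, and this uniform $L^2$ control; uniform integrability (from condition 3) upgrades the convergence to $L^1$, whence $\mathbb{E}[L_\infty]=\lim_n\mathbb{E}_{\mathbb{P}_n}[L_n]=1$. For consequence 1: $\mathbb{Q}_n\ll\mathbb{P}_n$ is condition 1; $\mathbb{Q}_n\triangleleft\mathbb{P}_n$ follows from $\mathbb{Q}_n(E_n)=\mathbb{E}_{\mathbb{P}_n}[L_n\mathbf{1}_{E_n}]\le\|L_n^{(K)}\|_{L^2(\mathbb{P}_n)}\sqrt{\mathbb{P}_n(E_n)}+\|L_n-L_n^{(K)}\|_{L^2(\mathbb{P}_n)}$ by sending $n\to\infty$ then $K\to\infty$; and $\mathbb{P}_n\triangleleft\mathbb{Q}_n$ follows from the standard contiguity criterion, since $L_n\overset{d}{\to}L_\infty$ under $\mathbb{P}_n$ with $\mathbb{E}[L_\infty]=1$ and $L_\infty>0$ a.s.

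\emph{Main obstacle.} The crux is the fixed‑$K$ step: upgrading joint \emph{weak} convergence of the finitely many cycle statistics under both measures to convergence of the conditional likelihood ratios $L_n^{(K)}$ and of their second moments. Weak convergence by itself says nothing about likelihood ratios, so one must genuinely use both the mutual absolute continuity of the explicit limiting laws (Proposition~\ref{prop:cycle_distribution}) and the uniform $L^2$ bound of condition 3 --- this is the technical heart of Janson's lemma, and the bookkeeping is heaviest for the degenerate short $A$‑cycles of lengths $1$ and $2$, which as in \cite{mossel2015reconstruction} must be handled separately and are responsible for the $e^{-\lambda^2/2-\lambda^4/4}$ correction in $m_\infty$. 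Everything else --- the Pythagorean tail estimate, the martingale convergence of $L_\infty^{(K)}$, and the contiguity bookkeeping --- is the standard wrapper common to all small subgraph conditioning arguments.
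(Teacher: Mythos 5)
Your proposal is correct and follows essentially the same route the paper intends: the paper omits this proof entirely, deferring to Proposition~1 of \cite{banerjee2018lr} ``with some Gaussian terms swapped out with Poisson terms,'' and your reconstruction --- conditioning on $\mathcal{F}_n^{(K)}$, establishing $L_n^{(K)}\Rightarrow L_\infty^{(K)}$ via Janson's lemma, closing the tail with the Pythagorean identity and condition 3, and passing to the martingale limit $L_\infty$ --- is precisely that argument adapted to the mixed Poisson/Gaussian index set $\mathcal{J}$. The one point you correctly flag as the heart of the matter (that joint weak convergence under both measures plus the uniform $L^2$ bound, not weak convergence alone, is what yields convergence of the conditional likelihood ratios) is exactly where the quoted reference does its work.
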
 
Given Proposition~\ref{prop:cycle_distribution}, the proof of this proposition is identical to the proof of Proposition 1 of \cite{banerjee2018lr}, but with some Gaussian terms swapped out with Poisson terms. Thus we omit the proof.

\subsection{Proof of Proposition~\ref{prop:cycle_distribution}}
We prove Proposition~\ref{prop:cycle_distribution} in this section. Formally, fix $(k_1, l_1), \cdots, (k_r, l_r) \in \mathcal{J}$, and $m_1, \cdots, m_r \geq 1$.  Without loss of generality, we assume that there exists $r_1 \leq r$ such that $l_1 = \cdots = l_{r_1} = 0$ and $l_{j} >0$ for $j > r_1$. Further, assume that $k_1 < k_2 < \cdots < k_{r_1}$ and $k_{r_1+1}< k_{r_1+2}< \cdots < k_r$. For convenience, we will denote
\begin{align}
&Z_{n, k_j,0} = Y_{n,k_j,0}, \,\,\,\, 1\leq j \leq r_1, \nonumber\\
& Z_{n,k_{r_1+1}, l_{r_1+1}}= Y_{n,k_{r_1+1},l_{r_1+1}} - p \mathbf{1}_{k_{r_1+1} = l_{r_1+1}=1} , Z_{k_j,l_j} = Y_{n,k_j,l_j}  ,\,\,\,\,\, j > r_1. \nonumber
\end{align}
We will show that for $j\leq r_1$, $Z_{n, k_j, l_j}$ have Poisson limits, and for $j > r_1$, $Z_{n, k_j, l_j}$ have Gaussian limits. This is done with the method of moments: we will establish that as $n \to \infty$, 
\begin{equation}
\begin{aligned}\label{eq:moment_lim}
\mathbb{E}_{0,0}\Big[\prod_{j=1}^{r} Z_{n, k_j, l_j}^{m_j} \Big] &\to \prod_{j=1}^{r_1} \mathbb{E}[\upsilon_{k_j,l_j,0}^{m_j}] \cdot \prod_{j=r_1+1}^{r} \sigma_{k_j,l_j}\mathbb{E}[\xi_j^{m_j}],\\ 
\mathbb{E}_{\lambda,\mu}\Big[\prod_{j=1}^{r} Z_{n, k_j, l_j}^{m_j} \Big] &\to \prod_{j=1}^{r_1} \mathbb{E}[\upsilon_{k_j,l_j,1}^{m_j}] \cdot \prod_{j=r_1+1}^{r} \sigma_{k_j,l_j} \mathbb{E}[\xi_j^{m_j}], 
\end{aligned}
\end{equation} 
where $\{\xi_1, \xi_2, \cdots\}$ is a sequence of iid $\mathcal{N}(0,1)$ random variables. It is easy to verify that the random variables $Z_{n, k_j, l_j}$ satisfy Carleman's condition, so \eqref{eq:moment_lim} implies the desired convergence in distribution in Proposition~\ref{prop:cycle_distribution}.  

We first establish a \emph{decoupling} lemma, which will allow us to separate the analysis of terms with $l=0$, which have Poisson limits, from terms with $l>0$, which have Gaussian limits. 

\begin{lemma}
\label{lemma:decoupling} 
Fix $r \geq 1$ and $r_1 \leq r$. Fix $(k_1,l_1), \cdots, (k_r,l_r) \in \mathcal{J}$, and $m_1,\cdots, m_r \geq 1$. Further assume that $l_j =0$ for $j \leq r_1$, and $l_j>0$ for $r_1<j\leq r$. Then we have, as $n \to \infty$, 
\begin{align}
\Big| \mathbb{E}_{0,0}\Big[ \prod_{j=1}^{r} Z_{n, k_j, l_j}^{m_j}  \Big] - \mathbb{E}_{0,0}\Big[ \prod_{j=1}^{r_1} Z_{n,k_j, l_j}^{m_j} \Big] \mathbb{E}_{0,0}\Big[ \prod_{j=r_1 +1}^{r} Z_{n,k_j,l_j}^{m_j} \Big] \Big| \to 0, \nonumber \\
\Big| \mathbb{E}_{\lambda,\mu}\Big[ \prod_{j=1}^{r} Z_{n, k_j, l_j}^{m_j}  \Big] - \mathbb{E}_{\lambda,\mu}\Big[ \prod_{j=1}^{r_1} Z_{n,k_j, l_j}^{m_j} \Big] \mathbb{E}_{\lambda,\mu}\Big[ \prod_{j=r_1 +1}^{r} Z_{n,k_j,l_j}^{m_j} \Big] \Big| \to 0. \nonumber 
\end{align}
\end{lemma}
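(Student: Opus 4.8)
The plan is to prove the decoupling lemma by a combinatorial analysis of the moment expansion, showing that the dominant contributions to $\mathbb{E}\big[\prod_j Z_{n,k_j,l_j}^{m_j}\big]$ come from configurations in which the cycles with $l=0$ (pure $A$-cycles) and the cycles with $l>0$ (mixed cycles) are realized on vertex-disjoint sets of vertices, and that the cross terms where the two groups share vertices are of lower order. Expanding each $Z_{n,k_j,l_j}^{m_j}$ as a sum over $m_j$ cycles of the relevant type, the full product becomes a sum over tuples of cycles; grouping by the ``shape'' of the union of their vertex sets, one must estimate, for each shape, the number of ways to embed it into $[n]$ together with the expectation of the corresponding product of $A$-entries and $B$-entries under $\mathbb{P}_{0,0}$ (and separately under $\mathbb{P}_{\lambda,\mu}$).

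First I would set up the moment expansion carefully, recording that a length-$k$ cycle with $l$ $B$-wedges uses $k-l$ $A$-edges and $2l$ $B$-edges, touches a controlled number of vertices in $V_1$ and $V_2$, and carries the normalization $n^{-l}$. Under $\mathbb{P}_{0,0}$ the $A$-entries are independent Bernoulli$(d/n)$ and the $B$-entries are i.i.d.\ standard Gaussians, so a product of $A$- and $B$-entries has nonzero expectation only when every edge appears an ``even enough'' number of times (for $B$) and the expectation factorizes over edges; this is exactly the kind of bookkeeping that forces the surviving configurations to be essentially disjoint unions of the individual cycles (with the usual self-avoiding/tree-like structure controlling the vertex count versus edge count). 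The key estimate is that any configuration in which an $A$-cycle and a mixed cycle share at least one vertex but are ``genuinely entangled'' loses at least one free vertex relative to the disjoint configuration, hence is down by a factor $O(1/n)$ (or $O(p/n)=O(1)$ absorbed by a constant, which one must check does not accumulate) — while the number of summands is polynomially bounded because $k_j, l_j$ are fixed (or $o(\sqrt{\log n})$, which still gives negligible combinatorial overhead). Summing the error over all shapes then gives the $o(1)$ bound.

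For the alternative $\mathbb{P}_{\lambda,\mu}$ the same scheme applies after conditioning on $(\sigma,u)$: the $A$-edges become independent with the shifted probabilities and the $B$-entries are Gaussians with mean $\sqrt{\mu/n}\,\sigma_i u$, so a product of entries has conditional expectation that again factorizes over edges, and averaging over $(\sigma,u)$ introduces only the extra low-order correlations $\rho=\frac1n\langle\sigma,\tau\rangle$ and $\frac1p\langle u,v\rangle$ that are already handled in the second-moment computation of Section~\ref{sec:detection}; the structural conclusion — entangled $A$/mixed configurations are subdominant — is unchanged, and the cross-moment of an $A$-group quantity with a mixed-group quantity splits into a product plus negligible terms. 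I expect the main obstacle to be the careful combinatorial accounting: one must show simultaneously that (a) entanglement between the two groups strictly costs vertices, (b) within the mixed group the $B$-wedge structure and the $n^{-l}$ normalization balance exactly so that the ``disjoint'' term is $\Theta(1)$ rather than vanishing or blowing up, and (c) the number of shape classes and the per-class embedding counts do not grow fast enough to overwhelm the per-shape $O(1/n)$ gain — in particular tracking where factors of $p = \Theta(n)$ appear (from free $V_2$-vertices) so they are correctly cancelled by the $n^{-l}$ factors and do not masquerade as a loss of decay. Once this is in place, the lemma follows by collecting the disjoint term as the claimed product and bounding everything else by $o(1)$.
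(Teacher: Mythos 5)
Your proposal follows essentially the same route as the paper's proof: expand the mixed moment over tuples of cycles, use independence of $\mathbf{A}$ and $\mathbf{B}$ under $\mathbb{P}_{0,0}$ (resp.\ conditional independence given $\sigma$ under $\mathbb{P}_{\lambda,\mu}$) to factor out the $B$-expectation, and then observe that any shared $A$-edges between the pure $A$-cycles and the mixed cycles give a gain of at most $O(n^m)$ in the edge-expectation while costing at least $n^{m+1}$ in the enumeration of configurations, so the disjoint configurations dominate and the difference is $o(1)$. This matches the paper's argument in substance and bookkeeping, including the tracking of $p=\Theta(n)$ against the $n^{-l}$ normalization and the conditioning step under the alternative.
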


\begin{proof}
We first establish the assertion under $\mathrm{H}_0$. We have 
\begin{align}
&\mathbb{E}_{0,0}\Big[ \prod_{j=1}^{r} Z_{n,k_j,l_j}^{m_j} \Big] = \mathbb{E}_{0,0} [T_1 \cdot T_2 \cdot T_3] , \nonumber \\
T_1 &=  \prod_{j=1}^{r_1} \Big( \sum_{\omega_j}  \prod_{e \in E_{\omega_j,A}} A_e  \Big)^{m_j},  \nonumber \\
 T_2 &= \Big( \frac{1}{n^{l_{r_1+1}}} \sum_{\omega_{r_1+1}} \prod_{e  \in E_{\omega_{r_1+1} ,A}} A_e  \prod_{e  \in E_{\omega_{r_1+1} ,B}} B_e -  p \cdot \mathbf{1}(k_{r_1+1} = l_{r_1+1} =1) \Big)^{m_{r_1 +1}} ,\nonumber \\
 T_3 &= \prod_{j=r_1 +2}^{r}  \Big( \frac{1}{n^{l_j}} \sum_{\omega_j} \prod_{e \in E_{\omega_j,A}} A_e  \prod_{e \in E_{\omega_j,B}} B_e\Big)^{m_j}. \nonumber 
\end{align}
Expanding, these terms may be expressed as 
\begin{align}
T_1 &= \prod_{j=1}^{r_1} \Big( \sum_{\omega_{j,1}, \cdots , \omega_{j,m_j}} \prod_{q=1}^{m_j} \prod_{e \in E_{\omega_{j,q},A}} A_e \Big),  \nonumber \\
T_2 &= \frac{1}{n^{l_{r_1+1} m_{r_1+1}}} \sum_{\omega_{r_1+1,1}, \cdots, \omega_{r_1+1,m_{r_1+1}}} \prod_{q=1}^{m_{r_1+1}} \Big( \prod_{e \in E_{\omega_{r_1+1,q},A} } A_e \prod_{e  \in E_{\omega_{r_1+1,q} ,B}} B_e - p \cdot \mathbf{1}(k_{r_1+1} = l_{r_1+1} =1) \Big), \nonumber \\
T_3 &= \prod_{j=r_1+2}^{r} \Big(  \frac{1}{n^{l_j m_j}} \sum_{\omega_{j,1}, \cdots, \omega_{j,m_j} } \prod_{q=1}^{m_j} \prod_{e \in E_{\omega_{j,q},A}} A_e \prod_{e \in E_{\omega_{j,q},B}} B_{e}   \Big). \nonumber 
\end{align}
Thus we have, 
\begin{align}
&\mathbb{E}_{0,0}[T_1 T_2 T_3| \mathbf{A}] \nonumber \\
&= \frac{1}{n^{\sum_{j= r_1+1}^{r} l_j m_j }} \sum_{j \in [r], 1\leq q_j \leq m_j, \omega_{j,q_j}} \mathbb{E}_{0,0}
\left[\begin{matrix}
 \Big( \prod_{j=1}^{r_1} \prod_{q_j \leq m_j}  \prod_{e \in E_{\omega_{j,q_j},A} }A_e  \Big) \times \\
 \prod_{q=1}^{m_{r_1+1}} \Big( \prod_{e \in E_{\omega_{r_1+1,q},A} } A_e \prod_{e  \in E_{\omega_{r_1+1,q} ,B}} B_e - p \cdot \mathbf{1}(k_{r_1+1} = l_{r_1+1} =1) \Big) \times  \\
 \prod_{j=r_1+2}^{r} \prod_{q_j=1}^{m_j} \prod_{e \in E_{\omega_{j,q_j},A}} A_e \prod_{e \in E_{\omega_{j,q_j},B}} B_{e}
\end{matrix} \Big| A \right]. \nonumber 
\end{align}
Consider first the case $(k_{r_1+1}, l_{r_1+1}) \neq (1,1)$. In this case, 
\begin{align}
&\mathbb{E}_{0,0}[T_1 T_2 T_3] = \frac{1}{n^{\sum_{j= r_1+1}^{r} l_j m_j }} \times \\
& \sum_{j \in [r], 1\leq q_j \leq m_j, \omega_{j,q_j}}  \mathbb{E}_{0,0}\Big[\Big( \prod_{j=1}^{r_1} \prod_{q_j \leq m_j}  \prod_{e \in E_{\omega_{j,q_j},A} }A_e  \Big) 
\Big( \prod_{j=r_1+1}^{r} \prod_{q_j\leq m_j} \prod_{e \in E_{\omega_{j,q_j},A}} A_e \Big) \Big]  \mathbb{E}_{0,0}\Big[\prod_{j=r_1+1}^{r} \prod_{q_j\leq m_j} \prod_{e \in E_{\omega_{j,q_j},B}} B_{e} \Big]. \nonumber
\end{align} 
On the other hand, a similar calculation yields
\begin{align}\label{eq:lemma2}
&\mathbb{E}_{0,0}\Big[ \prod_{j=1}^{r_1} Z_{n,k_j, l_j}^{m_j} \Big] \mathbb{E}_{0,0}\Big[ \prod_{j=r_1 +1}^{r} Z_{n,k_j,l_j}^{m_j} \Big] =  \frac{1}{n^{\sum_{j= r_1+1}^{r} l_j m_j }} \times \nonumber \\
&\sum_{j \in [r], 1\leq q_j \leq m_j, \omega_{j,q_j}}  \mathbb{E}_{0,0}\Big[\Big( \prod_{j=1}^{r_1} \prod_{q_j \leq m_j}  \prod_{e \in E_{\omega_{j,q_j},A} }A_e  \Big)  \Big] \mathbb{E}_{0,0}\Big[
\Big( \prod_{j=r_1+1}^{r} \prod_{q_j\leq m_j} \prod_{e \in E_{\omega_{j,q_j},A}} A_e \Big) \Big]  \times \\
&\ \ \ \ \ \ \ \ \ \ \ \ \ \ \ \ \mathbb{E}_{0,0}\Big[\prod_{j=r_1+1}^{r} \prod_{q_j\leq m_j} \prod_{e \in E_{\omega_{j,q_j},B}} B_{e} \Big]. \nonumber
\end{align}
To establish that the difference between the two quantities is $o(1)$, note that if two cycles overlap on $m$-edges, we gain a factor $O(n^m)$. However, this naturally implies they overlap on at least $(m+1)$ vertices. As a result, we lose a factor $n^{m+1}$ in choosing these cycles. As a result, the dominant contributions arise from non-overlapping cycles, thus establishing the desired claim in this case. The proof for $(k_{r_1+1}, l_{r_1+1}) = (1,1)$ is exactly analogous. 

Under $\mathbb{P}_{\lambda,\mu}$, if $(k_{r_1+1}, l_{r_1+1} ) \neq (1,1)$, we have, 
\begin{align}
&\mathbb{E}_{\lambda,\mu}\Big[ \prod_{j=1}^{r} Z_{n,k_j,l_j}^{m_j} \Big] = \frac{1}{n^{\sum_{j= r_1+1}^{r} l_j m_j }} \times \nonumber \\ 
&  \mathbb{E}_{\sigma}\Big[ \sum_{j \in [r], 1\leq q_j \leq m_j, \omega_{j,q_j}} \mathbb{E}_{\lambda ,\mu}\Big[\Big( \prod_{j=1}^{r_1} \prod_{q_j \leq m_j}  \prod_{e \in E_{\omega_{j,q_j},A} }A_e  \Big) 
\Big( \prod_{j=r_1+1}^{r} \prod_{q_j\leq m_j} \prod_{e \in E_{\omega_{j,q_j},A}} A_e \Big) \mid \sigma \Big]  \times \\
&\ \ \ \ \ \ \ \ \ \ \ \ \ \ \ \ \mathbb{E}_{\lambda,\mu}\Big[\prod_{j=r_1+1}^{r} \prod_{q_j\leq m_j}\prod_{e \in E_{\omega_{j,q_j},B}} B_{e} \mid \sigma \Big] \Big]. \nonumber
\end{align}
On the other hand, 
\begin{align}
&\mathbb{E}_{\lambda,\mu}\Big[ \prod_{j=1}^{r_1} Z_{n,k_j, l_j}^{m_j} \Big] \mathbb{E}_{\lambda,\mu}\Big[ \prod_{j=r_1 +1}^{r} (Z_{n,k_j,l_j})^{m_j} \Big]  = \frac{1}{n^{\sum_{j= r_1+1}^{r} l_j m_j }} \times \nonumber \\  
& \mathbb{E}_{\sigma}\Big[ \sum_{j \in [r], 1\leq q_j \leq m_j, \omega_{j,q_j}}  \mathbb{E}_{\lambda,\mu}\Big[\Big( \prod_{j=1}^{r_1} \prod_{q_j \leq m_j}  \prod_{e \in E_{\omega_{j,q_j},A} }A_e  \Big) \mid \sigma \Big] \mathbb{E}_{\lambda,\mu}\Big[
\Big( \prod_{j=r_1+1}^{r} \prod_{q_j\leq m_j} \prod_{e \in E_{\omega_{j,q_j},A}} A_e \Big)  \mid \sigma \Big]  \times \\
&\ \ \ \ \ \ \ \ \ \ \ \ \ \ \ \ \mathbb{E}_{\lambda,\mu}\Big[\prod_{j=r_1+1}^{r} \prod_{q_j\leq m_j}\prod_{e \in E_{\omega_{j,q_j},B}} B_{e} \mid \sigma \Big] \Big]. \nonumber
\end{align}
To see that the difference between the quantities is again $o(1)$, first condition on the choice of $\sigma$, and consider the difference in the inner sums. Again, if two cycles overlap on $m$ edges, we will gain a factor of $O(n^m)$, but we lose a factor of $n^{m+1}$ for the number of such choices. Thus the dominant contribution again comes from the non-overlapping cycles. The proof from $(k_{r_1+1}, l_{r_1 + 1}) = (1,1)$ follows analogously.
\end{proof} 

\noindent 
Armed with Lemma \ref{lemma:decoupling}, we turn to a proof of Proposition~\ref{prop:cycle_distribution}. 
\begin{proof}[Proof of Propostion~\ref{prop:cycle_distribution}]
First we note that Lemma~\ref{lemma:decoupling} immediately implies that the terms $Z_{n, k_i, l_i}$ with $i\leq r_1$ are asymptotically independent to the terms $Z_{n, k_j, l_j}$, with $j>r_1$. Moreover, the limiting distributions of $\{Y_{n,k,0}: k \geq 3\}$ under both $\mathrm{H}_0$ and $\mathrm{H_1}$ follows directly from \cite[Theorem 3.1]{mossel2015reconstruction}. 
Thus, we have, as $n \to \infty$, 
\begin{align}
\mathbb{E}_{0,0}\Big[ \prod_{j=1}^{r_1} Z_{n,k_j, 0}^{m_j} \Big] \to \prod_{j=1}^{r_1} \mathbb{E}[\upsilon_{k_j,0,0}^{m_j}], \,\,\,\,\, \mathbb{E}_{\lambda,\mu}\Big[ \prod_{j=1}^{r_1} Z_{n,k_j, 0}^{m_j} \Big] \to \prod_{j=1}^{r_1} \mathbb{E}[\upsilon_{k_j,0,1}^{m_j}]. \nonumber 
\end{align} 
Thus it suffices to analyze the terms with $l>0$, and we show that in this case $Z_{n, k, l}$ will have asymptotically Gaussian limits under $\mrm H_0$ and $\mrm H_1$. 

\noindent 
\textbf{Calculation under $\mathrm{H}_0$:} We establish the limiting distribution in three steps. 
\begin{itemize}
\item[(i)] Calculation of the mean and variance of $Y_{n,k,l} - p \mathbf{1}(k=l=1)$ under the null. 
\item[(ii)] Verification of Wick's formula. 
\item[(iii)] Verification of asymptotic independence. 
\end{itemize}
We address (i), and calculate the means and variances of the cycle statistics. Note that the case $l=k$ corresponds to the cycle statistics in \cite{banerjee2018lr}, and we can read off the null expectations and variances directly. Specifically, we have, 
\begin{align}
&\mathbb{E}_{0,0}[Y_{n,k,k} - p \mathbf{1}(k=1)] = 0, \nonumber \\
&\mathrm{Var}_{0,0}\Big( Y_{n,k,k} - \mathbf{1}(k=1) \Big) = (1+o(1))2k\gamma^k. \nonumber
\end{align}
We consider now the case $0<l<k$. We have 
\begin{align}
\mathbb{E}_{0,0}[Y_{n,k,l}]=0. \nonumber
\end{align}
Moving onto the variance, we have, 
\begin{align}
\mathbb{E}_{0,0}[Y_{n,k,l}^2] &= \frac{1}{n^{2l}} \sum_{\omega_1, \omega_2} \mathbb{E}_{0,0}\Big[ \Big( \prod_{e_1 \in E_{\omega_1, A} } A_{e_1} \prod_{e_2 \in E_{\omega_1,B}} B_{e_2}  \Big)  \Big( \prod_{e_1 \in E_{\omega_2, A} } A_{e_1} \prod_{e_2 \in E_{\omega_2,B}} B_{e_2} \Big)\Big] \\
&= \frac{1}{n^{2l}}  \sum_{\omega}  \mathbb{E}_{0,0}\Big[ \Big( \prod_{e_1 \in E_{\omega, A} } A_{e_1} \prod_{e_2 \in E_{\omega,B}} B_{e_2}  \Big) ^2 \Big] + T_1, 
\nonumber
\end{align}
where $T_1$ tracks the contribution from the pairs $(\omega_1, \omega_2)$ with $\omega_1 \neq \omega_2$. First, observe that 
\begin{align}
\frac{1}{n^{2l}}  \sum_{\omega}  \mathbb{E}_{0,0}\Big[ \Big( \prod_{e_1 \in E_{\omega, A} } A_{e_1} \prod_{e_2 \in E_{\omega,B}} B_{e_2}  \Big) ^2 \Big] = \frac{1}{n^{2l}} \sum_{\omega} \mathbb{E}_{0,0}\Big[ \prod_{e_1 \in E_{\omega, A} } A_{e_1}  \prod_{e_2 \in E_{\omega,B}} B_{e_2}^2\Big] = \frac{1}{n^{2l}} \sum_{\omega} \Big( \frac{d}{n} \Big)^{k-l}. \nonumber
\end{align}
It remains to count the number of length $k$ cycles with $l$ $B$-wedges. Any such cycle has $k$-vertices of type $A$, and this choice can be done in $n^k$ ways. The positions of the $B$-wedges can be chosen in ${k \choose l}$ ways. The $B$-vertices on the $B$-wedges can be chosen in $p^{l}$ ways. Finally, we divide this count by $2k$ to account for overcounting due to cyclic shifts. This implies 
\begin{align}
&\frac{1}{n^{2l}}  \sum_{\omega}  \mathbb{E}_{0,0}\Big[ \Big( \prod_{e_1 \in E_{\omega, A} } A_{e_1} \prod_{e_2 \in E_{\omega,B}} B_{e_2}  \Big) ^2 \Big] = \frac{1}{n^{2l}} \sum_{\omega} \Big( \frac{d}{n} \Big)^{k-l} = \frac{1}{n^{2l}} \cdot \Big( \frac{d}{n} \Big)^{k-l}   \cdot \frac{1}{2k} n^k {k \choose l} p^l \nonumber \\
&= \frac{1}{2k} {k \choose l} \frac{d^{k-l}}{\gamma^l} (1+o(1)). \nonumber 
\end{align}
We will next establish that this is the dominant term in the asymptotic variance, and that $T_1 \to 0$ as $n \to \infty$. Note that a product term corresponding to $(\omega_1, \omega_2)$ with $\omega_1 \neq \omega_2$ has a non-zero contribution provided they share exactly the same $B$-wedges. For any two such cycles $(\omega_1, \omega_2)$, suppose they share $\alpha_1$ $A$ edges. Note that $\omega_1 \neq \omega_2$, and thus $0\leq \alpha_1 < k-l$. As $\omega_1$ and $\omega_2$ cannot differ on exactly one edge, in fact, this implies $\alpha_1 \leq k-l-2$. Setting $\alpha_2 = k-l-\alpha_1$, we have, 
\begin{align}
T_1 &= \frac{1}{n^{2l}} \sum_{\omega_1 \neq \omega_2} \mathbb{E}_{0,0}\Big[ \Big( \prod_{e_1 \in E_{\omega_1, A} } A_{e_1} \prod_{e_2 \in E_{\omega_1,B}} B_{e_2}  \Big)  \Big( \prod_{e_1 \in E_{\omega_2, A} } A_{e_1} \prod_{e_2 \in E_{\omega_2,B}} B_{e_2} \Big)\Big] \nonumber \\
&= \frac{1}{n^{2l}} \sum_{\alpha_2 = 2}^{k-l} \sum_{|E_{\omega_1,A} \cap E_{\omega_2,A}| = k-l - \alpha_2} \Big( \frac{d}{n} \Big)^{k-l+\alpha_2}. \nonumber 
\end{align}
It remains to count the number of pairs $(\omega_1, \omega_2)$ with $|E_{\omega_1, A} \cap E_{\omega_2,A}| = k-l- \alpha_2$ for all $2\leq \alpha_2 \leq k-l$. We derive a rough upper bound to the number of such pairs as follows: there are $O(n^k p^l)$ choices for the first cycle, and $O(n^{\alpha_2-1})$ choices for the second cycle $\omega_2$, given $\omega_1$. Thus we bound the number of such pairs as $C(k, \gamma) n^{k+l + \alpha_2 -1}$, where $C(k,\gamma)>0$ is independent of $n$. Plugging in this bound, we obtain 
\begin{align}
T_1 \leq C(k,\gamma)  \frac{1}{n^{2l}} \sum_{\alpha_2 =2}^{k-l} n^{k+l +\alpha_2 -1} \cdot \Big(\frac{d}{n} \Big)^{k-l + \alpha_2} = O\Big(\frac{1}{n} \Big). \nonumber 
\end{align}
This controls $T_1$, and establishes the right order of the variance under $\mathrm{H}_0$.

We next turn to (ii). We want to show that $Z_{n, k, l}$ are asymptotically Gaussian by showing the limit of their moments \eqref{eq:moment_lim}. This is done by checking that the limits of the moments satisfy Wick's formula. Formally, we show that for $W_{ni} \in \{ Z_{n, k_{r_1+1}, l_{r_1+1}}, \cdots , Z_{n,k_r,l_r} \}$, $i\in [m]$, we have that



\begin{equation}\label{eq:wick}
\lim_{n\to\infty}\mathbb{E}[W_{n1} \cdots W_{nm}] =
\begin{cases}
\sum_{\eta} \prod_{i=1}^{m/2} \mathbb{E}[W_{n\eta(i,1)} W_{n\eta(i,2)}] + o(1) &\textrm{if}\,\, m \,\, \textrm{even} \\
o(1) & \textrm{o.w.}
\end{cases}
\end{equation}
where $\eta$ is a partition of $[m]$ into $\frac m 2$ blocks of size two, and $\eta(i,j)$ denotes the $j$-th element of the $i$-th block, where $j\in\{1, 2\}$. Wick's formula \cite{wick1950evaluation} then implies that the limiting distribution must be Gaussian, as long as the limits of $\mathbb{E}[W_{n\eta(i,1)} W_{n\eta(i,2)}]$ exist.

We will perform the calculations assuming that $(k_{r_1+1},l_{r_1+1}) \not=(1,1)$; the same calculations hold in the case of equality and hence are omitted. For a choice of $W_{n1}, \cdots, W_{nm}$, let $\omega_{1:m}$ be a collection of cycles $\omega_1, ..., \omega_m$, such that $\omega_i$ is of length $k_i$, with $l_i$ $B$ wedges and $x_i$ contiguous blocks of $A$ type edges and $B$ type wedges. Note then that
\begin{align}\label{eq:gauss_expand}
\begin{aligned}
    &\mathbb{E}_{0,0}[W_{n1} \cdots W_{nm}]\\
    &= \mathbb{E}_{0,0}\Big[{n^{- \sum_{i}l_i} \sum_{\omega_{1:m}} \prod_{i\leq m} \prod_{e_1\in E_{\omega_i,A}}A_{e_1}\prod_{e_2\in E_{\omega_i,B}}B_{e_2}} \Big]\\
    &= n^{-\sum_{i}l_i} \sum_{\omega_{1:m}} \mathbb{E}_{0,0}\Big[{\prod_{i\leq m} \prod_{e_1\in E_{\omega_i,A}}A_{e_1}} \Big] \mathbb{E}_{0,0}\Big[{\prod_{i\leq m}\prod_{e_2\in \tilde E_{\omega_i}}B_{e_2}}\Big]
\end{aligned}
\end{align}
where $\tilde E_{\omega_i}$ are the edges of $\tilde G_{\omega_i} = G_{\omega_i} / G_{\omega_i, A}$, the quotient graph where for each $j\leq x_i$, $G_{\omega_i, \alpha_j}$, the graph of the $j$th $A$ block in $\omega_i$, is identified as a vertex of $\tilde G_{\omega_i}$. We denote the vertices of $\tilde G_{\omega_i}$ as $\tilde V_{\omega_i} = \tilde V^1_{\omega_i} \cup \tilde V^2_{\omega_i}$, where $\tilde V^1$ are the vertices inherited from $G_{\omega_i}$, and $\tilde V^2$ are the vertices produced by the quotient operator. Among $\tilde V^2$, we define the following equivalence relationship: if the first and last vertices of $G_{\omega_i, \alpha^i_j}$ and $G_{\omega_h,\alpha^h_p}$ are the same, then we consider the vertices in $\tilde G_{\omega_i}$ and $\tilde G_{\omega_h}$, which correspond to the quotient image of $G_{\omega_i, \alpha^i_j}$ and $G_{\omega_h,\alpha^h_p}$ respectively, to be the same. In order for the contribution of $\omega_{1:m}$ to be non-zero, the $B$ edges have to be included at least twice, and we will call such a collection $\tilde G_{\omega_{1:m}}$ a weak CLT sentence. Given a weak CLT sentence, we define a partition $\eta(\tilde G_{\omega_{1:m}})$ of $[m]$ as follows: $i$ and $j$ are in the same partition if $\tilde G_{\omega_i}$ and $\tilde G_{\omega_j}$ share at least one edge. As a result, we can express \eqref{eq:gauss_expand} as follows:
\begin{align*}
    n^{-\sum_{i}l_i} \sum_{G_{\omega_{1:m},A}} \E_{0,0}\Big[{ \prod_{e_1\in E_{\omega_{1:m},A}}A_{e_1}}\Big ] \sum_\eta \sum_{\substack{\tilde G_{\omega_{1:m}} \\ \eta(\tilde G_{\omega_{1:m}}) = \eta}} \E_{0,0}\Big[{\prod_{e_2\in \tilde E_{\omega_{1:m}}}B_{e_2}}\Big].
\end{align*}
Let $t$ be the total number of vertices of $\tilde G_{\omega_{1:m}}$. Let us consider the case where $\eta(\tilde G_{\omega_{1:m}})$ contains strictly less than $\frac m 2$ blocks, which includes all cases when $m$ is odd. In this case \cite[Lemma 4.10]{anderson2006clt} implies that $t < \sum_i l_i$. Following the proof of \cite[Lemma 3]{banerjee2018lr}, we note that the number of weak CLT sentences summed over is bounded by
\begin{align*}
    O\Big(\sum_i l_i\Big)^{O(\sum_i l_i)} n^{t - \sum_i x_i}.
\end{align*}
This comes from the fact that $\sum_i x_i$ of the vertices are automatically fixed from the quotient operator. As a result, for a particular partition $\eta$, where $|\eta| < \frac m 2$, we note that:
\begin{align*}
    & n^{- \sum_{i}l_i} \sum_{G_{\omega_{1:m},A}} \mathbb{E}_{0,0}\Big[{ \prod_{e_1\in E_{\omega_{1:m},A}}A_{e_1}} \Big] \sum_{\substack{\tilde G_{\omega_{1:m}} \\ \eta(\tilde G_{\omega_{1:m}} = \eta)}} \mathbb{E}_{0,0} \Big[{\prod_{e_2\in \tilde E_{\omega_{1:m}}}B_{e_2}} \Big]\\
    &= n^{- \sum_{i}l_i} \sum_{G_{\omega_{1:m},A}} \mathbb{E}_{0,0}\Big[{ \prod_{e_1\in E_{\omega_{1:m},A}}A_{e_1}} \Big] \sum_{\substack{\tilde G_{\omega_{1:m}} \\ \eta(\tilde G_{\omega_{1:m}} = \eta)}} O(1)^{O(\sum_i l_i)}\\
    &= n^{- \sum_{i}l_i} O\Big(\sum_i l_i\Big)^{O(\sum_i l_i)} n^{t - \sum_i x_i} \sum_{G_{\omega_{1:m},A}} \mathbb{E}_{0,0}\Big[{ \prod_{e_1\in E_{\omega_{1:m},A}}A_{e_1}} \Big]\\
    &= n^{- \sum_{i}l_i} O\Big(\sum_i l_i\Big)^{O(\sum_i l_i)} n^{t - \sum_i x_i} O\Big(\frac d n\Big)^{\sum_i \sum_{j\leq x_i}2\alpha^i_j} O(n)^{\sum_i \sum_{j\leq x_i} (2\alpha^i_j + 1)}\\
    &= O\Big(\sum_i l_i\Big)^{O(\sum_i l_i)} O(n)^{t - \sum_i l_i}.
\end{align*}
The penultimate line holds because for an $A$ block of length $2\alpha^i_j$, there are $2\alpha^i_j + 1$ vertices, for which there are $O(n)$ options each. Thus the total number of choices for $E_{\omega_{1:m},A}$ is $O(n)^{\sum_i \sum_{j\leq x_i}(2\alpha^i_j + 1)}$. Since $t< \sum_i l_i$, we see that the contribution of such a term is $o(1)$.
%

We have thus shown that the leading order term of \eqref{eq:gauss_expand} consists of weak CLT sentences whose partition $\eta(\tilde G_{\omega_{1:m}})$ has exactly $\frac m 2$ blocks. Note that this automatically implies that \eqref{eq:gauss_expand} is $o(1)$ if $m$ is odd, and that for $m$ even, the leading order weak CLT sentences have partitions $\eta$ with only blocks of size two, which is exactly what we need for \eqref{eq:wick}. From our calculations in step (i), we see that the variances of $Z_{n, k, l}$ under the null are as we claimed.

Finally, we verify step (iii), that for $l\geq 1$, $Z_{n, k,l}$ are asymptotically independent. From the discussion above \eqref{eq:wick}, this amounts to checking that for $W_{ni}\not=W_{nj}$, $\E_{0,0}[W_{ni}W_{nj}] \overset{p}{\to} 0$. Note that this expectation equals $0$ if $l_i\not= l_j$. Thus it suffices to consider the case $l_i=l_j$, $k_i< k_j$ and we have:
\begin{align*}
    \E_{0,0}[W_{ni}W_{nj}] &= n^{-2l_i} \sum_{\omega_i, \omega_j}\EE[0,0]{\prod_{e_1\in E_{\omega_i}} A_{e_1} \prod_{e_1\in E_{\omega_j}} A_{e_1}}\\
    \intertext{where the sum is taken over $\omega_i$, $\omega_j$ that intersect on all $l_i$ $B$-wedges, which gives}\\
    \E_{0,0}[W_{ni}W_{nj}] &\leq  C(k_i, d, \gamma) n^{-2l_i} \p{\frac 1 n}^{k_i + k_j - 2l_i} n^{k_i + k_j - l_i - 1} = o(1).
\end{align*}
This concludes the demonstration of Wick's formula under $H_0$, and hence proves the desired convergence in distribution.

\noindent 
\textbf{Calculation under $\mathrm{H}_1$:} We establish the desired result following similar steps as the calculation under $\mathrm{H}_0$. First, we calculate the means of the cycle statistics under $\mathrm{H}_1$. 

We start by observing that the case $k=l=1$ follows directly from the calculations in \cite{banerjee2018lr}, while the $l=0$ cases correspond to $A$ type cycles, and have been worked out in \cite{mossel2015reconstruction}. We consider the remaining cases. To this end, note that under $\mathbb{P}_{\lambda, \mu}$, for any $B$-wedge $B_{i_1,j_1} B_{i_2,j_1}$, we have,  
\begin{align}
\mathbb{E}_{\lambda, \mu}[B_{i_1,j_1} B_{i_2,j_1} | \mathbb{\sigma}] = \frac{\mu}{n} \sigma_{i_1} \sigma_{i_2}. \nonumber
\end{align}
Now fix any cycle $\omega$ with $k-l$ A-edges and $l$ B-wedges. 
\begin{align}
&\mathbb{E}_{\lambda,\mu}[Y_{n, k, l} ] = \frac{1}{n^l}\sum_{\omega} \mathbb{E}_{\lambda,\mu}\Big[  \prod_{e_1\in E_{\omega,A}} A_{e_1} \prod_{e_2\in E_{\omega,B}} B_{e_2} \Big] \nonumber \\
&= \frac{1}{n^l} \sum_{\omega} \mathbb{E}_{\sigma} \Big[ \prod_{e_1 \in E_{\omega,A}} \Big( \frac{d + \lambda \sqrt{d} \, \sigma_{e_1^-} \sigma_{e_1^+} }{n} \Big) \prod_{e_2 \in E_{\omega,B}} \Big(\frac{\mu}{n} \sigma_{e_2^-} \sigma_{e_2^+} \Big)  \Big].  \nonumber 
\end{align}
In the final equality, for an $A$-type edge $e_1$, we use $e_1^- , e_1^+$ to denote its end points. Similarly, for a $B$-wedge $\{i,j,k\}$, we set $e^- = i$, $e^+ = k$. Each cycle $\omega$ has $k-l$ $A$-edges and $l$ $B$ wedges, and thus 
\begin{align}
&\mathbb{E}_{\lambda,\mu}[Y_{n, k, l} ]  = \frac{\mu^l}{n^{k+l}} \sum_{\omega} \mathbb{E}_{\sigma} \Big[ \sum_{z_e \in \{d, \lambda \sqrt{d} \sigma_{e^-} \sigma_{e^+}\}: e \in E_{\omega,A}} \prod_{e \in E_{\omega, A}} z_{e} \prod_{e' \in E_{\omega, B}} \sigma_{e'^-} \sigma_{e'^+}  \Big]
\end{align}
Observe that we have a non-zero contribution in the sum above if and only if $z_e = \lambda \sqrt{d} \sigma_{e^-} \sigma_{e^+}$ for all $e \in E_{\omega, A}$; in this case, each term contributes $(\lambda \sqrt{d})^{k-l}$. 
Note that the number of $k$ cycles with $l$ $B$-wedges is $\frac{1}{2k} {k \choose l} n^k p^l$, and thus, upon simplification, 
\begin{align}
\mathbb{E}_{\lambda, \mu}[Y_{n,k,l}] = (1+o(1)) \frac{1}{2k} {k \choose l}  \Big( \frac{\mu}{\gamma} \Big)^l (\lambda \sqrt{d})^{k-l}. \nonumber 
\end{align} 
We turn to the calculation of the variance under $\mathrm{H}_1$. Note that under $\mathrm{H}_1$, 
$B_{ij} = X_{ij} + Z_{ij}$, where $X_{ij} = \sqrt{\frac{\mu}{n}} \cdot \sigma_i u_j$. Thus we have, 
\begin{align}
Y_{n, k, l} &= \frac{1}{n^l}\sum_{\omega} \prod_{e_1\in E_{\omega,A}} A_{e_1} \prod_{e_2\in E_{\omega,B}} B_{e_2} \nonumber \\
&=  \frac{1}{n^l}\sum_{\omega} \prod_{e_1\in E_{\omega,A}} A_{e_1} \prod_{e_2\in E_{\omega,B}}  (X_{e_2} + Z_{e_2}) \nonumber \\
&= T_1 + T_2 + T_3, \nonumber 
\end{align}
where 
\begin{align}
T_1 &= \frac{1}{n^l}\sum_{\omega} \prod_{e_1\in E_{\omega,A}} A_{e_1} \prod_{e_2\in E_{\omega,B}} Z_{e_2}, \nonumber \\
T_2 &= \frac{1}{n^l}\sum_{\omega} \prod_{e_1\in E_{\omega,A}} A_{e_1} \prod_{e_2\in E_{\omega,B}} X_{e_2}, \nonumber \\
T_3 &= Y_{n,k,l} - T_1 - T_3. \nonumber 
\end{align}
The term $T_1$ can be analyzed exactly as under $\mathrm{H}_0$, and the same arguments will show that $T_1\overset{d}{\to} \mc N(0, \frac{1}{2k}\binom{k}{l}\frac{d^{k-l}}{\gamma^l})$. We will establish that $T_2 \stackrel{P}{\to} \frac{1}{2k}\binom{k}{l}\frac{(\lambda\sqrt{d})^{k-l}\mu^l}{\gamma^l}$, and that $T_3 \stackrel{P}{\to} 0$.

First, we will consider $T_3$.  Note that $\mathbb{E}_{\lambda, \mu}[T_3]=0$, and thus it suffices to establish that $\mathbb{E}_{\lambda, \mu}[T_3^2] = o(1)$ as $n \to \infty$. Now, we can express $T_3 = \sum_{\omega} V_{n,k,l,\omega}$, where 
\begin{align}
V_{n,k,l,\omega} = \frac{1}{n^l} \prod_{e \in E_{\omega,A}} A_{e} \sum_{E_{\omega,f} \subsetneq E_{\omega,B}} \prod_{e \in E_{\omega,f}} X_{e} \prod_{e \in E_{\omega,B} \backslash E_{\omega,f}} Z_e. \nonumber 
\end{align}
Therefore, 
\begin{align}
&\mathbb{E}_{\lambda, \mu}[T_3^2] = \sum_{\omega_1, \omega_2} \mathbb{E}_{\lambda, \mu}[V_{n,k,l,\omega_1} V_{n,k,l,\omega_2}]  \nonumber \\ 
&:= \sum_{\omega_1, \omega_2} \sum_{E_{\omega_1,f} \subsetneq E_{\omega_1,B}, E_{\omega_2,f} \subsetneq E_{\omega_2,B}}  \mathbb{E}_{\lambda, \mu}[V_{n,k,l,\omega_1, E_{f,\omega_1}} V_{n,k,l,\omega_2, E_{f,\omega_2}}], \nonumber 
\end{align} 
where we define 
\begin{align}
V_{n,k,l,\omega,E_{f,\omega}} := \frac{1}{n^l} \prod_{e \in E_{\omega, A}} A_{e} \prod_{e \in E_{\omega,f}} X_e \prod_{e \in E_{\omega,B} \backslash E_{\omega,f} } Z_e. \nonumber 
\end{align}
This implies $\mathbb{E}_{\lambda, \mu}[ V_{n,k,l,\omega_1,E_{f,\omega_1}} V_{n,k,l,\omega_2,E_{f,\omega_2}} ]$ is zero unless $E_{\omega_1, B} \backslash E_{\omega_1,f} = E_{\omega_2, B} \backslash E_{\omega_2, f}$. Given $\omega_1, \omega_2$, the terms which affect the contribution by powers of $n$ are the edges in $E_{\omega_1, A} \cap E_{\omega_2,A}$. The dominant contribution arises from $\omega_1, \omega_2$ such that $| E_{\omega_1, A} \cap E_{\omega_2, A}| =0$---this follows using the same reasoning used to identify the dominant order of the variance under $\mathrm{H}_0$. Further, note that overlaps in the edges in $E_{\omega_1, f}$ and $E_{\omega_2,f}$ affect the expectation, but only to constant order. For a pair $(\omega_1, \omega_2)$ satisfying these conditions 
\begin{align}
\mathbb{E}_{\lambda, \mu}[V_{n,k,l,\omega_1, E_{f,\omega_1}} V_{n,k,l,\omega_2, E_{f,\omega_2}}] = \frac{1}{n^{2l}} \mathbb{E}_{\lambda, \mu} \Big[ \prod_{e \in E_{\omega_1,A} } A_e \prod_{e \in E_{\omega_2,A} } A_e \prod_{e \in E_{\omega_1,f}} X_e \prod_{e \in E_{\omega_2,f}} X_e \Big]. \nonumber 
\end{align}
As $\omega_1, \omega_2$ are both length $k$ cycles with $k-l$ $A$ edges and $l$ B-edges, and $E_{\omega_1, B} \backslash E_{\omega_1,f} = E_{\omega_2, B} \backslash E_{\omega_2, f}$, we have $|E_{\omega_1,f}| = |E_{\omega_2,f}| := x$. In turn, this implies that there exists $C := C(k,l)>0$ such that 
\begin{align}
\mathbb{E}_{\lambda, \mu}[V_{n,k,l,\omega_1, E_{f,\omega_1}} V_{n,k,l,\omega_2, E_{f,\omega_2}}] &\leq C \frac{1}{n^{2l + x}} \mathbb{E}_{\lambda, \mu}  \Big[ \prod_{e \in E_{\omega_1,A} } A_e \prod_{e \in E_{\omega_2,A} } A_e \Big] \leq C' \frac{1}{n^{2k+x}}, \nonumber 
\end{align} 
where $C' := C'(k,l, \lambda, d)>0$ is a constant independent of $n$. There are only finitely many choices of $E_{\omega_1, f}$ and $E_{\omega_2,f}$, and therefore, for each $\omega_1, \omega_2$, 
\begin{align}
\sum_{E_{\omega_1,f} \subsetneq E_{\omega_1,B}, E_{\omega_2,f} \subsetneq E_{\omega_2,B}}  \mathbb{E}_{\lambda, \mu}[V_{n,k,l,\omega_1, E_{f,\omega_1}} V_{n,k,l,\omega_2, E_{f,\omega_2}}] \leq C'' \frac{1}{n^{2k+x}}, \nonumber 
\end{align} 
for a larger constant $C''$. Finally, we sum over $\omega_1, \omega_2$. If the two cycles intersect on $x$ edges, they have $x+1$ vertices in common. Thus the number of pairs $\omega_1, \omega_2$ with $x$ common edges is $O(n^{2k+ 2l - x-1})$. Summing, we have the conclusion that $\mathbb{E}_{\lambda,\mu} [T_3^2] = o(1)$. 

Finally, we turn to $T_2$. For any $1\leq i_1 < \cdots < i_k \leq n$, we let $\mathbf{i}_{1:k} = (i_1, \cdots ,i_k)$. Similarly, for $1\leq j_1 < \cdots <j_l \leq p$, set $\mathbf{j}_{1:l} = (j_1, \cdots, j_p)$. Finally, given $\mathbf{i}_{1:k}$ and $\mathbf{j}_{1:l}$, let $\mathcal{C}(\mathbf{i}_{1:k}, \mathbf{j}_{1:l})$ denote the set of cycles with $k-l$ $A$-edges and $l$ $B$-wedges on the chosen vertices. 
Armed with this notation, we observe that 
\begin{align}
T_2 &= \frac{1}{n^l} \sum_{\omega} \prod_{e_1 \in E_{\omega,A}} A_{e_1} \prod_{e_2 \in E_{\omega,B}} X_{e_2} \nonumber \\
&= \frac{1}{n^l} \sum_{\mathbf{i}_{1:k} , \mathbf{j}_{1:l}} \sum_{\omega \in \mathcal{C}(\mathbf{i}_{1:k}, \mathbf{j}_{1:l})} \prod_{e_1 \in E_{\omega,A}} A_{e_1} \prod_{e_2 \in E_{\omega,B}} X_{e_2}. \nonumber
\end{align}
Given $\mathbf{i}_{1:k}$, let $\mathcal{C}(\mathbf{i}_{1:k})$ denote all length $k$ cycles on the vertices $i_1, i_2, \cdots, i_k$, with $(k-l)$ edges colored to be of type $A$, and the remaining edges colored to be of type $B$. For any edge in the cycle, let $t(e) \in \{A,B\}$ denote its type. 
This implies 
\begin{align}
T_2 &= \frac{1}{n^l}  \Big(  \sum_{\mathbf{i}_{1:k}} \sum_{\omega \in \mathcal{C}(\mathbf{i}_{1:k})} \prod_{t(e) = A} A_e \prod_{t(e) = B} \sigma_{e^-} \sigma_{e^+}  \Big) \Big( \Big( \frac{\mu}{n} \Big)^l \sum_{\mathbf{j}_{1:l}} \prod_{h=1}^{l} u_{j_h}^2 \Big). \nonumber 
\end{align} 
Note that for fixed $\mb j_{1:l}$, $u_{jh}^2$ are independent with mean $1$, so law of large numbers gives 
\begin{align} 
\frac{1}{n^l} \sum_{\mathbf{j}_{1:l}} \prod_{h=1}^{l} u_{j_h}^2 \stackrel{P}{\to} 1. \nonumber 
\end{align}
Thus it suffices to control the other term. In particular, it suffices to show that 
\begin{align}
\mathrm{Var}_{\lambda,\mu} \Big(\frac{1}{n^l} \Big(  \sum_{\mathbf{i}_{1:k}} \sum_{\omega \in \mathcal{C}(\mathbf{i}_{1:k})} \prod_{t(e) = A} A_e \prod_{t(e) = B} \sigma_{e^-} \sigma_{e^+}  \Big)   \Big) = o(1) \nonumber 
\end{align} 
as $n \to \infty$. This amounts to checking that
\begin{align*}
    \EE[\lambda,\mu]{\frac{1}{n^{2l}} \Big(  \sum_{\mathbf{i}_{1:k}} \sum_{\omega \in \mathcal{C}(\mathbf{i}_{1:k})} \prod_{t(e) = A} A_e \prod_{t(e) = B} \sigma_{e^-} \sigma_{e^+}  \Big)^2} &= \EE[\lambda,\mu]{\frac{1}{n^{l}} \Big(  \sum_{\mathbf{i}_{1:k}} \sum_{\omega \in \mathcal{C}(\mathbf{i}_{1:k})} \prod_{t(e) = A} A_e \prod_{t(e) = B} \sigma_{e^-} \sigma_{e^+}  \Big)}^2 + o(1).
\end{align*}
This turns out to be true because of the same argument for Lemma~\ref{lemma:decoupling} (specifically the discussion after \eqref{eq:lemma2}). This completes the proof. 

\end{proof}


\section{Weak Recovery under the Threshold}
\label{sec:ubd_below}

In the remaining two sections, we will prove Theorem~\ref{thm:recovery}. In this section, we show the first part, that weak recovery is impossible when $\lambda^2 + \frac{\mu^2}{\gamma} < 1$.  We follow the general proof scheme in \cite{banerjee2018contiguity}. The proof is information theoretic, and the main idea is contained in the following proposition. 
\begin{proposition}
When $\lambda^2 + \frac{\mu^2}{\gamma} < 1$, then for any fixed $r$, and any two configurations $(\sigma_1, ..., \sigma_r), (\tau_1, ..., \tau_r)\in \{\pm 1\}^r$, we have that as $n\to \infty$,
\begin{align*}
    \norm{\mbb P_{\lambda, \mu}(\cdot\mid \sigma_{1:r}) - \mbb P_{ \lambda, \mu }(\cdot\mid \tau_{1:r})}_{\mathrm{TV}} \to 0.
\end{align*}
\end{proposition}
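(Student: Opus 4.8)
The plan is to establish this total variation bound by conditioning on a few spins $\sigma_{1:r}$ and showing that such conditioning perturbs the law of the observed data $(\mathbf A, \mathbf B)$ only negligibly, because $r$ is fixed while the data has size $n \to \infty$. The natural route is a second-moment/contiguity argument parallel to the proof of the information theoretic lower bound in Theorem~\ref{thm:it_threshold}, but now carried out \emph{with $r$ spins pinned on both sides}. Concretely, I would introduce the conditional likelihood ratio
\begin{align*}
L_{\sigma_{1:r}} = \frac{\mathrm{d}\mathbb{P}_{\lambda,\mu}(\cdot \mid \sigma_{1:r})}{\mathrm{d}\mathbb{P}_{0,0}},
\end{align*}
and note that $\| \mathbb{P}_{\lambda,\mu}(\cdot \mid \sigma_{1:r}) - \mathbb{P}_{\lambda,\mu}(\cdot \mid \tau_{1:r}) \|_{\mathrm{TV}} \le \mathbb{E}_{0,0}[|L_{\sigma_{1:r}} - L_{\tau_{1:r}}|]$. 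It then suffices to show this $L^1$ distance under $\mathbb{P}_{0,0}$ tends to zero; since both $L_{\sigma_{1:r}}$ and $L_{\tau_{1:r}}$ have $\mathbb{P}_{0,0}$-mean one, it is enough to prove that each is uniformly square-integrable and that they have the \emph{same} distributional limit (indeed converge to the \emph{same} limit in probability), so that the difference goes to $0$ in $L^1$.

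The key steps, in order, are as follows. First, write $L_{\sigma_{1:r}}$ as a posterior average over the remaining spins $\sigma_{r+1:n}$ and the latent vector $u$, exactly as in the proof of Theorem~\ref{thm:it_threshold}, but with the prior on $\sigma$ replaced by the product of point masses at $\sigma_{1:r}$ and the uniform measure on the rest; truncate $u$ to the event $\mathcal{S} = \{\|u\|_2 \le (1+\delta)\sqrt p\}$ as before. Second, compute the second moment $\mathbb{E}_{0,0}[\tilde L_{\sigma_{1:r}} \tilde L_{\tau_{1:r}}']$, where the two copies use independent draws $(\sigma_{r+1:n},u)$ and $(\sigma'_{r+1:n},v)$ with \emph{different} pinned prefixes. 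The computation factorizes over the graph part and the Gaussian part exactly as in Theorem~\ref{thm:it_threshold}; pinning $r = O(1)$ coordinates changes the overlap $\rho = \tfrac1n\langle\sigma,\sigma'\rangle$ and the inner product $\langle u,v\rangle$ only by $O(1/n)$ corrections that do not affect the large-deviations integral, so the bound $\mathbb{E}_{0,0}[\tilde L_{\sigma_{1:r}}\tilde L'_{\tau_{1:r}}] \le C < \infty$ holds whenever $\lambda^2 + \tfrac{\mu^2}{\gamma}(1+\delta)^2 < 1$, giving uniform integrability of the family $\{L_{\sigma_{1:r}}\}$. Third, invoke the small-subgraph conditioning picture: by Proposition~\ref{prop:cycle_distribution}, the cycle statistics $\{Y_{n,k,l}\}$ have the same limiting law whether or not we pin $r = O(1)$ spins (pinning affects only an $O(1/n)$ fraction of the relevant walk count), so by the small-subgraph conditioning method the conditional likelihood ratios $L_{\sigma_{1:r}}$ all converge in distribution to the \emph{same} limit as the unconditional $L_n$ in Theorem~\ref{thm:contiguous_phase}; combined with uniform square-integrability, $L_{\sigma_{1:r}} - L_{\tau_{1:r}} \to 0$ in $L^1(\mathbb{P}_{0,0})$, which is what we want.

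The main obstacle is the third step: making precise the claim that pinning $r$ spins does not change the \emph{joint limiting distribution} of the cycle statistics, and more delicately that it does not change the \emph{limit point} of the likelihood ratio (not merely its distribution). For the cycle statistics this is a perturbation of the moment computations in the proof of Proposition~\ref{prop:cycle_distribution}: pinning some $\sigma_i$ removes the prior averaging over those coordinates, but since a fixed-length cycle touches a fixed pinned vertex only with probability $O(1/n)$, the contributing terms in $\mathbb{E}[\prod Z_{n,k_j,l_j}^{m_j}]$ are unchanged in the limit. For the likelihood ratio itself, one should argue that the small-subgraph conditioning representation expresses $L_{\sigma_{1:r}}$ as the same explicit function of the same cycle statistics up to $o_p(1)$, regardless of the pinned values, so the two conditional likelihood ratios converge to a common random variable. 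An alternative, perhaps cleaner, route to sidestep this is to couple the two conditional measures directly: since the posterior of the global sign flip and of $O(1)$ coordinates is asymptotically uniform under $\mathbb{P}_{\lambda,\mu}$ (a consequence of contiguity and the absence of weak recovery information at the level of $O(1)$ spins), one can construct an explicit coupling of $\mathbb{P}_{\lambda,\mu}(\cdot\mid\sigma_{1:r})$ and $\mathbb{P}_{\lambda,\mu}(\cdot\mid\tau_{1:r})$ through the unconditional measure $\mathbb{P}_{\lambda,\mu}$ and bound the TV distance by $\sum_i \mathbb{P}_{\lambda,\mu}(\hat\sigma_i \neq \sigma_i \text{ is informative})$-type quantities; but the second-moment plus small-subgraph-conditioning argument above is the one that aligns with the techniques already developed in the paper, so I would carry that one out.
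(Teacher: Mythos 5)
Your outline contains the right ingredients, but you assemble them in a more roundabout and harder way than needed, and in doing so you manufacture an obstacle (your third step) that the paper simply never encounters. The paper's proof stops at your second step, and for good reason: the point of computing $\mathbb{E}_{0,0}[\tilde L_{\sigma_{1:r}}\tilde L_{\tau_{1:r}}]$ (and the two square terms) is not merely to get a uniform bound for uniform integrability, but to observe that all three quantities converge to the \emph{same} prefix-independent limit, because pinning $r=O(1)$ coordinates perturbs the overlap $\rho$ and $\langle u,v\rangle$ only by $O(1/n)$. Expanding
\[
\mathbb{E}_{0,0}\big[(\tilde L_{\sigma,n}-\tilde L_{\tau,n})^2\big]
=\mathbb{E}_{0,0}[\tilde L_{\sigma,n}^2]+\mathbb{E}_{0,0}[\tilde L_{\tau,n}^2]-2\,\mathbb{E}_{0,0}[\tilde L_{\sigma,n}\tilde L_{\tau,n}]
\]
then gives $C+C-2C\to 0$ directly, and $L^2\to 0$ implies the TV bound (after the routine truncation step handled as in the proof of Proposition 1 of \cite{banerjee2018lr}). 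You in fact already state the crucial fact (prefix changes are $O(1/n)$ corrections that do not affect the large-deviations integral), but you draw from it only boundedness rather than convergence to a common value; had you pushed one more inch here, you would have finished without the third step at all.

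Instead, you route through uniform integrability plus the claim that $L_{\sigma_{1:r}}$ and $L_{\tau_{1:r}}$ converge in probability to the \emph{same} limit random variable via a small-subgraph conditioning representation of the conditional likelihood ratio. This is not wrong in principle, but it is genuinely more demanding: Theorem~\ref{thm:contiguous_phase} and Proposition~\ref{prop:cycle_distribution} concern the unconditional likelihood ratio and unconditional cycle statistics, and making the analogous statements hold after conditioning on $\sigma_{1:r}$ (and showing the approximation holds in probability, not merely in distribution) would be a separate piece of work. You correctly flag this as "the main obstacle," but it is an obstacle created by the choice of route, not an intrinsic difficulty of the proposition. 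The paper's argument sidesteps it entirely and requires nothing beyond the second-moment computation you already outlined.
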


\begin{proof}
The idea is to bound the total variation with a function of the second moment of the likelihood ratios,
\begin{align*}
    L_{\sigma, n} = \frac{\mathrm{d} \mathbb{P}_{\lambda, \mu}}{\mathrm{d}\mathbb{P}_{0,0}}(\cdot\mid \sigma_{1:r}) \ , \ L_{\tau, n} = \frac{\mathrm{d} \mathbb{P}_{\lambda, \mu}}{\mathrm{d}\mathbb{P}_{0,0}}(\cdot\mid \tau_{1:r}),
\end{align*}
and then noting that because $r$ is fixed, the second moments of these likelihood ratios will converge to a limit independent of $\sigma_{1:r}$ and $\tau_{1:r}$, making the upper bound converge to $0$. However, as we have seen in the other arguments, we need some truncation. Let us define the truncated likelihood ratios
\begin{align*}
    \tilde L_{\sigma, n} &= \frac{\EE[\sigma_{-r}, u]{\PP[\lambda, \mu]{\mathbf{A}, \mathbf{B}\mid \sigma_{1:r}, \sigma_{-r}, u}\mathbf{1}(u \in \mathcal{S})}}{\PP[0,0]{\mathbf{A}, \mathbf{B}}} \\ 
    \tilde L_{\tau, n} &= \frac{\EE[\tau_{-r}, u]{\PP[\lambda, \mu]{ \mathbf{A}, \mathbf{B}\mid \tau_{1:r}, \tau_{-r}, u}\mathbf{1}(u \in \mathcal{S})}}{\PP[0,0]{\mathbf{A},  \mathbf{B}}},
\end{align*}
where $\sigma_{-r}, \tau_{-r}\in \{\pm 1\}^{n-r}$ are the other coordinates, and $\mc S = \{\norm{u}\leq 2\sqrt{p}\}$. Define distributions given by these truncated likelihood ratios:
\begin{align*}
    \Q_{\sigma,n}(\Omega\mid \sigma_{1:r}) &= \frac{1}{\PP[n]{\mc S}} \EE[0,0]{\tilde L_{\sigma,n} \one(\Omega)\mid \sigma_{1:r}}\\
    \Q_{\tau,n}(\Omega\mid \tau_{1:r}) &= \frac{1}{\PP[n]{\mc S}} \EE[0,0]{\tilde L_{\tau,n} \one(\Omega)\mid \tau_{1:r}}.
\end{align*}
From the proof of \cite[Proposition 1]{banerjee2018lr}, we know that $\norm{\PP[\lambda, \mu]{\cdot \mid \sigma_{1:r}}  - \Q_{\sigma, n}(\cdot \mid \sigma_{1:r})}_{\mathrm{TV}}$ and $\norm{\PP[\lambda, \mu]{\cdot \mid \tau_{1:r}}  - \Q_{\tau, n}(\cdot \mid \tau_{1:r})}_{\mathrm{TV}}$ both vanish as $n\to\infty$. Thus, to prove the proposition, it suffices to check that $\norm{\Q_{\sigma, n}(\cdot \mid \sigma_{1:r})  - \Q_{\tau, n}(\cdot \mid \tau_{1:r})}_{\mathrm{TV}}\to 0$. Note that
\begin{align*}
    &\norm{\Q_{\sigma, n}(\cdot \mid \sigma_{1:r})  - \Q_{\tau, n}(\cdot \mid \tau_{1:r})}_{TV}
    = \frac{1}{\PP[n]{\mc S}} \EE[0,0]{ |\tilde L_{\sigma, n} - \tilde L_{\tau,n}|}
    \leq \frac{1}{\PP[n]{\mc S}}\EE[0,0]{(\tilde L_{\sigma, n} - \tilde L_{\tau, n})^2}^{\frac 1 2},
\end{align*}
where we have used the  Cauchy Schwarz inequality. Now, we have, 
\begin{align*}
    &\EE[0,0]{(\tilde L_{\sigma, n} - \tilde L_{\tau, n})^2}\\
    &= \E_{0,0}\bigg[\frac{1}{\PP[0,0]{\mathbf{A}, \mathbf{B}}^2} \E_{\sigma_{-r}, \tau_{-r}, u, v}\Big\{\big( \PP[ \lambda, \mu]{ \mathbf{A},  \mathbf{B} \mid \sigma_{1:r}, \sigma_{-r}, u} \PP[ \lambda, \mu]{\mathbf{A}, \mathbf{B}\mid \sigma_{1:r}, \tau_{-r}, v}\\
    &\ \ \ \ \ \ \ \ + \PP[\lambda, \mu]{ \mathbf{A}, \mathbf{B}\mid \tau_{1:r}, \sigma_{-r}, u} \PP[ \lambda, \mu]{ \mathbf{A}, \mathbf{B}\mid \tau_{1:r}, \tau_{-r}, v}\\
    &\ \ \ \ \ \ \ \ - 2\PP[\lambda, \mu]{ \mathbf{A}, \mathbf{B}\mid \sigma_{1:r}, \sigma_{-r}, u} \PP[\lambda, \mu]{ \mathbf{A}, \mathbf{B}\mid \tau_{1:r}, \tau_{-r}, v}\big)\mathbf{1}(u,v \in \mathcal{S})\Big\}\bigg].
\end{align*}
Thus we just have to prove that the quantity
\begin{align*}
    \EE[0,0]{\frac{\PP[ \lambda, \mu]{ \mathbf{A}, \mathbf{B}\mid \sigma_{1:r}, \sigma_{-r}, u} \PP[\lambda, \mu]{ \mathbf{A}, \mathbf{B}\mid \tau_{1:r}, \tau_{-r}, v}}{\PP[0,0]{ \mathbf{A}, \mathbf{B}}^2}\mathbf{1}(u, v \in \mathcal{S})}
\end{align*}
has a limit which is independent of $\sigma_{1:r}$ and $\tau_{1:r}$. But we know that this is true because of the second moment calculations in Section \ref{sec:detection}, so we are done.
\end{proof}

Then the impossibility of reconstruction follows from some technical calculations. The proof of the next two results follow directly from Proposition~6.2 and Theorem~2.2 of \cite{banerjee2018contiguity} respectively. 
\begin{proposition}
Let $\lambda^2 + \frac{\mu^2}{\gamma}<1$. Let $S\subset[n]$ be such that $|S| = r$, with $r$ finite and fixed, and let $u\in [n]$ be a single index such that $u\not\in S$. Then, as $n\to\infty$, we have that
\begin{align*}
    \mathbb{E}_{\lambda,\mu} [\norm{\mbb P_{ \lambda, \mu}(\sigma_u \mid \mathbf{A}, \mathbf{B}, \sigma_S) - \mbb P_{ 0, 0}(\sigma_u)}_{\mathrm{TV}}\mid \sigma_S] \to 0.
\end{align*}
\end{proposition}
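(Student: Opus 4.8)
The plan is to reduce the statement, through an exact Bayes computation, to the total-variation bound between boundary-conditioned data laws established in the previous proposition. Write $D=(\mathbf{A},\mathbf{B})$. Since the prior on $\sigma$ is uniform and $\mathbb{P}_{0,0}$ makes $D$ independent of $\sigma$, the reference law $\mathbb{P}_{0,0}(\sigma_u)$ is the uniform distribution on $\{\pm1\}$, so the quantity inside the outer expectation equals $|\mathbb{P}_{\lambda,\mu}(\sigma_u=+1\mid D,\sigma_S)-\tfrac12|$. For $s\in\{\pm1\}$ set $\hat L_s=\mathrm{d}\mathbb{P}_{\lambda,\mu}(D\mid\sigma_S,\sigma_u=s)/\mathrm{d}\mathbb{P}_{0,0}(D)$; these densities are well defined and, by the mutual absolute continuity of the relevant laws (the graph lives on a finite discrete space, the covariate laws are non-degenerate Gaussians), are strictly positive $\mathbb{P}_{0,0}$-almost surely. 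Using $\mathbb{P}(\sigma_u=\pm1\mid\sigma_S)=\tfrac12$, Bayes' rule yields $\mathbb{P}_{\lambda,\mu}(\sigma_u=+1\mid D,\sigma_S)=\hat L_+/(\hat L_++\hat L_-)$, and therefore the integrand equals $|\hat L_+-\hat L_-|/\big(2(\hat L_++\hat L_-)\big)$.

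Next I would take the conditional expectation $\mathbb{E}_{\lambda,\mu}[\,\cdot\mid\sigma_S]$. The law of $D$ under $\mathbb{P}_{\lambda,\mu}(\cdot\mid\sigma_S)$ has density $\tfrac12(\hat L_++\hat L_-)$ with respect to $\mathbb{P}_{0,0}$, so the denominator cancels and the conditional expectation equals $\tfrac14\,\mathbb{E}_{0,0}[\,|\hat L_+-\hat L_-|\,]$, which in turn is exactly $\tfrac12\,\|\mathbb{P}_{\lambda,\mu}(\cdot\mid\sigma_S,\sigma_u=+1)-\mathbb{P}_{\lambda,\mu}(\cdot\mid\sigma_S,\sigma_u=-1)\|_{\mathrm{TV}}$. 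By the vertex-exchangeability of the model, the previous proposition applies to the fixed set $S\cup\{u\}$, of cardinality $r+1$, with the two configurations that agree with $\sigma_S$ on $S$ and take opposite signs at $u$; it shows this total variation distance vanishes as $n\to\infty$. Since $\sigma_S$ ranges over a finite set and the bound in that proposition is uniform over configurations, this establishes the claim.

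The only substantive input here is the previous proposition --- the analogue in our setting of \cite[Proposition 6.2]{banerjee2018contiguity} --- which was itself reduced to the second-moment estimates in the proof of Theorem~\ref{thm:it_threshold} in Section~\ref{sec:detection}; once that is in hand, the argument is just the two elementary identities above, the only point demanding some care being the mutual absolute continuity that licenses the Bayes manipulation and the change of measure. Accordingly, I do not expect a genuine obstacle at this step.
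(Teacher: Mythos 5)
Your argument is correct: the Bayes-rule reduction to $\tfrac14\,\mathbb{E}_{0,0}[|\hat L_+ - \hat L_-|] = \tfrac12\|\mathbb{P}_{\lambda,\mu}(\cdot\mid\sigma_S,\sigma_u=+1)-\mathbb{P}_{\lambda,\mu}(\cdot\mid\sigma_S,\sigma_u=-1)\|_{\mathrm{TV}}$ is exactly the mechanism that makes the previous proposition applicable (with $r+1$ in place of $r$, using vertex exchangeability). This is the same route as the paper, which merely cites Proposition~6.2 of \cite{banerjee2018contiguity} for this step; you have simply filled in the reduction the paper outsources to that reference.
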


\begin{theorem}
If $\lambda^2 + \frac{\mu^2}{\gamma}<1$, then reconstruction is impossible, i.e. let the overlap be defined as
\begin{align*}
    ov(\sigma, \tau) = \frac 1 n \sum_{i=1}^n \sigma_i \tau_i - \p{\frac 1 n \sum_{i=1}^n \sigma_i}\p{\frac 1 n \sum_{i=1}^n\tau_i},
\end{align*}
then for any estimator $\hat\sigma( \mathbf{A},  \mathbf{B} )\in \{\pm 1\}^n$, we have that
\begin{align*}
    ov(\sigma, \hat\sigma)\overset{p}{\to}0.
\end{align*}
Because $\frac 1 n \sum_{i=1}^n \sigma_i \overset{p}{\to} 0$, and $\frac 1 n \sum_{i=1}^n \hat\sigma_i$ is bounded, we see that $\frac 1 n \inner{\sigma, \hat\sigma}\overset P \to 0$, so weak recovery is impossible.
\end{theorem}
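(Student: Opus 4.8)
The plan is to derive the statement directly from the two propositions just proved, following the template of \cite{banerjee2018contiguity}. Fix any estimator $\hat\sigma = \hat\sigma(\mathbf{A},\mathbf{B}) \in [-1,1]^n$. By the Cauchy--Schwarz inequality $\mathbb{E}_{\lambda,\mu}[|\tfrac1n\langle\sigma,\hat\sigma\rangle|] \le (\mathbb{E}_{\lambda,\mu}[(\tfrac1n\langle\sigma,\hat\sigma\rangle)^2])^{1/2}$, so it suffices to show the second moment of $\tfrac1n\langle\sigma,\hat\sigma\rangle$ vanishes; this already contradicts the definition of weak recovery. Expanding the square,
\begin{align*}
\mathbb{E}_{\lambda,\mu}\Big[\Big(\tfrac1n\langle\sigma,\hat\sigma\rangle\Big)^2\Big] = \frac1{n^2}\sum_{i,j}\mathbb{E}_{\lambda,\mu}[\sigma_i\sigma_j\hat\sigma_i\hat\sigma_j].
\end{align*}
Since $\hat\sigma$ is $(\mathbf{A},\mathbf{B})$-measurable and $|\hat\sigma_i\hat\sigma_j|\le1$, conditioning on the data gives $|\mathbb{E}_{\lambda,\mu}[\sigma_i\sigma_j\hat\sigma_i\hat\sigma_j]| \le \mathbb{E}_{\lambda,\mu}[|\mathbb{E}_{\lambda,\mu}[\sigma_i\sigma_j\mid\mathbf{A},\mathbf{B}]|]$. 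The $n$ diagonal terms contribute at most $1/n$, so everything reduces to controlling the off-diagonal pairwise correlation.

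By the permutation invariance of the model, $\mathbb{E}_{\lambda,\mu}[|\mathbb{E}_{\lambda,\mu}[\sigma_i\sigma_j\mid\mathbf{A},\mathbf{B}]|]$ does not depend on the choice of distinct $i,j$, so we may take $i=1$, $j=2$. Using the tower property over $\sigma_1$ and that $\mathbb{E}_{0,0}[\sigma_2]=0$, together with the fact that for a $\{\pm1\}$-valued variable the difference of means under two laws is at most twice their total variation distance,
\begin{align*}
\big|\mathbb{E}_{\lambda,\mu}[\sigma_1\sigma_2\mid\mathbf{A},\mathbf{B}]\big|
\le \mathbb{E}_{\lambda,\mu}\big[\big|\mathbb{E}_{\lambda,\mu}[\sigma_2\mid\mathbf{A},\mathbf{B},\sigma_1]\big|\,\big|\,\mathbf{A},\mathbf{B}\big]
\le 2\,\mathbb{E}_{\lambda,\mu}\big[\|\mathbb{P}_{\lambda,\mu}(\sigma_2\mid\mathbf{A},\mathbf{B},\sigma_1) - \mathbb{P}_{0,0}(\sigma_2)\|_{\mathrm{TV}}\,\big|\,\mathbf{A},\mathbf{B}\big].
\end{align*}
Taking expectations and averaging over the two values of $\sigma_1$, the preceding proposition (applied with $S=\{1\}$, $u=2$) gives $\mathbb{E}_{\lambda,\mu}[|\mathbb{E}_{\lambda,\mu}[\sigma_1\sigma_2\mid\mathbf{A},\mathbf{B}]|]\to0$. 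Combining, $\mathbb{E}_{\lambda,\mu}[(\tfrac1n\langle\sigma,\hat\sigma\rangle)^2]\le \tfrac1n + \mathbb{E}_{\lambda,\mu}[|\mathbb{E}_{\lambda,\mu}[\sigma_1\sigma_2\mid\mathbf{A},\mathbf{B}]|]\to0$, hence $\tfrac1n\langle\sigma,\hat\sigma\rangle\to0$ in $L^2$ and in probability, and weak recovery is impossible. For the overlap conclusion, the same bound gives $\tfrac1n\langle\sigma,\hat\sigma\rangle\overset{p}{\to}0$, while $\tfrac1n\sum_i\sigma_i\overset{p}{\to}0$ by the law of large numbers and $\tfrac1n\sum_i\hat\sigma_i\in[-1,1]$ is bounded; hence $ov(\sigma,\hat\sigma)\overset{p}{\to}0$.

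I do not anticipate a genuine obstacle for this final step: all the analytic content lives in the preceding proposition (single-spin non-reconstruction), which in turn rests on the second-moment estimates of Section~\ref{sec:detection}. The only points requiring care are the reduction via conditioning to pairwise correlations, the use of exchangeability to fix the pair $(1,2)$, and the elementary total-variation bound on the difference of conditional means — all routine, and carried out exactly as in \cite{banerjee2018contiguity}.
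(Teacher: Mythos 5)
Your argument is correct and follows exactly the route the paper intends — the paper simply defers this step to Theorem~2.2 of \cite{banerjee2018contiguity}, and your proposal reproduces that derivation: reduce the second moment of $\tfrac1n\langle\sigma,\hat\sigma\rangle$ to the pairwise posterior correlation $\mathbb{E}_{\lambda,\mu}[|\mathbb{E}_{\lambda,\mu}[\sigma_1\sigma_2\mid\mathbf{A},\mathbf{B}]|]$ via conditioning, $|\hat\sigma_i\hat\sigma_j|\le1$, and exchangeability, then bound that correlation by the single-spin total-variation quantity controlled in the preceding proposition. The steps (tower property over $\sigma_1$, the factor-of-two TV bound on the difference of $\{\pm1\}$-means, averaging over the two values of $\sigma_1$) are all sound, and the final passage to $ov(\sigma,\hat\sigma)\overset{p}{\to}0$ via the decomposition and the law of large numbers is likewise correct.
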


\section{Weak Recovery with Self Avoiding Walks}
\label{sec:ubd}

In this section we finish the proof of Theorem~\ref{thm:recovery}, by showing that weak recovery is possible whenever $\lambda^2 + \frac{\mu^2}{\gamma} > 1$. Because weak recovery is possible as soon as either $\lambda^2>1$ \cite{mossel2013proof,massoulie2014community} or $\frac{\mu^2}{\gamma}>1$ \cite{baik2005phase}, we only need to consider the case that $\lambda^2$, $\frac{\mu^2}{\gamma} < 1$. We will construct an estimator $\hat\sigma$ that is computable in quasi-polynomial time. We use a strategy introduced in \cite{hopkins2017bayesian} in the context of community detection in the block model. Under this approach, one seeks to design an appropriate set of ``low-degree" polynomials in the data $(\mathbf{A}, \mb B)$, and recover the signals based on these polynomials. We note that in the specific context of community detection, this approach was already latent in the approach of \cite{massoulie2014community} and \cite{bordenave2015non}, based on self-avoiding/non-backtracking walks.

We seek to calculate a polynomial $P(\mathbf{A},\mathbf{B})$, which estimates $\sigma \sigma^{\mathrm{T}}$. Formally, suppose we had an estimator satisfying 
\begin{align}
\mathbb{E}_{\lambda,\mu}\Big[ \langle P(\mathbf{A}, \mb B) ,  \sigma \sigma^{\mathrm{T}} \rangle \Big] \geq \delta \mathbb{E}_{\lambda,\mu}\Big[ \| P(\mathbf{A}, \mb B ) \|_{F}^2 \Big]^{\frac 1 2} \label{eq:desiderata}
\end{align}
for some universal constant $\delta>0$. Then \cite[Theorem 1]{hopkins2017bayesian} implies that there exists $\delta' = \delta'(\delta)$, and an estimator $\hat{\sigma}$ such that 
\begin{align}
\frac{1}{n^2}\mathbb{E}_{\lambda,\mu}[\langle \sigma, \hat{\sigma} \rangle^2] \geq \delta'. \nonumber 
\end{align}
This ensures weak recovery in our setting. 
%
To construct the estimator $\hat{\sigma}$, from the matrix $P(\mb A, \mb B)$ constructed above, we compute a matrix $\Sigma$ with minimum Frobenius norm that satisfies the following constraints:
\begin{align*}
\diag(\Sigma) &= \one \\
\frac{\inner{P(\mb A, \mb B), \Sigma}}{\norm{P(\mb A, \mb B)}_F \cdot n}  &\geq \delta'\\
\Sigma &\succeq 0
\end{align*}
and then output the vector $\hat \sigma\in \{\pm 1\}^n$ obtained by taking coordinate-wise signs of a centered Gaussian vector with covariance $\Sigma$. 

\begin{lemma}
\label{lemma:weak_recovery}
The estimator $\hat{\sigma}$ achieves weak recovery whenever $\lambda^2 + \frac{\mu^2}{\gamma} >1$. 
\end{lemma}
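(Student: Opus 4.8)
The plan is to exhibit a single polynomial $P(\mathbf{A},\mathbf{B})$ satisfying the desideratum~\eqref{eq:desiderata} and then invoke \cite[Theorem 1]{hopkins2017bayesian}. As observed above, we may assume $\lambda^2<1$ and $\tfrac{\mu^2}{\gamma}<1$, so that $\beta:=\tfrac{\mu^2/\gamma}{\lambda^2+\mu^2/\gamma}\in(0,1)$ lies strictly between $0$ and $1$, i.e. the walk statistic we build genuinely mixes both data sources. Fix a small target constant $\delta>0$; set $l=l_n=\lfloor \beta k_n\rfloor$ with $k_n=\Theta(\log n)$ to be pinned down below, and define
\begin{align*}
P_{ij}=\frac{1}{Z_n}\sum_{\omega}\ \prod_{e\in E_{\omega,A}}\Big(A_e-\tfrac dn\Big)\prod_{e\in E_{\omega,B}}B_e,
\end{align*}
where the sum runs over self-avoiding walks $\omega$ in the factor graph from $i$ to $j$ with $k-l$ (centered) $A$-edges and $l$ $B$-wedges, all wedge-placements summed over, and $Z_n$ is a normalizing constant chosen so that the ``diagonal'' ($\omega_1=\omega_2$) part of $\mathbb{E}_{\lambda,\mu}[\|P\|_F^2]$ equals $n^2(1+o(1))$; this forces $Z_n^2\asymp n^{2l-1}\binom kl d^{k-l}\gamma^{-l}$. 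The ratio $l/k\approx\beta$ is exactly the ``$\lambda^2:\tfrac{\mu^2}{\gamma}$'' balance flagged in the technical contributions, and it is precisely the free summation over the $\binom kl$ wedge-placements that will produce the quantity $(\lambda^2+\tfrac{\mu^2}{\gamma})^k$.

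First I would compute the first moment. Conditioning on $\sigma$ and using $\mathbb{E}_{\lambda,\mu}[A_e-\tfrac dn\mid\sigma]=\tfrac{\lambda\sqrt d}{n}\sigma_{e^-}\sigma_{e^+}$ together with $\mathbb{E}_{\lambda,\mu}[B_{i_1 j}B_{i_2 j}\mid\sigma]=\tfrac\mu n\sigma_{i_1}\sigma_{i_2}$ (the computation already carried out in the proof of Proposition~\ref{prop:cycle_distribution}), the internal $\sigma$'s along each walk telescope and leave a factor $\sigma_i\sigma_j$. Since $k=O(\log n)$, self-avoidance discards only a $1-o(1)$ fraction of walks, so the number of admissible $\omega$ from $i$ to $j$ is $(1+o(1))\,n^{k-1}p^{l}\binom kl$. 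Assembling these pieces yields
\begin{align*}
\mathbb{E}_{\lambda,\mu}\big[\langle P,\sigma\sigma^{\mathrm T}\rangle\big]=(1+o(1))\,n^{3/2}\sqrt{\tbinom kl(\lambda^2)^{k-l}\big(\tfrac{\mu^2}{\gamma}\big)^{l}}.
\end{align*}
Because $\sum_{l}\binom kl(\lambda^2)^{k-l}\big(\tfrac{\mu^2}{\gamma}\big)^{l}=(\lambda^2+\tfrac{\mu^2}{\gamma})^k$ and, by a local-limit estimate, the term at $l=\lfloor\beta k\rfloor$ is a $\Theta(1/\sqrt k)$ fraction of this sum, choosing $k_n$ so that $(\lambda^2+\tfrac{\mu^2}{\gamma})^{k_n}\asymp \delta^2 n\sqrt{k_n}$ --- which is consistent with $k_n=\Theta(\log n)$ exactly because $\lambda^2+\tfrac{\mu^2}{\gamma}>1$ --- gives $\mathbb{E}_{\lambda,\mu}[\langle P,\sigma\sigma^{\mathrm T}\rangle]\geq \delta\, n^2(1+o(1))=\delta\, n\cdot \mathbb{E}_{\lambda,\mu}[\|P\|_F^2]^{1/2}(1+o(1))$ modulo the second-moment bound below.

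The hard part is the second moment: one must show $\mathbb{E}_{\lambda,\mu}[\|P\|_F^2]=\sum_{i\neq j}\mathbb{E}_{\lambda,\mu}[P_{ij}^2]\leq (1+O(\delta^2))\,n^2$. Expanding $P_{ij}^2$ as a double sum over pairs $(\omega_1,\omega_2)$ of self-avoiding walks from $i$ to $j$, I would classify pairs by their overlap data (shared $A$-edges, shared $B$-wedges, and the cycle structure of $\omega_1\triangle\omega_2$) and run a path-counting argument weighing the number of pairs of a given type against the power of $n$ each shared/once-used edge contributes. Three structural facts keep this tractable: (i) with centered $A$-entries, under $H_0$ every $A$-edge or $B$-entry used exactly once has mean zero, so only the diagonal survives and all off-diagonal mass comes from the alternative's $\sigma$-dependent means; (ii) the once-used edges of $\omega_1\triangle\omega_2$ automatically form a disjoint union of cycles (symmetric difference of two $i\to j$ paths), so the sign factor $\mathbb{E}_\sigma[\cdots]$ is $\pm1$ and only the counting matters; (iii) the $u$-dependence inside the $B$-wedge products is isolated via the split $B=X+Z$ with $X_{iw}=\sqrt{\mu/n}\,\sigma_i u_w$, exactly as in the $T_1/T_2/T_3$ decomposition of the proof of Proposition~\ref{prop:cycle_distribution}, the pure-$Z$ piece behaving like the null, the pure-$X$ piece concentrating by the law of large numbers, and the mixed piece being lower order. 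The dominant off-diagonal contribution comes from \emph{edge-disjoint} pairs, whose union is a single $2k$-cycle with $2l$ $B$-wedges; a direct count gives $\asymp n^{-1}\binom kl(\lambda^2)^{k-l}(\tfrac{\mu^2}{\gamma})^{l}\asymp\delta^2$ per entry by the choice of $k_n$, i.e. $O(\delta^2)\,n^2$ in total --- bounded but not negligible, which is why $\delta$ must be taken small. For pairs that do share edges, each shared segment trades an $(\lambda\sqrt d/n)^2$ (resp. $(\mu/n)^2$) factor for a $d/n$ (resp. $O(1)$) factor while killing at least one degree of freedom in the choice of $\omega_2$; an Euler-characteristic/degree-of-freedom bookkeeping then bounds the sum over the number of shared segments and their placements by a geometrically convergent series equal to $o(1)$, the assumptions $\lambda^2,\tfrac{\mu^2}{\gamma}<1$ and the critical scaling of $k_n$ ensuring the associated coefficients stay bounded. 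This overlapping-self-avoiding-walk enumeration in a two-colored factor graph, carried out uniformly over $k_n=\Theta(\log n)$, is the \textbf{main obstacle}.

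Finally I would assemble the pieces: the two moment bounds give $\mathbb{E}_{\lambda,\mu}[\langle P,\sigma\sigma^{\mathrm T}\rangle]\geq \tfrac\delta2\, n\,\mathbb{E}_{\lambda,\mu}[\|P\|_F^2]^{1/2}$ for $\delta$ small and $n$ large, so \eqref{eq:desiderata} holds with constant $\tfrac\delta2$. By \cite[Theorem 1]{hopkins2017bayesian} there is then $\delta'=\delta'(\delta)>0$ and an estimator $\hat\sigma$ with $\tfrac1{n^2}\mathbb{E}_{\lambda,\mu}[\langle\sigma,\hat\sigma\rangle^2]\geq\delta'$, which is weak recovery; the estimator is the one produced by the subsequent semidefinite projection and rounding step described before the lemma. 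Each entry $P_{ij}$ is a sum of $n^{\Theta(\log n)}$ monomials of degree $k+l=\Theta(\log n)$, so $P$, and hence $\hat\sigma$, is computable in quasi-polynomial time $n^{O(\log n)}$. (Should the realized-quantity form of the rounding argument require it, concentration of $\|P\|_F^2$ and $\langle P,\sigma\sigma^{\mathrm T}\rangle$ about their means follows from a fourth-moment estimate proved in the same way as the second-moment bound.)
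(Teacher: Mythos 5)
Your plan recreates the paper's argument in its essential structure: you build the same kind of factor-graph self-avoiding walk statistic with $k-l$ $A$-edges and $l$ $B$-wedges, fix the same ratio $l/k=\frac{\mu^2/\gamma}{\lambda^2+\mu^2/\gamma}$, take $k=\Theta(\log n)$, verify the moment condition of \cite[Theorem~1]{hopkins2017bayesian}, and hand the rest to the Hopkins--Steurer rounding theorem. The only genuine deviation is cosmetic: you normalize $P$ so that the diagonal part of $\mathbb{E}_{\lambda,\mu}[\|P\|_F^2]$ equals $n^2$, while the paper normalizes each $P_{i_1,i_2}$ to be an unbiased estimator of $\sigma_{i_1}\sigma_{i_2}$; since \eqref{eq:desiderata} is scale-invariant, the two normalizations are interchangeable, and your first-moment computation $\mathbb{E}_{\lambda,\mu}[\langle P,\sigma\sigma^{\mathrm T}\rangle]\asymp n^{3/2}\big(\binom kl(\lambda^2)^{k-l}(\mu^2/\gamma)^l\big)^{1/2}$ is consistent with the paper's Lemma~5 after rescaling.

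One imprecision in your second-moment sketch is worth flagging. You describe the shared-edge pairs as controlled by ``a geometrically convergent series equal to $o(1)$.'' In the regime of interest, $\lambda^2<1$ and $\mu^2/\gamma<1$, so each additional shared $A$-edge (resp.\ $B$-edge) \emph{increases} the per-pair contribution by a factor $\approx 1/\lambda^2>1$ (resp.\ $\gtrsim\gamma/\mu^2>1$) after trading a vertex degree of freedom for a squared-edge factor; the series over the overlap size $\tilde a$ is therefore increasing and is dominated by the fully-overlapping, i.e.\ diagonal, term --- not by the small-overlap terms. What actually closes the argument, in the paper's \eqref{eq:uncorrelated} and in your variance-normalized version alike, is that the binomial count $\binom kl$ of wedge placements enters the pair enumeration, so with the chosen $l/k$ the quantity $\binom kl(\lambda^2)^{k-l}(\mu^2/\gamma)^l \asymp (\lambda^2+\mu^2/\gamma)^k/\sqrt k$ grows in $k$ precisely when $\lambda^2+\mu^2/\gamma>1$, and the normalization $Z_n^2$ then swamps the shared-edge boost once $k$ is taken so that $(\lambda^2+\mu^2/\gamma)^k\gtrsim n\sqrt k$. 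You do allude to this when you write ``the critical scaling of $k_n$ ensuring the associated coefficients stay bounded,'' and you are honest that the enumeration is the main obstacle, but the ``geometrically convergent'' phrasing locates the difficulty in the wrong place: it is not convergence of the overlap sum but the competition between the per-overlap gain $(\lambda^2)^{-\tilde a}(\mu^2/\gamma)^{-\tilde b/2}$ and the global $\binom kl$ factor that must be carried out, exactly as in the paper's Section~\ref{sec:ubd}.
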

\begin{proof}
The proof of weak recovery follows immediately from the proof of  \cite[Lemma~3.5]{hopkins2017bayesian}. 
\end{proof}

\begin{remark}
This estimator takes $n^{O(\log n)/\text{poly}(\delta)}$ time to compute, as we use certain Self Avoiding Walks (SAWs) of length $\Theta(\log n)$. The running time can be improved to $n^{\text{poly}(1/\delta)}$ by the idea of color coding, and more discussions on this improvement can be found in \cite[Section~2.5]{hopkins2017bayesian}.
\end{remark}

In the remainder of the section, we will construct an estimator and establish \eqref{eq:desiderata}.

\subsection{Self Avoiding Walks}  

It remains to construct the polynomial $P(\mathbf{A}, \mb B)$. To this end, we will use self avoiding walks on the underlying factor-graph. First, for $i_1, i_2 \in [n]$ and $j \in [p]$, we define 
\begin{align}
\hat{A}_{i_1, i_2} &= \frac{2n}{a-b} \Big(A_{ij} - \frac{a+b}{2n} \Big), \nonumber \\
\hat{B}^{j}_{i_1,i_2} &= \frac{n}{\mu}  B_{i_1,j} B_{i_2,j} = \frac{n}{\mu} \Big( \sqrt{\frac{\mu}{n}} \sigma_{i_1} u_j + Z_{i_1, j}\Big) \Big(\sqrt{\frac{\mu}{n}} \sigma_{i_2} u_j + Z_{i_2,j}  \Big). \label{eq:weights}
\end{align} 
Direct computation yields that 
\begin{align}
\mathbb{E}_{\lambda,\mu}[\hat{A}_{i_1,i_2} | \sigma] = \sigma_{i_1} \sigma_{i_2}, \,\,\,\, \mathrm{Var}_{\lambda,\mu}(A_{i_1,i_2}) = \frac{n}{\lambda^2}, \nonumber \\
\mathbb{E}_{\lambda,\mu}[\hat{B}^j_{i_1,i_2} | \sigma] = \sigma_{i_1} \sigma_{i_2}, \,\,\,\, \mathrm{Var}_{\lambda,\mu}(\hat{B}^j_{i_1,i_2}) = \frac{np}{\mu^2/\gamma}. \nonumber
\end{align}
Recall the factor graph, as shown in Figure~\ref{fig:factor}. For $i_1, i_2 \in [n]$, we will associate the weight $\hat{A}_{i_1, i_2}$ to the $A$ edge $\{i_1, i_2\}$. Similarly, for $j\in [p]$, we associate the weight $\hat B^j_{i_1} = \sqrt{\frac \mu n} B_{i_1, j}$ to the $B$ edge $\{i_1, j\}$, and the weight $\hat{B}^{j}_{i_1,i_2}$ to the $B$ wedge $\{i_1, j, i_2\}$.

Fix $i_1, i_2 \in [n]$, and let $k \geq l \geq 1$ be integers that we will specify later.  Consider a path $\alpha$ on the factor graph that starts at $i_1$ and ends at $i_2$, which contains $k-l$ $A$ type edges, and $l$ $B$ type wedges. We will require that the $A$-type edges on the path are ``self-avoiding", i.e., no edge of type $A$ occurs more than once. Further, if $j_1, \cdots, j_l$ denote the vertices in $V_2$ which lie on the path $\alpha$, we will require that these vertices are distinct. Let $\mathcal{L}(i_1, i_2, k, l)$ denote the set of all such paths $\alpha$, for any given $i_1, i_2 \in [n]$, and $k \geq l$. Given any path $\alpha$, construct a polynomial on entries of $(\mb A, \mb B)$ by 
\begin{align}
p_{\alpha} = \prod_{e \in \alpha} \mathrm{weight}(e), \nonumber 
\end{align} 
where $\mathrm{weight}(i_1, i_2)$ is $\hat A_{i_1, i_2}$ and $\hat B^j_{i_1. i_2}$ when the edge $(i_2, i_2)$ is of type $A$ and $B$ respectively. Direct computation yields that 
\begin{align}
\mathbb{E}_{\lambda,\mu}[p_{\alpha} | \sigma] = \sigma_{i_1} \sigma_{i_2}, \,\,\,\,\, \mathrm{Var}_{\lambda,\mu}(p_{\alpha}) = \Big( \frac{n}{\lambda^2} \Big)^{k-l} \Big( \frac{np}{\mu^2/\gamma} \Big)^l (1+o(1)). \label{eq:paths_prop}
\end{align}
Thus we see that $p_\alpha$ is an unbiased estimator for $\sigma_{i_1}\sigma_{i_2}$, but its large variance renders it useless on its own. Fortunately, there are many paths $\alpha\in \mc L(i_1, i_2, k, l)$, and we might hope that we can reduce the variance by averaging over polynomials from different paths as follows 
\begin{align}
P_{i_1, i_2} (\mathbf{A}, \mb B) = \frac{1}{|\mc L(i_1, i_2, k, l)|}\sum_{\mathcal{L}(i_1, i_2, k, l)} p_{\alpha} . \nonumber
\end{align} 
Note that $P_{i_1, i_2}$ is still an unbiased estimator for $\sigma_{i_1}\sigma_{i_2}$. Finally, we set $P(\mathbf{A}, \mb B) = \{ P_{i_1, i_2}(\mathbf{A}, \mb B) : 1\leq  i_1< i_2 \leq n\}$ to be the estimator for the matrix $\sigma\sigma^\top$. 

\noindent
It remains to check \eqref{eq:desiderata} whenever $\lambda^2 + \frac{\mu^2}{\gamma} >1$. We establish this in the next lemma.

\begin{lemma}
Assume $d>1$ and $\lambda^2 + \frac{\mu^2}{\gamma}>1+ \varepsilon$, for some $\varepsilon>0$. Then there exists universal constants $C>0$ and $c>0$ such that setting $k = C\log n/\varepsilon^c$ and $l := l(k, \lambda, \mu, \gamma) \leq k$ such that 
\begin{align}
\mathbb{E}_{\lambda,\mu}\Big[ \langle P(\mathbf{A}, \mb B ) , \sigma \sigma^{\mathrm{T}} \rangle \Big] \geq \delta \mathbb{E}_{\lambda,\mu}\Big[ \| P(\mathbf{A}, \mb B )  \|_F^2 \Big]^{\frac 1 2}
\end{align}
for some $\delta := \delta(\varepsilon, C, c, \lambda, \mu, \gamma)>0$.
\end{lemma}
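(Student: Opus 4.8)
The plan is to verify \eqref{eq:desiderata} by computing the numerator exactly and controlling the second moment of each entry of $P(\mathbf{A},\mathbf{B})$. Each path estimator is unbiased, $\mathbb{E}_{\lambda,\mu}[p_\alpha\mid\sigma]=\sigma_{i_1}\sigma_{i_2}$ by \eqref{eq:paths_prop} (the conditional mean telescopes along the path), so the numerator is $\mathbb{E}_{\lambda,\mu}[\langle P,\sigma\sigma^{\mathrm{T}}\rangle]=\sum_{i_1<i_2}\mathbb{E}_{\lambda,\mu}[\sigma_{i_1}^2\sigma_{i_2}^2]=\binom{n}{2}$. Hence it suffices to choose $k=C\log n/\varepsilon^{c}$ and $l=l(k,\lambda,\mu,\gamma)\le k$ so that $\sup_{i_1<i_2}\mathbb{E}_{\lambda,\mu}[P_{i_1,i_2}^2]$ is bounded by a polylogarithmic factor (a constant is attainable with more care): then $\mathbb{E}_{\lambda,\mu}[\|P\|_F^2]=O(n^2\,\mathrm{polylog}(n))$, and \eqref{eq:desiderata} holds with a constant $\delta$ for all large $n$.

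To control $\mathbb{E}_{\lambda,\mu}[P_{i_1,i_2}^2]=|\mathcal{L}(i_1,i_2,k,l)|^{-2}\sum_{\alpha,\beta}\mathbb{E}_{\lambda,\mu}[p_\alpha p_\beta]$, I will evaluate each term by conditioning on the latent pair $(\sigma,u)$, given which edge-weights supported on disjoint edges of the factor graph are independent. An $A$-edge or $B$-wedge of $\alpha\cup\beta$ of multiplicity one contributes its conditional mean $\sigma_{e^-}\sigma_{e^+}$ (carrying a bounded power of $u_j$ if its $V_2$-endpoint is revisited), while a shared $A$-edge contributes a conditional second moment of order $n/\lambda^2$ and a shared $B$-wedge one of order $np/(\mu^2/\gamma)$; these are the only sources of powers of $n$. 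After $\mathbb{E}_{\sigma}$, a pair contributes nothing unless the ``contracted'' multiplicity-one graph — obtained by replacing each $B$-wedge by an edge on its two $V_1$-endpoints — has all degrees even, which for two paths from $i_1$ to $i_2$ is automatic; the surviving $\sigma$-factor is then $\pm1$, and the surviving $\mathbb{E}_u$-factors are Gaussian moments of order at most $4$ (each $u_j$ occurs in at most two wedges, by per-path distinctness of $V_2$-vertices). Thus $\mathbb{E}_{\lambda,\mu}[P_{i_1,i_2}^2]$ reduces to a combinatorial sum over pairs $(\alpha,\beta)$ organized by their overlap pattern, in which edge-disjoint pairs reproduce $\mathbb{E}_{\lambda,\mu}[P_{i_1,i_2}]^2=1$ (the corrections from reused $V_2$-vertices summing to $1+o(1)$ since $p=\Theta(n)$).

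The choice of $l$ is dictated by the diagonal ($\alpha=\beta$) term, which equals $(1+o(1))\,n/M$ with $M:=\binom{k}{l}(\lambda^2)^{k-l}(\mu^2/\gamma)^l$; taking $l$ to maximize $M$ and using $\sum_l\binom{k}{l}(\lambda^2)^{k-l}(\mu^2/\gamma)^l=(\lambda^2+\mu^2/\gamma)^k$ gives $M\ge(k+1)^{-1}(1+\varepsilon)^k$, which exceeds $n$ once $k=\Theta(\varepsilon^{-c}\log n)$, so the diagonal is $O(1)$; this is also the proportion $l/k\approx(\mu^2/\gamma)/(\lambda^2+\mu^2/\gamma)$ used for the detection test. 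For the off-diagonal terms, a shared connected piece avoiding both $i_1$ and $i_2$ determines one more vertex of $\beta$ than it supplies in $n$-gain, costing a factor $\Theta(1/n)$, so any overlap pattern containing such a piece is $o(1)$ even after summing over the at most $k^{O(k)}=n^{o(1)}$ patterns. The delicate contributions are the pieces anchored at $i_1$ or $i_2$ (shared prefixes and suffixes), which supply no compensating factor of $n$: a shared prefix of length $m$ gains $n/\lambda^2$ per shared $A$-edge and $np/(\mu^2/\gamma)$ per shared $B$-wedge while determining exactly the matching number of vertices of $\beta$, and summing over the placement of the $b$ wedges inside the prefix collapses — via $\sum_{b}\binom{m}{b}(\lambda^2)^{m-b}(\mu^2/\gamma)^{b}=(\lambda^2+\mu^2/\gamma)^m$ and $M\ge(k+1)^{-1}(\lambda^2+\mu^2/\gamma)^k$ — to a bound of order $k^2(\lambda^2+\mu^2/\gamma)^{-m}$; summing this over $m\ge1$ converges because $\lambda^2+\mu^2/\gamma>1$. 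This is the second, and decisive, place the hypothesis is used; combining the contributions gives $\sup_{i_1<i_2}\mathbb{E}_{\lambda,\mu}[P_{i_1,i_2}^2]=O(\mathrm{polylog}(n))$ (indeed $O(1/\varepsilon)$ with sharper estimates on $M$ and the off-diagonal binomial ratios).

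The main obstacle is this last step: organizing the overlap sum so that every combinatorial or Gaussian-moment constant is paired with its compensating $1/n$ or $1/p$ — a naive separation fails since $k=\Theta(\log n)$ turns an innocuous $O(1)^k$ into a genuine power of $n$ — and, in particular, showing that the prefix/suffix overlaps (which the pure-block-model analyses of \cite{massoulie2014community,hopkins2017bayesian} control via $\lambda^2>1$) are here controlled instead by $\lambda^2+\mu^2/\gamma>1$, once $l/k$ is set to $(\mu^2/\gamma)/(\lambda^2+\mu^2/\gamma)$. The need for extra care when a path revisits a $V_1$-vertex, together with the heterogeneity of the two edge types — sparse Bernoulli versus Gaussian factoring through the shared vector $u$ — is what makes this bookkeeping substantially more involved than in the pure SBM or pure spiked-matrix setting. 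Once it is in place, the lemma, and hence weak recovery via \cite[Theorem~1]{hopkins2017bayesian}, follows.
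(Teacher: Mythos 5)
Your proposal takes essentially the same route as the paper: reduce to the entrywise inequality $\mathbb{E}_{\lambda,\mu}[P_{i_1,i_2}\,\sigma_{i_1}\sigma_{i_2}]\geq\delta\,\mathbb{E}_{\lambda,\mu}[P_{i_1,i_2}^2]^{1/2}$, use unbiasedness to reduce to bounding $|\mathcal{L}|^{-2}\sum_{\alpha,\beta}\mathbb{E}[p_\alpha p_\beta]$, organize this sum by the overlap pattern of the two walks (tracking shared $A$-edges and shared $B$-wedges, which are the only sources of $n$- and $p$-powers), set $l/k=(\mu^2/\gamma)/(\lambda^2+\mu^2/\gamma)$, and exploit $\lambda^2+\mu^2/\gamma>1$ via Stirling/entropy to close the estimate. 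The structural decomposition, the choice of ratio, and the role of the hypothesis all match the paper.

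The one place you genuinely add to the paper's exposition is the diagonal term. You compute that the $\alpha=\beta$ contribution, after normalization, is $(1+o(1))\,n/M$ with $M=\binom{k}{l}(\lambda^2)^{k-l}(\mu^2/\gamma)^l$, and note that boundedness forces $M\gtrsim n$, which is precisely where $k=\Theta(\varepsilon^{-c}\log n)$ is needed. The paper's own correlation bound $\sum_{\alpha,\beta}\mathbb{E}[p_\alpha p_\beta]\leq O(1)n^{2(k-1)}p^{2l}(\lambda^2)^{-(k-l)}(\mu^2/\gamma)^{-l}$ appears to undercount the diagonal by exactly this factor of $n$ (the count of shared internal $V_1$-vertices is off by one when the shared piece reaches $i_2$, and the $\binom{k}{l}$ factors in the path count are dropped), and the paper never explains where $k\sim\log n$ is actually used. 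Your observation fills that gap and is the right way to see why the logarithmic walk length is forced. Your split of the off-diagonal terms into floating pieces (each costing a compensating $1/n$) versus anchored prefix/suffix pieces (controlled by the geometric sum in $(\lambda^2+\mu^2/\gamma)^{-m}$) is also slightly cleaner bookkeeping than the paper's; like the paper, you stop at a sketch of this part, and the identity $\sum_b\binom{m}{b}\binom{k-m}{l-b}^2(\lambda^2)^{-(m-b)}(\mu^2/\gamma)^{-b}/\binom{k}{l}^2 \lesssim k^{O(1)}(\lambda^2+\mu^2/\gamma)^{-m}$ that you implicitly invoke does deserve a line of Stirling to justify, but the idea is sound.
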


To prove this lemma, we will show that an entry-wise version holds: 
\begin{align}
	\mathbb{E}_{\lambda,\mu}\Big[P_{i_1, i_2}(\mb A, \mb B) \cdot \sigma_{i_1}\sigma_{i_2}\Big]  \geq \delta \mathbb{E}_{\lambda,\mu}\Big[P_{i_1, i_2}(\mb A, \mb B)^2\Big]^{\frac{1}{2}}.\label{eq:cor1}
\end{align}
By construction, we have that $\mathbb{E}_{\lambda,\mu}[P_{i_1, i_2}(\mb A, \mb B) \ \mid \ \sigma_{i_1}\sigma_{i_2}] = \sigma_{i_1}\sigma_{i_2}$. As a result, it is easy to check that \eqref{eq:cor1} is implied by the following:
\begin{align}\label{eq:uncorrelated}
\mathbb{E}_{\lambda,\mu}[(\sigma_{i_1}\sigma_{i_2})^2] \cdot \sum_{\alpha, \beta \in \mc L(i_1, i_2, k, l)} \mathbb{E}_{\lambda,\mu}[p_\alpha \cdot p_\beta] \leq \frac{1}{\delta^2} \cdot \sum_{\alpha, \beta\in \mc L(i_1, i_2, k, l)} \mathbb{E}_{\lambda,\mu}[ p_\alpha \cdot (\sigma_{i_1}\sigma_{i_2})] \mathbb{E}_{\lambda,\mu}[ p_\beta \cdot (\sigma_{i_1}\sigma_{i_2})].
\end{align}
Intuitively, this inequality is saying that the correlation between different self avoiding walks is not too large. The right hand side of the inequality is easy to control: note that $\mathbb{E}_{\lambda,\mu}[ p_\alpha \cdot (\sigma_{i_1}\sigma_{i_2})] = \mathbb{E}_{\lambda,\mu}[(\sigma_{i_1}\sigma_{i_2})^2] = 1$, so the right hand side is equal to $\delta |\mc L(i_1, i_2, k, l)|^2$. We see that this is given by
\begin{align}\label{eq:RHS_lower_bdd}
|\mc L(i_1, i_2, k, l)|^2 = (1+o(1))\binom{k}{l}^2n^{2(k-1)}p^{2l}.
\end{align}
Thus we are left to control correlation between $p_\alpha$ and $p_\beta$, given on the left hand side of \eqref{eq:uncorrelated}.

\subsection{Correlation of SAWs: Proof of \eqref{eq:uncorrelated}}

We want an upper bound on the left hand side of \eqref{eq:uncorrelated}, so we need to control the correlation $\EE{p_\alpha p_\beta}$, for paths $\alpha, \beta\in \mc L(i_1, i_2, k, l)$. For this, we have to keep track of the number of intersections. Let $\tilde{a}$ be the number of $\hat{A}$-edge intersections, and $\tilde{b}$ be the number of $\hat{B}$-edge intersections. In particular, we must have $\tilde{a}\leq k-l$, and $\tilde{b} \leq l$.

Note that the contribution to correlation depends only on how many intersections there are, and does not depend on the other edges of $\alpha$ and $\beta$. Computations show that each $\hat{A}$-edge intersection contributes a factor of $O(n/\lambda^2)$, and each $\hat{B}$-edge intersection contributes a factor of $O(n/\mu)$, so the total contribution of such an intersection would simply be
\begin{equation*}
O(1)\p{\frac{n}{\lambda^2}}^{\tilde{a}}\p{\frac{n}{\mu}}^{\tilde{b}}.
\end{equation*}
Now we calculate the number of pairs $\alpha, \beta\in \mc L(i_1, i_2, k, l)$ that intersect on $\tilde a$ $A$ edges and $\tilde b$ $B$ edges. First we will calculate the number of pairs $\alpha$ and $\beta$ that intersect on the smallest number of vertices, given the number of edge intersections. The smallest numbers of vertex intersections are $\tilde{a} + \lfloor\tilde{b}/2\rfloor$ $V_1$ vertices, and $\lceil\tilde{b}/2\rceil$ $V_2$ vertices. This is achieved when both paths, $\alpha$ and $\beta$, begin with $k-l$ consecutive $A$ edges, and end with $l$ $B$ wedges, and intersect on the first $\tilde{a}$ $\hat{A}$-edges, as well as the last $\tilde{b}$ $\hat{B}$-edges. In this case, there are $2(k-1) - (\tilde{a} + \lfloor\tilde{b}/2\rfloor)$ free $V_1$ vertices for the two paths to choose, and $b_1 + b_2 - \lceil\tilde{b}/2\rceil$ free $V_2$ vertices for the paths to choose, which means that the total number of such pairs is

\begin{equation*}
n^{2(k-1) - \tilde{a}-\lfloor\tilde{b}/2\rfloor}p^{2l - \lceil\tilde{b}/2\rfloor}.
\end{equation*}
A similar calculation shows that number of pairs of paths that intersects on the least number of vertices contributes the leading order term. This is because if two paths intersected on $r$ more vertices, then the total number of paths will decrease by a factor of $l^{O(r)}n^{-O(r)}$. Thus the total contribution in the correlation (in \eqref{eq:uncorrelated}) is:

\begin{equation*}
\begin{aligned}
&O(1)n^{2(k-l-1) - \tilde{a}-\lfloor\tilde{b}/2\rfloor}p^{2l - \lceil\tilde{b}/2\rfloor}\p{\frac{n}{\lambda^2}}^{\tilde{a}}\p{\frac{n}{\mu}}^{\tilde{b}} \\
&= O(1) n^{2(k-1)}p^{2l} \p{(\lambda^2)^{-\tilde{a}} \p{\frac{\mu^2}{\gamma}}^{-\tilde{b}/2} \gamma^{\lceil\tilde{b}/2\rceil - \tilde{b}/2}}. 
\end{aligned}
\end{equation*}
Thus, summing over possible number of intersections, we have that:
\begin{align}\label{eq:LHS_upper_bdd}
\begin{aligned}
&\ \ \ \ \sum_{\alpha, \beta \in \mc L(i_1, i_2, k, l)}  \mathbb{E}_{\lambda,\mu}[p_\alpha \cdot p_\beta]\\
 &\leq O(1)n^{2(k-1)}p^{2l} \sum_{\tilde a\leq l-m}\sum_{\tilde b\leq 2m}\p{(\lambda^2)^{-\tilde{a}}\p{\frac{\mu^2}{\gamma}}^{-\tilde{b}/2} \gamma^{\lceil\tilde{b}/2\rceil - \tilde{b}/2}}\\
&\leq O(1) n^{2(k-1)}p^{2l} (\lambda^2)^{-(k-l)}\p{\frac{\mu^2}{\gamma}}^{-l}.
\end{aligned}
\end{align}
Thus, to achieve the bound in \eqref{eq:uncorrelated}, we just need to show that \eqref{eq:RHS_lower_bdd} really is an upper bound of \eqref{eq:LHS_upper_bdd}. That amounts to showing
\begin{align*}
\binom{k}{l}^2 \gtrsim (\lambda^2)^{-(k-l)}\p{\frac{\mu^2}{\gamma}}^{-l}.
\end{align*}
Let us now choose $\frac{l}{k} = \frac{\mu^2/\gamma}{\lambda^2 + \mu^2/\gamma}$, and hence $\frac{k-l}{k} = \frac{\lambda^2}{\lambda^2 + \mu^2/\gamma}$. Note that we are assuming $\lambda^2 + \mu^2/\gamma > 1$, so $\lambda^2 > \frac{k-l}{k}$, and $\mu^2/\gamma > \frac{l}{k}$. As a result, we see that
\begin{align*}
&\ \ \ \ \binom{k}{l}^2 = \exp\p{- 2l\log\frac{l}{k} - 2(k-l)\log\frac{k-l}{k} + o(1)}\\
&\gtrsim \exp\p{- l\log\frac{\mu^2}{\gamma} - (k-l)\log\lambda^2} = (\lambda^2)^{-(k-l)}\p{\frac{\mu^2}{\gamma}}^{-l}
\end{align*}
which is exactly what we wanted to show.

\bibliographystyle{plain}
\bibliography{allbib}

\begin{thebibliography}{10}

\bibitem{abbe2017community}
Emmanuel Abbe.
\newblock Community detection and stochastic block models: recent developments.
\newblock {\em arXiv preprint arXiv:1703.10146}, 2017.

\bibitem{aicher2014learning}
Christopher Aicher, Abigail~Z Jacobs, and Aaron Clauset.
\newblock Learning latent block structure in weighted networks.
\newblock {\em Journal of Complex Networks}, 3(2):221--248, 2014.

\bibitem{alaoui2018detection}
Ahmed~El Alaoui and Michael~I Jordan.
\newblock Detection limits in the high-dimensional spiked rectangular model.
\newblock {\em arXiv preprint arXiv:1802.07309}, 2018.

\bibitem{anderson2006clt}
Greg~W Anderson and Ofer Zeitouni.
\newblock A clt for a band matrix model.
\newblock {\em Probability Theory and Related Fields}, 134(2):283--338, 2006.

\bibitem{baik2005phase}
Jinho Baik, G{\'e}rard Ben~Arous, and Sandrine P{\'e}ch{\'e}.
\newblock Phase transition of the largest eigenvalue for nonnull complex sample
  covariance matrices.
\newblock {\em Annals of Probability}, pages 1643--1697, 2005.

\bibitem{balasubramanyan2011block}
Ramnath Balasubramanyan and William~W Cohen.
\newblock Block-lda: Jointly modeling entity-annotated text and entity-entity
  links.
\newblock In {\em Proceedings of the 2011 SIAM International Conference on Data
  Mining}, pages 450--461. SIAM, 2011.

\bibitem{banerjee2018contiguity}
Debapratim Banerjee.
\newblock Contiguity and non-reconstruction results for planted partition
  models: the dense case.
\newblock {\em Electronic Journal of Probability}, 23, 2018.

\bibitem{banerjee2018lr}
Debapratim Banerjee and Zongming Ma.
\newblock Asymptotic normality and analysis of variance of log-likelihood
  ratios in spiked random matrix models.
\newblock {\em arXiv preprint arXiv:1804.00567}, 2018.

\bibitem{binkiewicz2017covariate}
Norbert Binkiewicz, Joshua~T Vogelstein, and Karl Rohe.
\newblock Covariate-assisted spectral clustering.
\newblock {\em Biometrika}, 104(2):361--377, 2017.

\bibitem{bollobas2007phase}
B{\'e}la Bollob{\'a}s, Svante Janson, and Oliver Riordan.
\newblock The phase transition in inhomogeneous random graphs.
\newblock {\em Random Structures \& Algorithms}, 31(1):3--122, 2007.

\bibitem{bordenave2015non}
Charles Bordenave, Marc Lelarge, and Laurent Massouli{\'e}.
\newblock Non-backtracking spectrum of random graphs: community detection and
  non-regular ramanujan graphs.
\newblock In {\em 2015 IEEE 56th Annual Symposium on Foundations of Computer
  Science}, pages 1347--1357. IEEE, 2015.

\bibitem{bothorel2015clustering}
C{\'e}cile Bothorel, Juan~David Cruz, Matteo Magnani, and Barbora Micenkova.
\newblock Clustering attributed graphs: models, measures and methods.
\newblock {\em Network Science}, 3(3):408--444, 2015.

\bibitem{chang2010hierarchical}
Jonathan Chang and David~M Blei.
\newblock Hierarchical relational models for document networks.
\newblock {\em The Annals of Applied Statistics}, pages 124--150, 2010.

\bibitem{cheng2011clustering}
Hong Cheng, Yang Zhou, and Jeffrey~Xu Yu.
\newblock Clustering large attributed graphs: A balance between structural and
  attribute similarities.
\newblock {\em ACM Transactions on Knowledge Discovery from Data (TKDD)},
  5(2):12, 2011.

\bibitem{dang2012community}
TA~Dang and Emmanuel Viennet.
\newblock Community detection based on structural and attribute similarities.
\newblock In {\em International conference on digital society (icds)}, pages
  7--12, 2012.

\bibitem{deshpande2018contextual}
Yash Deshpande, Subhabrata Sen, Andrea Montanari, and Elchanan Mossel.
\newblock Contextual stochastic block models.
\newblock In {\em Advances in Neural Information Processing Systems}, pages
  8581--8593, 2018.

\bibitem{el2020fundamental}
Ahmed El~Alaoui, Florent Krzakala, Michael Jordan, et~al.
\newblock Fundamental limits of detection in the spiked wigner model.
\newblock {\em Annals of Statistics}, 48(2):863--885, 2020.

\bibitem{gibert2012graph}
Jaume Gibert, Ernest Valveny, and Horst Bunke.
\newblock Graph embedding in vector spaces by node attribute statistics.
\newblock {\em Pattern Recognition}, 45(9):3072--3083, 2012.

\bibitem{gunnemann2013spectral}
Stephan Gunnemann, Ines Farber, Sebastian Raubach, and Thomas Seidl.
\newblock Spectral subspace clustering for graphs with feature vectors.
\newblock In {\em Data Mining (ICDM), 2013 IEEE 13th International Conference
  on}, pages 231--240. IEEE, 2013.

\bibitem{hoang2014joint}
Tuan-Anh Hoang and Ee-Peng Lim.
\newblock On joint modeling of topical communities and personal interest in
  microblogs.
\newblock In {\em International Conference on Social Informatics}, pages 1--16.
  Springer, 2014.

\bibitem{hoff2003random}
Peter~D Hoff.
\newblock {\em Random effects models for network data}.
\newblock na, 2003.

\bibitem{holland1983stochastic}
Paul~W Holland, Kathryn~Blackmond Laskey, and Samuel Leinhardt.
\newblock Stochastic blockmodels: First steps.
\newblock {\em Social networks}, 5(2):109--137, 1983.

\bibitem{hopkins2017bayesian}
Samuel~B Hopkins and David Steurer.
\newblock Efficient bayesian estimation from few samples: community detection
  and related problems.
\newblock In {\em 2017 IEEE 58th Annual Symposium on Foundations of Computer
  Science (FOCS)}, pages 379--390. IEEE, 2017.

\bibitem{janson1995random}
Svante Janson.
\newblock Random regular graphs: Asymptotic distributions and contiguity.
\newblock {\em Combinatorics, Probability and Computing}, 4(4):369--405, 1995.

\bibitem{janson2011random}
Svante Janson, Tomasz Luczak, and Andrzej Rucinski.
\newblock {\em Random graphs}, volume~45.
\newblock John Wiley \& Sons, 2011.

\bibitem{johnstone2020testing}
Iain~M Johnstone, Alexei Onatski, et~al.
\newblock Testing in high-dimensional spiked models.
\newblock {\em Annals of Statistics}, 48(3):1231--1254, 2020.

\bibitem{kanade2016global}
Varun Kanade, Elchanan Mossel, and Tselil Schramm.
\newblock Global and local information in clustering labeled block models.
\newblock {\em IEEE Transactions on Information Theory}, 62(10):5906--5917,
  2016.

\bibitem{kim2012latent}
Myunghwan Kim and Jure Leskovec.
\newblock Latent multi-group membership graph model.
\newblock {\em arXiv preprint arXiv:1205.4546}, 2012.

\bibitem{krzakala2013spectral}
Florent Krzakala, Cristopher Moore, Elchanan Mossel, Joe Neeman, Allan Sly,
  Lenka Zdeborov{\'a}, and Pan Zhang.
\newblock Spectral redemption in clustering sparse networks.
\newblock {\em Proceedings of the National Academy of Sciences},
  110(52):20935--20940, 2013.

\bibitem{lelarge2015reconstruction}
Marc Lelarge, Laurent Massouli{\'e}, and Jiaming Xu.
\newblock Reconstruction in the labelled stochastic block model.
\newblock {\em IEEE Transactions on Network Science and Engineering},
  2(4):152--163, 2015.

\bibitem{leskovec2012learning}
Jure Leskovec and Julian~J Mcauley.
\newblock Learning to discover social circles in ego networks.
\newblock In {\em Advances in neural information processing systems}, pages
  539--547, 2012.

\bibitem{massoulie2014community}
Laurent Massouli{\'e}.
\newblock Community detection thresholds and the weak ramanujan property.
\newblock In {\em Proceedings of the forty-sixth annual ACM symposium on Theory
  of computing}, pages 694--703. ACM, 2014.

\bibitem{mezard2009information}
Marc Mezard and Andrea Montanari.
\newblock {\em Information, physics, and computation}.
\newblock Oxford University Press, 2009.

\bibitem{mossel2013proof}
Elchanan Mossel, Joe Neeman, and Allan Sly.
\newblock A proof of the block model threshold conjecture.
\newblock {\em Combinatorica}, pages 1--44, 2013.

\bibitem{mossel2015reconstruction}
Elchanan Mossel, Joe Neeman, and Allan Sly.
\newblock Reconstruction and estimation in the planted partition model.
\newblock {\em Probability Theory and Related Fields}, 162(3-4):431--461, 2015.

\bibitem{mossel2016local}
Elchanan Mossel and Jiaming Xu.
\newblock Local algorithms for block models with side information.
\newblock In {\em Proceedings of the 2016 ACM Conference on Innovations in
  Theoretical Computer Science}, pages 71--80. ACM, 2016.

\bibitem{neville2003clustering}
Jennifer Neville, Micah Adler, and David Jensen.
\newblock Clustering relational data using attribute and link information.
\newblock In {\em Proceedings of the text mining and link analysis workshop,
  18th international joint conference on artificial intelligence}, pages 9--15.
  San Francisco, CA: Morgan Kaufmann Publishers, 2003.

\bibitem{newman2016structure}
Mark~EJ Newman and Aaron Clauset.
\newblock Structure and inference in annotated networks.
\newblock {\em Nature Communications}, 7:11863, 2016.

\bibitem{onatski2013asymptotic}
Alexei Onatski, Marcelo~J Moreira, Marc Hallin, et~al.
\newblock Asymptotic power of sphericity tests for high-dimensional data.
\newblock {\em The Annals of Statistics}, 41(3):1204--1231, 2013.

\bibitem{onatski2014signal}
Alexei Onatski, Marcelo~J Moreira, Marc Hallin, et~al.
\newblock Signal detection in high dimension: The multispiked case.
\newblock {\em The Annals of Statistics}, 42(1):225--254, 2014.

\bibitem{perry2018optimality}
Amelia Perry, Alexander~S Wein, Afonso~S Bandeira, Ankur Moitra, et~al.
\newblock Optimality and sub-optimality of pca i: Spiked random matrix models.
\newblock {\em The Annals of Statistics}, 46(5):2416--2451, 2018.

\bibitem{robinson1992almost}
Robert~W. Robinson and Nicholas~C. Wormald.
\newblock Almost all cubic graphs are hamiltonian.
\newblock {\em Random Structures \& Algorithms}, 3(2):117--125, 1992.

\bibitem{robinson1994almost}
Robert~W. Robinson and Nicholas~C. Wormald.
\newblock Almost all regular graphs are hamiltonian.
\newblock {\em Random Structures \& Algorithms}, 5(2):363--374, 1994.

\bibitem{silva2012mining}
Arlei Silva, Wagner Meira~Jr, and Mohammed~J Zaki.
\newblock Mining attribute-structure correlated patterns in large attributed
  graphs.
\newblock {\em Proceedings of the VLDB Endowment}, 5(5):466--477, 2012.

\bibitem{smith2016partitioning}
Laura~M Smith, Linhong Zhu, Kristina Lerman, and Allon~G Percus.
\newblock Partitioning networks with node attributes by compressing information
  flow.
\newblock {\em ACM Transactions on Knowledge Discovery from Data (TKDD)},
  11(2):15, 2016.

\bibitem{wick1950evaluation}
Gian-Carlo Wick.
\newblock The evaluation of the collision matrix.
\newblock {\em Physical review}, 80(2):268, 1950.

\bibitem{xu2012model}
Zhiqiang Xu, Yiping Ke, Yi~Wang, Hong Cheng, and James Cheng.
\newblock A model-based approach to attributed graph clustering.
\newblock In {\em Proceedings of the 2012 ACM SIGMOD international conference
  on management of data}, pages 505--516. ACM, 2012.

\bibitem{yan2020covariate}
Bowei Yan and Purnamrita Sarkar.
\newblock Covariate regularized community detection in sparse graphs.
\newblock {\em Journal of the American Statistical Association}, pages 1--12,
  2020.

\bibitem{yang2013community}
Jaewon Yang, Julian McAuley, and Jure Leskovec.
\newblock Community detection in networks with node attributes.
\newblock In {\em Data Mining (ICDM), 2013 IEEE 13th international conference
  on}, pages 1151--1156. IEEE, 2013.

\bibitem{yang2009combining}
Tianbao Yang, Rong Jin, Yun Chi, and Shenghuo Zhu.
\newblock Combining link and content for community detection: a discriminative
  approach.
\newblock In {\em Proceedings of the 15th ACM SIGKDD international conference
  on Knowledge discovery and data mining}, pages 927--936. ACM, 2009.

\bibitem{zanghi2010clustering}
Hugo Zanghi, Stevenn Volant, and Christophe Ambroise.
\newblock Clustering based on random graph model embedding vertex features.
\newblock {\em Pattern Recognition Letters}, 31(9):830--836, 2010.

\bibitem{zhang2016community}
Yuan Zhang, Elizaveta Levina, and Ji~Zhu.
\newblock Community detection in networks with node features.
\newblock {\em Electronic Journal of Statistics}, 10(2):3153--3178, 2016.

\bibitem{zhou2009graph}
Yang Zhou, Hong Cheng, and Jeffrey~Xu Yu.
\newblock Graph clustering based on structural/attribute similarities.
\newblock {\em Proceedings of the VLDB Endowment}, 2(1):718--729, 2009.

\end{thebibliography}

%
%
%
%
%
%
%
%

\end{document}